\newtheorem{theorem}{Theorem}
\newtheorem{assumption}{Assumption}
\newtheorem{prop}{Proposition}
\newtheorem{lemma}{Lemma}
\newtheorem{remark}{Remark}
\newtheorem{example}{Example}
\newcommand{\ZZ}{\mathbb{Z}}
\newcommand{\RR}{\mathbb{R}}
\newcommand{\CC}{\mathbb{C}}
\newcommand{\D}{\mathcal{D}}
\newcommand{\C}{\mathcal{C}}
\newcommand{\state}{\phi}
\newcommand{\statehat}{\hat{\state}}
\newcommand{\im}{\mathbf{i}}
\newcommand*\diff{\mathop{}\!\mathrm{d}}
\newcommand{\abs}[1]{\left|#1\right|}
\newcommand{\wcomplex}{\upomega}
\newcommand{\kcomplex}{\upkappa}
\def\Xint#1{\mathchoice
	{\XXint\displaystyle\textstyle{#1}}%
	{\XXint\textstyle\scriptstyle{#1}}%
	{\XXint\scriptstyle\scriptscriptstyle{#1}}%
	{\XXint\scriptscriptstyle\scriptscriptstyle{#1}}%
	\!\int}
\def\XXint#1#2#3{{\setbox0=\hbox{$#1{#2#3}{\int}$}
		\vcenter{\hbox{$#2#3$}}\kern-.5\wd0}}
\def\dashint{\Xint-}
\begin{document}

\title{A complex spatial frequency approach to optimal control of finite-extent linear evolution systems}% with general boundary conditions}
\author{Zhexian Li, Athanassios S. Fokas, Ketan Savla \thanks{Z. Li is with the Sonny Astani Department of Civil and Environmental Engineering, University of Southern California, Los Angeles, USA. \texttt{zhexianl@usc.edu}. 
}
\thanks{
	A. S. Fokas is with the Department of Applied Mathematics and Theoretical Physics, University of Cambridge, Cambridge, UK, and Mathematics Research Center, Academy of Athens, Greece.
} 
\thanks{
	K. Savla is with the Sonny Astani Department of Civil and Environmental Engineering, Ming Hsieh Department of Electrical and Computer Engineering, and Daniel Epstein Department of Industrial and Systems Engineering, University of Southern California, Los Angeles, USA. K. Savla has a financial interest in Xtelligent, Inc.
}
}

\maketitle

\begin{abstract}
	We consider the linear quadratic regulator (LQR) for one-dimensional linear evolution partial differential equations (PDEs) on a finite interval in space. 
	The control is applied as an additive forcing term to PDEs.
	Existing methods for closed-form optimal control only apply to homogeneous (zero) boundary conditions, often resulting in series representations. 
	In this paper, we consider general smooth boundary conditions.
	We use the unified transform, namely the Fourier transform restricted to the bounded spatial domain, to decouple PDEs into a family of ordinary differential equations (ODEs) parameterized by \emph{complex} spatial frequency variables.
	Then, optimal control in the frequency domain is derived using LQR theory for ODEs.
	The inverse Fourier transform leads to non-causal terms in optimal control corresponding to integrals, over the real line, of future values of unspecified boundary conditions. 
	%in the original domain in terms of integrals on the real line depending on unknown boundary values in the future.
	To eliminate this non-causality, we deform the integrals to well-constructed contours in the complex plane along which the contribution of unknowns vanishes.
	For the reaction-diffusion equation, we show that the integral representation can be reformulated as a series representation, which leads to a state-feedback convolution form for optimal control, with the boundary conditions appearing as an additive term.
	%The dependence of optimal control on the given boundary conditions is separated from the state dependence as an additive term to the convolution.
	In numerical experiments,
	we illustrate the computational advantages of the integral representation in comparison to the series representation and structural properties of the convolution kernel.
	\end{abstract}
	
	\begin{IEEEkeywords}
	Distributed parameter systems, optimal control, unified transform, complex analysis.
	\end{IEEEkeywords}
	
	\section{Introduction}
	 
	Evolution PDEs, where the time derivative is of first order, are widely used to describe dynamical systems in continuous space in control applications; this includes active flow control \cite{liu2024adjoint}, heat transfer \cite{lenhart1993optimal}, wind farms \cite{shapiro2017model}, and chemotaxis systems \cite{fister2003optimal}. 
	The boundary control problem where the boundary conditions are the control inputs has been studied extensively, mainly by leveraging the tool of backstepping; see \cite{smyshlyaev2010adaptive,ascencio2017backstepping}.
	In other types of control, the control action is applied internally over the entire spatial domain.
	In this paper, we will focus on the latter type of control problem.
	Optimal control in this case can be traced back to \cite{lions1971optimal}; following this important work, many numerical approaches have been developed for computing open-loop control, see for example \cite{hinze2008optimization}.
	These numerical approaches do not provide sufficient insight into the structure and design of feedback controllers.
	
	For finite-dimensional systems described by ODEs, it is possible to derive LQR control in closed form that leads to a state feedback controller and can be used for many control design tools \cite{anderson2007optimal}.
	Despite the fact that many control concepts can be naturally extended from finite-dimensional to infinite-dimensional systems \cite{curtain2012introduction}, there is no general recipe to derive closed-form optimal control for PDEs.
	For the special case of unbounded spatial domains, the authors of \cite{bamieh2002distributed} apply the Fourier transform in space to decouple a PDE into infinitely many ODEs parameterized by \emph{real} spatial frequency variables.
	Classical LQR theory for ODEs is used to derive closed-form optimal control in the real frequency domain.
	Then, the inverse Fourier transform gives rise to a convolution form of optimal control in the original spatial domain.
	The convolution form reveals fascinating distributed structures of optimal control. Since \cite{bamieh2002distributed}, there has been a great interest in deriving optimal control using frequency-domain methods, see for example \cite{d2003distributed,motee2008optimal}. 
	Unbounded spatial domains can be seen as approximations of large-scale systems, e.g., in \cite{jovanovic2005ill}. In this approximation, however, the effect of boundary conditions cannot be considered. 
	
	Among all frequency-domain methods for deriving closed-form optimal control of PDEs, only a few studies target finite-extent systems with bounded spatial domains; see, for example, \cite{langbort2005distributed,epperlein2016spatially}. 
	The former study assumes symmetric boundary conditions and applies spatial discretization to PDEs; the latter study considers continuous space, but only with homogeneous boundary conditions.
	Hence, none of them have solved the problem in continuous space with general boundary conditions.
	The main idea in these two studies is to embed finite-extent systems into equivalent spatially invariant systems to which classical Fourier methods apply. 
	The embedding technique is motivated by the method of images; this is a method for solving initial-boundary value problems of linear PDEs with symmetry properties. 
	As a result, the embedding technique in \cite{langbort2005distributed,epperlein2016spatially} is applied to systems that possess certain symmetry properties.
	
	Recently, a unified approach has been developed, also known as the Fokas method, which provides solutions to general linear and a class of nonlinear PDEs with general boundary conditions, see for example \cite{fokas1997unified,fokas2008unified,fokasbook,deconinck2014method}. This method has been applied to boundary control problems in \cite{kalimeris2023numerical}. In particular, the Fokas method provides a more general and simpler alternative to the method of images \cite[Section I.4.2]{fokas2008unified}. Furthermore, it has been integrated into the Mathematica symbolic solver \texttt{DSolveValue} \cite{Mathematica}. Traditionally, different types of linear evolution PDEs with different types of boundary conditions require the use of specialized methods. Taking the heat equation as an example, the sine transform and series are used for Dirichlet boundary value problems, whereas the cosine transform and series are used for Neumann boundary value problems. Unlike these specialized transforms or series, the Fokas method uses only the unified transform, namely the Fourier transform restricted to bounded spatial domains, for all different types of evolution PDEs and boundary conditions. Since the Fourier transform was initially used to derive optimal control for systems without boundaries in \cite{bamieh2002distributed}, it is natural to explore the control of finite-extent systems using the Fokas method. 
	
	The main difference between the classical Fourier transform and the unified transform is that the latter allows frequency variables to take \emph{complex} values.
	In the one-dimensional case, if the imaginary part of the spatial frequency is nonzero, the exponent in the spatial Fourier transform will go to infinity as the space variable goes to infinity.
	Restricting the spatial domain to a finite interval eliminates the possibility of the exponent going to infinity, and thus, the spatial frequency variable is well-defined in the entire complex plane. 
	This extension is particularly useful in deriving optimal control, as we will show that the closed-form expression involves integrals over spatial frequency along contours in the complex plane. 
	  
	The approach introduced in this paper consists of the following steps. 
	First, we transform a general linear evolution PDE into a system of decoupled ODEs parameterized by complex spatial frequency variables following the Fokas method.
	Each ODE is a scalar linear system with exogenous inputs involving both the given boundary conditions and unknown boundary values. 
	For example, when the Dirichlet boundary conditions are given, the Neumann boundary values are unknown, and vice versa.
	Then, we use the Hamilton-Jacobi-Bellman equation to derive closed-form optimal control for the ODEs with frequency variables on the real line.
	Applying the inverse Fourier transform to the optimal control of the ODEs, we obtain an integral representation on the real line for optimal control of the PDE; however, this depends on future values of unknown boundary conditions and thus is non-causal.
	Second, we present a general procedure to eliminate the dependence of unknown boundary values in the integral representation for optimal control.
	This is achieved by deforming the integral from the real line to well-constructed contours in the complex plane along which the contribution of unknown terms vanishes.
	We illustrate the procedure for the reaction-diffusion equation with Dirichlet boundary conditions.
	Third, we deform the complex contour back to the real line and obtain an equivalent series representation for optimal control of the reaction-diffusion equation.
	The series representation results in a state-feedback form of optimal control that is separated into a convolution kernel and an additive term; the convolution kernel specifies the dependence on the state at different locations, and the additive term involves the given boundary conditions.
	The kernel preserves a so-called Toeplitz plus Hankel structure, which was initially discovered in \cite{epperlein2016spatially} for the case of homogeneous boundary conditions.
	The structural properties of the kernel are demonstrated numerically for different PDE coefficients.
	We also conduct numerical experiments to compare the computational efficiency of the integral representation and the series representation.

	The paper is organized as follows: we formulate the LQR problem for linear evolution PDE in \Cref{sec:problem-formulation}.
	Then, we compare a classical approach and our approach to the heat equation with homogeneous boundary conditions as a motivating example in \Cref{sec:heat}. 
	\Cref{sec:integral-representation-real} derives an integral representation on the real line for optimal control. 
	\Cref{sec:integral-representation-complex} presents
	how to deform the integral representation from the real line to contours in the complex plane.
	For the particular case of the reaction-diffusion equation, \Cref{sec:feedback-form} rewrites the integral representation as a series representation, which leads to a state-feedback form of optimal control.
	\Cref{sec:numerical-experiment} reports numerical experiments to illustrate the computational advantages of the integral representation and structural properties of the state-feedback form. 
	\Cref{sec:conclusion} concludes our findings and presents future directions.
	
	We conclude this section by defining key notations to be used throughout the paper. 
	Let $\RR$ denote the real line. Let $\ZZ,\ZZ^+$ denote the set of integers and positive integers, respectively.
	Let $\CC,\CC^+,\CC^-$ denote the complex plane, the upper half-plane excluding $\RR$, and the lower half-plane excluding $\RR$, respectively.
	The imaginary unit is denoted by $\im := \sqrt{-1}$.
	Let $\mathcal{C}^\infty(\Omega,\CC)$ denote the space of infinitely differentiable complex functions on $\Omega$.
	Let $\mathcal{L}^2(\Omega,\RR)$ denote the space of square-integrable real functions in $\Omega$.
	Consider a complex-valued function $f(z)$, we use $\overline{f(z)}$ to denote the complex conjugate of $f(z)$, and $\text{Re}[f(z)]$ and $\text{Im}[f(z)]$ to denote the real part and the imaginary part of $f(z)$, respectively.
	To differentiate real and complex spatial frequency, we use both $\kcomplex$ and $k$ for frequency variables, where $\kcomplex\in\CC$ means that the frequency variable can take complex values, and $k\in\RR$ means that the frequency variable is restricted to real values.
	
	\section{Problem formulation}\label{sec:problem-formulation}
	Consider the linear evolution PDE in the domain $\Omega:=\{0<x<L,0<t<T\}$:
	\begin{equation}\label{eq:evolution-equation-unforced}
		\left(\partial_t + \sum_{j=0}^{n}\alpha_j(-\im \partial_x)^j\right)\state(x,t) = 0, 
	\end{equation}
	where the state $\state(x,t)$ is a complex scalar function, and $\alpha_j\in\CC, j=0,\ldots,n$ are complex coefficients with $n\in\ZZ^+,\alpha_n\neq0$. 
	For convenience, we write \eqref{eq:evolution-equation-unforced} in the form
	\begin{equation*}
		\state_t + w(-\im\partial_x)\state = 0,
	\end{equation*}
	where the differential operator $w(-\im\partial_x)$ is defined by the polynomial
	% admits the one-parameter family of solutions $\exp[\im \kcomplex x- w(\kcomplex)t]$ with parameter $\kcomplex\in\CC$.
	% Substituting this solution into \eqref{eq:evolution-equation-unforced} gives
		\begin{equation*}
			w(\kcomplex)=\alpha_n \kcomplex^n + \alpha_{n-1}\kcomplex^{n-1} + \ldots + \alpha_0,\quad\kcomplex\in\CC.
		\end{equation*}
	Let $\state(x,t)$ satisfy the initial condition $\state(x,0)=\state_0(x)$, where $\state_0(x)\in\mathcal{C}^{\infty}([0,L],\CC)$.
	Let $g_j(t),h_j(t)\in\mathcal{C}^\infty([0,T],\CC),j=0,1,\ldots,n-1,$ denote the boundary values at $x=0$ and $x=L$, respectively, i.e., $g_j(t)=\partial_x^j\state(0,t),h_j(t)=\partial_x^j\state(L,t)$.
	
	Following \cite{fokas2008unified}, we make the following assumption, so that an initial-boundary value problem of \eqref{eq:evolution-equation-unforced} is well-posed in $\mathcal{C}^\infty(\Omega,\CC)$.
	\begin{assumption}\label{assumption:well-posed}
		\begin{enumerate}
			\item The real part of $w(k)$ satisfies $\text{Re}[w(k)]\geq0$ for all $k\in\RR$. \label{assumption:well-posed-1}
		\item	\label{assumption:well-posed-2}
		$N$ boundary values are given at $x=0$ and $n-N$ boundary values are given at $x=L$, where
		\begin{equation*}
			N=\begin{cases}
				\frac{n}{2}, & \text{if } n \text{ is even},\\
				\frac{n+1}{2}, & \text{if } n \text{ is odd, }\text{Im}[\alpha_n] >0, \\
				\frac{n-1}{2}, & \text{if } n \text{ is odd, }\text{Im}[\alpha_n] <0.
			\end{cases}
		\end{equation*}
	\end{enumerate}
	\end{assumption}
	\begin{remark}
		Condition \ref{assumption:well-posed-1} in \Cref{assumption:well-posed} ensures that an initial value problem of \eqref{eq:evolution-equation-unforced} is well-posed for the case of unbounded spatial domain $-\infty<x<\infty$.
		The large $k$ limit of the condition $\text{Re}[w(k)]\geq0$ restricts the real part of the leading coefficient $\alpha_n$.
		If $n$ is even, then $k^n\geq0$ and condition \ref{assumption:well-posed-1} implies that $\text{Re}[\alpha_n]\geq0$.
		If $n$ is odd, then $\text{Re}[\alpha_n k^n]\to -\infty$ as $k\to \infty$ when $\text{Re}[\alpha_n]>0$ or as $k\to \infty$ when $\text{Re}[\alpha_n]<0$. 
		Therefore, condition~\ref{assumption:well-posed-1} requires $\text{Re}[\alpha_n]=0$ for odd $n$.
	
		Condition \ref{assumption:well-posed-2} specifies the number of boundary conditions required at the two endpoints in space.
		When $n=2$, this is consistent with the well-known Dirichlet boundary conditions where $g_0(t)=\state(0,t)$ and $h_0(t)=\state(L,t)$ are given, or Neumann boundary conditions where $g_1(t)=\partial_x\state(0,t)$ and $h_1(t)=\partial_x\state(L,t)$ are given.
	\end{remark}
	\begin{remark}
		In addition to Dirichlet and Neumann boundary conditions, linear combinations of $\partial_x^j\state(0,t)$ or $\partial_x^j\state(L,t)$ as boundary conditions, known as Robin boundary conditions, can also be used.
		Our method can be applied to all these types of boundary conditions.
	\end{remark}

	We consider control input to be a forcing term $u(x,t)$ added on the right-hand side of \eqref{eq:evolution-equation-unforced}, i.e.,
	\begin{equation}\label{eq:evolution-equation}
		\state_t + w(-\im\partial_x)\state = u(x,t).
	\end{equation}
	\Cref{assumption:well-posed} still implies the well-posedness of \eqref{eq:evolution-equation} provided that the forcing term $u(x,t)\in\mathcal{C}^{\infty}(\Omega,\CC)$ \cite{fokas2008unified}.
	\begin{example}
		\label{example:problem-formulation-even-order}
		Throughout the paper, 
		we will consider a class of even-order PDEs of the form,
		\begin{equation}\label{eq:even-order-pde}
			\state_t + (-\im \partial_x)^n\state = u(x,t),\quad w(\kcomplex) = \kcomplex^n,\quad n\text{ is even},
		\end{equation}
		with boundary conditions satisfying \Cref{assumption:well-posed}.
		\eqref{eq:even-order-pde} was analyzed in \cite{arbelaiz2024optimal} to illustrate their main results on optimal estimation.
	\end{example}
	\begin{example}\label{example:problem-formulation}
		Among physically meaningful equations, we will consider the reaction-diffusion equation,
			\begin{equation}\label{eq:heat}
				\state_t = \state_{xx} - c\state + u(x,t),\quad w(\kcomplex) = \kcomplex^2 + c, \quad c\geq0,
			\end{equation}
			with
			given Dirichlet boundary conditions $g_0(t)=\state(0,t),h_0(t)=\state(L,t).$
			Note that \eqref{eq:heat} does not fit into the class of even-order PDEs \eqref{eq:even-order-pde} when $c\neq0$.
			% The state, control, and boundary conditions are all real-valued since $\alpha_0=c,\alpha_1=0,\alpha_2=1$ for \eqref{eq:heat} are all real.
		% 	\item the simplified Korteweg-de Vries (KdV) equation:
		% 	\begin{equation}\label{eq:simplified-KdV}
		% 		\state_t + \state_{xxx} =u(x,t),\quad w(k) = - \im k^3
		% 	\end{equation}
		% given Dirichlet boundary conditions $\state(0,t)=g_0(t),\state(L,t)=h_0(t)$ and Neumann boundary conditions at one side $\state_x(L,t)=h_0(t)$.
	\end{example}
	% \begin{remark}
	% 		The original linearized KdV equation is given by $\state_t + \state_x +\state_{xxx}=u(x,t)$. Here we remove the first order spatial derivative $\state_x$ to reduce unnecessary complexity. The technical result for the simplified KdV equation can be extended to the original equation.
	% \end{remark}
	
	Our goal is to find optimal control $u^*(x,t)$ in the domain $\Omega$ that minimizes the objective functional 
	\begin{equation}\label{eq:obj-half}
		J = \int_0^T \int_{0}^{L}\left[\state^2(x,t)+u^2(x,t)\right]\diff x\ \diff t  + \int_0^L\state^2(x,T)\diff x,
	\end{equation}
	subject to dynamics \eqref{eq:evolution-equation} and corresponding initial and boundary conditions for which \eqref{eq:evolution-equation} is well posed. 
	In addition to finite $T$, we will also consider the infinite-time horizon case $T\to\infty$.
	We make the following assumption on boundary conditions so that the infinite-horizon control problem is well-posed.
	\begin{assumption}\label{assumption:boundary-conditions-infinite}
		There exists a finite time $\bar{t}$ such that $g_j(t),h_j(t)=0,j=0,\ldots,n-1,$ for all $t\geq\bar{t}$.
	\end{assumption}
	\begin{remark}
		Since $\bar{t}$ is arbitrary,
		\Cref{assumption:boundary-conditions-infinite} does not limit the practical choice of boundary conditions. We can augment the given boundary conditions defined in a finite time interval with a smooth function that vanishes after a finite time.
	\end{remark}
	
	A general principle for deriving optimal control is to first decouple the PDE into ODEs parameterized by frequency variables and then solve the resulting optimal control problem of the ODEs. 
	Parseval's identity is typically applied to ensure the equivalence between the objective in the original spatial domain and the objective in the frequency domain.
	For instance, the classical spatial Fourier transform is used in \cite{bamieh2002distributed} to decouple the PDE into ODEs for unbounded spatial domains.
	
	We will follow this principle throughout the paper and show that the unified transform introduced in \cite{fokas2008unified}, namely the spatial Fourier transform restricted to $0<x<L$, produces the appropriate analog to decouple the PDE \eqref{eq:evolution-equation} into ODEs.
	
	\section{Motivating example: the heat equation}
	Before considering the general PDE \eqref{eq:evolution-equation}, we first solve for optimal control of the heat equation, i.e., \eqref{eq:heat} with $c=0$, as a motivating example.
	Existing work focuses on the case of homogeneous boundary conditions \cite{epperlein2016spatially}. 
	We first discuss potential extensions from homogeneous to inhomogeneous boundary conditions using classical methods and the associated challenges.
	Then, we show that the unified transform overcomes these challenges.
	
	\label{sec:heat}
	\subsection{State transformation}
	\label{sec:state-transformation}
	A natural way is to use state transformation so that the transformed state satisfies homogeneous boundary conditions.
	Such a state transformation is a common tool used in boundary control problems; see \cite[Theorem 3.3.3]{curtain2012introduction},\cite{Ayamou2024finite}.
	For the heat equation with Dirichlet boundary conditions and $L=1$, a simple transformation is $\tilde{\state}(x,t)=\state(x,t) - (1-x)g_0(t) - xh_0(t)$, 
	under which it can be easily verified that $\tilde{\state}(0,t)=\tilde{\state}(1,t)=0$. 
	For $\tilde{\state}$, the heat equation is written as
	\begin{equation}\label{eq:heat-transformed}
		\tilde{\state}_t = \tilde{\state}_{xx} + u(x,t) -(1-x)\frac{\diff g_0(t)}{\diff t} - x\frac{\diff h_0(t)}{\diff t}.
	\end{equation}
	Note that $\tilde{\state}^2 = \state^2 - 2\state((1-x)g_0 + xh_0) + ((1-x)g_0 + xh_0)^2$ and the boundary conditions $g_0,h_0$ are not affected by control.
	Let $\Phi(x,t) = 2\tilde{\state}((1-x)g_0 + xh_0)$.
	Substituting $\tilde{\state}$ into the objective \eqref{eq:obj-half}, optimal control for \eqref{eq:obj-half} is equivalent to minimizing the objective 
	\begin{equation}\label{eq:objective-approximate}
		\begin{aligned}
		\tilde{J} = \int_0^T \int_{0}^{L}\left[\tilde{\state}^2(x,t)+u^2(x,t)\right]\diff x\, \diff t  + \int_0^L\tilde{\state}^2(x,T)\diff x
		+ \int_0^T\int_{0}^{L}\Phi(x,t)\diff x\,\diff t 
		+\int_0^L\Phi(x,T)\diff x,
	\end{aligned}
	\end{equation}
	subject to \eqref{eq:heat-transformed}. 
	Since $\Phi$ is a linear function of  $\tilde{\state}$, Parseval's identity does not apply.
	
	\subsection{Orthonormal basis}
	\label{sec:separation-variables}
	In order to ensure that Parseval's identity applies, another approach is to represent state and control using an orthonormal basis for the PDE under consideration.
	Consider the heat equation with Dirichlet boundary conditions and $L=\pi$, we use sine series representations $\state(x,t)=\sum_{m=1}^{\infty} \state_m(t)\sin(m x)$ and $u(x,t)=\sum_{m=1}^{\infty} u_m(t)\sin(m x)$.
	For inhomogeneous boundary conditions, the sine series representation is justified since sine functions form an orthonormal basis for $\mathcal{L}^2((0,L),\RR)$, i.e., defined on the \emph{open} interval $(0,L)$.
	Note that the series coefficients satisfy $\state_m(t)=\int_{0}^{L}\sin(m x)\state(x,t)\diff x$ and similarly for $u_m(t)$. Taking the time derivative of $\state_m(t)$ and substituting in the heat equation $\state_t = \state_{xx}+u$, from integration by parts we obtain
	\begin{equation}\label{eq:ode-separation-m}
		\frac{\diff\state_m}{\diff t} =- m^2\state_m + u_m + m\left[g_0(t) - (-1)^mh_0(t)\right],\ m\in\ZZ^+.
	\end{equation}
	This is a family of ODEs parameterized by $m$ with exogenous inputs involving boundary conditions $g_0(t),h_0(t)$. 
	Since $\sqrt{2/\pi}\sin(m x),m=1,2,\ldots$ is an orthonormal basis for $\mathcal{L}^2((0,\pi),\RR)$, Parseval's identity implies that minimizing \eqref{eq:obj-half} is equivalent to minimizing the infinite sum of the following objective from $m=1$ to $\infty$,
	\begin{equation}\label{eq:obj-separation}
		J_m = \int_0^T\left[\state_m^2(t)+u_m^2(t)\right]\diff t + \state_m^2(T),
	\end{equation}
	subject to \eqref{eq:ode-separation-m} that is decoupled in $m$.
	Optimal control $u_m^*(t)$ can be derived using the Hamilton-Jacobi-Bellman equation, see \Cref{sec:integral-representation-real}.
	 Then, optimal control for the heat equation is constructed as $u^*(x,t)=\sum_{m=1}^{\infty}u^*_m(t)\sin(m x)$, resulting in \emph{series representation}.
	% The key enabler for this approach is the Sturm-Liouville theorem, which only applies to second-order ODEs and homogeneous boundary conditions.
	
	Note that our ability to obtain the ODE in \eqref{eq:ode-separation-m} relies crucially on the analytical simplicity of the orthonormal basis for the PDE under consideration. While sinusoidal functions are well known basis for second-order PDEs, similar bases for higher-order PDEs are not known.  
	For instance, the analytical expression for the basis when $n=6$ is quite complex~\cite{papanicolaou2023orthonormal}, under which the form and analytical tractability of the counterpart of \eqref{eq:ode-separation-m} is unclear.
	
	% \begin{remark}
	% 	The method of separation of variables starts by finding an orthonormal basis of the space for a given PDE.
	% 	PDEs defined by self-adjoint spatial differential operators induce orthonormal bases of the Hilbert space where the PDE is defined \cite[Theorem A.4.20]{curtain2012introduction}.
	% 	However, it is hard to find such orthonormal bases for PDEs \emph{not} resulting in second-order ODEs in space with homogeneous boundary conditions, i.e., beyond the scope of the Sturm-Liouville theorem.
	% \end{remark}
	
	\subsection{The unified transform} 
	The approach described in \Cref{sec:separation-variables} requires finding the correct orthonormal basis for the PDE under consideration.
	The unified transform provides an alternative path without finding different orthonormal bases for different PDEs.
	To give a preview of our results, we revisit the heat equation with homogeneous boundary conditions and obtain closed-form optimal control in an integral representation using the unified transform. 
	
	The \textit{unified transform} in $\Omega$ is the Fourier transform restricted to the domain $0<x<L$ with frequency variable $\kcomplex$ allowed to take values in the complex plane $\CC$:
		 \begin{align}\label{eq:unified-transform}
				  \hat{f}(\kcomplex)&=\int_{0}^{L}e^{-\im \kcomplex x}f(x)\diff x, \quad \kcomplex\in\CC, \, \\ \label{eq:inverse-transform}
				  f(x)&=\int_{-\infty}^{\infty}\hat{f}(k)e^{\im kx}\frac{\diff k}{2\pi}, \quad 0< x<L.
		 \end{align}
	Note that the integral in the inverse transform \eqref{eq:inverse-transform} is along the real line and requires only the frequency variable $k\in\RR$.
	\begin{remark}
		The unified transform can be regarded as the classical direct and inverse Fourier transform applied to the following function,
		\begin{equation*}
			F(x)=\left\{
			\begin{aligned}
				&f(x), &0< x<L, \\
				&0, & \text{otherwise}.
			\end{aligned}
			\right.
		\end{equation*} 
		The frequency variable $k$ for the classical Fourier transform defined on the real line can only take real values, i.e., the imaginary part of $k$ is zero.
		The reason is the following.
		Consider a complex frequency $\kcomplex =\kcomplex_R + \im \kcomplex_I$, the exponential function $e^{-\im \kcomplex x}=e^{\kcomplex_I x}e^{-\im \kcomplex_R x}$ grows exponentially as $x\to\infty$ if $\kcomplex_I> 0$ or as $x\to-\infty$ if $\kcomplex_I<0$. 
	\end{remark}
	
	We will present two key steps in solving optimal control for the heat equation with homogeneous Dirichlet boundary conditions. 
	The following result for the heat equation is a special case of the results in \Cref{sec:integral-representation-real,sec:integral-representation-complex}.
	\subsubsection{Integral representation on the real line}
	Applying the unified transform to the heat equation and using integration by parts, the transformed state $\statehat(\kcomplex,t)$ can be shown to satisfy the ODE 
	\begin{equation}\label{eq:ode-heat-frequency}
			\hat{\state}_t(\kcomplex,t) = -\kcomplex^2\hat{\state}(\kcomplex,t) + \hat{u}(\kcomplex,t) + v(\kcomplex,t),\ \kcomplex \in \CC,
	\end{equation}
	with initial condition $\hat{\state}(\kcomplex,0)=\hat{\state}_0(\kcomplex)$, where $v(\kcomplex,t)=-g_1(t) - e^{-\im \kcomplex L}h_1(t)$ contains the unknown Neumann boundary values $g_1$ and $h_1$.
	
	Parseval's identity implies that minimizing \eqref{eq:obj-half} is equivalent to minimizing the integral of the following objective over $k$ from $-\infty$ to $\infty$:
	\begin{equation}\label{eq:obj-heat-frequency}
		J_k =\int_0^T\left[\hat{\state}^2(k,t)+\hat{u}^2(k,t)\right] \diff t\ + \hat{\state}^2(k,T).
	\end{equation}
	Since \eqref{eq:ode-heat-frequency} is decoupled in $\kcomplex$, we can solve for optimal control $\hat{u}^*(k,t)$ of \eqref{eq:ode-heat-frequency} with objective \eqref{eq:obj-heat-frequency} at each $k\in\RR$. Then,
	from $\hat{u}^*(k,t)$, we obtain optimal control $u^*(x,t)$ of the heat equation with objective \eqref{eq:obj-half} using the inverse transform \eqref{eq:inverse-transform}.
	In the infinite-horizon case $T\to\infty$, optimal control $\hat{u}^*(k,t)$ is given by
	\begin{equation*}
		\hat{u}^*(k,t) = -\hat{p}(k)\statehat(k,t) - \hat{p}(k)\int_{t}^{\infty} e^{\wcomplex(k)(t-\tau)}v(k,\tau)\diff \tau,
	\end{equation*}
	where $\wcomplex(k)=\sqrt{k^4+1}, \hat{p}(k)=-k^2 + \wcomplex(k)$.
	Substituting $\hat{u}^*(k,t)$ into \eqref{eq:ode-heat-frequency}, we obtain
	\begin{equation}\label{eq:state-heat-frequency}
			\statehat_t^* = -\wcomplex(k)\statehat^* + v(k,t) - \hat{p}(k)\int_{t}^{\infty} e^{\wcomplex(k)(t-\tau)}v(k,\tau)\diff \tau.
	\end{equation}
	Following \eqref{eq:ode-heat-frequency}, we have
	\begin{equation*}
		\hat{u}^*(k,t) = \statehat_t^* (k,t) +k^2\statehat^*(k,t) - v(k,t).
	\end{equation*}
	Applying the inverse transform \eqref{eq:inverse-transform} to the above equation, we obtain
	\begin{equation}\label{eq:optimal-control-heat}
		u^*(x,t) = \state_t^*(x,t) + \int_{-\infty}^{\infty} e^{\im k x} \left[k^2\statehat^*(k,t)-v(k,t)\right] \frac{\diff k}{2\pi},
	\end{equation}
	where both $\statehat(k,t)$ and $\state_t(x,t)$ can be evaluated by solving the ODE \eqref{eq:state-heat-frequency} and applying the inverse transform \eqref{eq:inverse-transform}.
	\subsubsection{Integral representation in the complex plane}
	The only remaining difficulty is that the function $v$ appearing in both \eqref{eq:state-heat-frequency} and \eqref{eq:optimal-control-heat} is unknown.
	To eliminate the unknown $v$, we employ tools from complex analysis to deform the integral in \eqref{eq:optimal-control-heat} from the real line to well-constructed contours in the complex plane. 
	Then, using the transformation $\kcomplex\to-\kcomplex$ in \eqref{eq:state-heat-frequency} and solving the resulting ODEs, we obtain two equations for two unknowns involving $g_1$ and $h_1$. Solving the two equations, these two unknowns are represented in terms of the given initial condition and a remaining unknown term involving $\statehat^*(\kcomplex,t)$ and $\statehat^*(-\kcomplex,t)$. The contribution of this remaining term vanishes along the constructed contours.
	Therefore, the resulting \emph{integral representation} of $u^*(x,t)$ only depends on the given initial condition and is given by
	\begin{equation}\label{eq:optimal-control-heat-integral}
		\begin{split}
			u^*(x,t) = -\int_{-\infty}^{\infty} e^{\im \kcomplex x-\wcomplex(k)t}\hat{p}(\kcomplex)\statehat_0(\kcomplex)\frac{\diff \kcomplex}{2\pi} + \int_{\partial\D^+} \frac{e^{-\wcomplex(\kcomplex)t}\hat{p}(\kcomplex)}{e^{\im \kcomplex L} - e^{- \im \kcomplex L}} \big[2\sin(\kcomplex x)\im e^{\im\kcomplex L}\statehat_0(\kcomplex) + 2\sin(\kcomplex (L-x))\im\statehat_0(-\kcomplex) \big] \frac{\diff\kcomplex}{2\pi},
		\end{split}
	\end{equation}
	where $\partial \D^+=\{\kcomplex\in\CC^+:\kcomplex=\abs{\kcomplex}e^{\im \theta}, \theta = \pi/8 \text{ or }7\pi/8\}$ serves as the desired contour.
	
	In \Cref{sec:feedback-form}, it will be shown that the integral representation for $u^*(x,t)$ in \eqref{eq:optimal-control-heat-integral} is equivalent to the series representation obtained from the approach discussed in \Cref{sec:separation-variables}.
	We return to the comparison between these two representations in \Cref{sec:feedback-form}.
	
	In \Cref{sec:integral-representation-real,sec:integral-representation-complex}, we generalize the two steps of deriving integral representations \eqref{eq:optimal-control-heat} and \eqref{eq:optimal-control-heat-integral}, i.e., on the real line and the complex plane, respectively, to PDE \eqref{eq:evolution-equation} with nonzero boundary conditions.
	
\section{Integral representation on the real line}\label{sec:integral-representation-real}

To simplify the notation, we omit the argument of functions when it is clear from the context.
We first consider the case when $T$ is finite and derive optimal control in the frequency domain.
The PDE \eqref{eq:evolution-equation} can be rewritten in the form of a family of PDEs parameterized by $\kcomplex\in\CC$:
\begin{equation}\label{eq:PDE-one-parameter}
	\begin{aligned}
		\left( e^{-\im \kcomplex x + w(\kcomplex)t} \state \right)_t - \left(e^{-\im \kcomplex x + w(\kcomplex)t} X \right)_x = e^{-\im \kcomplex x + w(\kcomplex)t} u, \quad \kcomplex\in\CC,
	\end{aligned}
\end{equation}
where 
\begin{equation}\label{eq:X-function}
	\begin{aligned}
		X(x,t,\kcomplex) =\im\left.\left(\dfrac{w(\kcomplex)-w(l)}{\kcomplex-l}\right)\right|_{l=-\im\partial_x}  \state(x,t).
	\end{aligned}
\end{equation}
To see the equivalence between \eqref{eq:PDE-one-parameter} and \eqref{eq:evolution-equation}, replacing $\state_t$ in \eqref{eq:PDE-one-parameter} with $-w(-\im\partial_x)\state + u$, we find 
$$
(w(\kcomplex) - w(-\im\partial_x))\state - (\partial_x - \im \kcomplex)X=0, 
$$
which yields \eqref{eq:X-function}.
Similarly, substituting \eqref{eq:X-function} into \eqref{eq:PDE-one-parameter} gives the original PDE \eqref{eq:evolution-equation}. 

Expanding \eqref{eq:X-function}, we can rewrite $X$ in the form
\begin{equation}\label{eq:X-function-expansion}
	X(x,t,\kcomplex) = \sum_{j=0}^{n-1}c_j(\kcomplex)\partial_x^j \state(x,t),
\end{equation}
where $\{c_j(\kcomplex)\}_{j=0}^{n-1}$ are known polynomials in $\kcomplex$ that can be explicitly computed, see for example the following.
\begin{example}\label{example:X-function-even-order}
	For the even-order PDE \eqref{eq:even-order-pde}, we have
	\begin{align*}
		X(x,t,\kcomplex) =\im\left.\dfrac{\kcomplex^n-l^n}{k-l}\right|_{l=-\im\partial_x}  \state(x,t)=\sum_{j=0}^{n-1}\im^{3j+1} \kcomplex^{n-1-j}\partial_x^j \state(x,t),
	\end{align*}
\end{example}
and $c_j(\kcomplex) = \im^{3j+1} \kcomplex^{n-1-j}$.
\begin{example}
\label{example-X-function}	
	For the reaction-diffusion equation \eqref{eq:heat}, we have
	\begin{align*}
		X(x,t,\kcomplex) =\im\left.\dfrac{\kcomplex^2-l^2}{k-l}\right|_{l=-\im\partial_x}  \state(x,t)=\state_x(x,t) + \im \kcomplex\state(x,t),
	\end{align*}
and $c_0(\kcomplex) = \im \kcomplex, c_1(\kcomplex) = 1$.
% 	For the simplified KdV equation \eqref{eq:simplified-KdV}, we have
% 		\begin{align*}
% 		X(x,t,k) &=\im(-\im)\left.\dfrac{k^3-l^3}{k-l}\right|_{l=-\im\partial_x}  \state(x,t)\\&=-\state_{xx}(x,t) - \im k\state(x,t) + k^2 \state(x,t),
% 	\end{align*}
% and $c_0(k) = k^2, c_1(k)=-\im k, c_2(k)=-1$.
\end{example}

Applying Green's theorem to the left-hand side of \eqref{eq:PDE-one-parameter} in the domain $[0,L]\times[0,t]$ yields
\begin{equation} \label{eq:green-theorem-applied}
	\begin{aligned}
		&\int_{0}^{L}\int_{0}^{t} \left( e^{-\im \kcomplex x + w(\kcomplex)\tau} \state \right)_\tau - \left(e^{-\im \kcomplex x + w(\kcomplex)\tau} X \right)_x \diff \tau\ \diff x\\
		=&\int_{0}^{L} e^{-\im \kcomplex x} \left[ e^{w(\kcomplex)t}\state(x,t)-\state_0(x)\right] \diff x  - \int_{0}^{t} e^{w(\kcomplex)\tau}\left[e^{-\im \kcomplex L}X(L,\tau,\kcomplex) - X(0,\tau,\kcomplex)\right]\diff \tau \\
		= &\int_{0}^{L}\int_{0}^{t} e^{-\im \kcomplex x + w(\kcomplex)\tau}u(x,\tau)\diff \tau\ \diff x.   
	\end{aligned}
\end{equation}
Let $\statehat, \statehat_0, \hat{u}$ denote the unified transform \eqref{eq:unified-transform} of $\state, \state_0, u$ respectively.
For all $j=0,\ldots,n-1$, let $\tilde{g}_j,\tilde{h}_j$ be the $t$-transform of $g_j,h_j$ respectively, defined as 
\begin{equation*}
	\tilde{g}_j(\kcomplex,t) := \int_{0}^{t}e^{\kcomplex\tau}g_j(\tau)\diff \tau,\quad \tilde{h}_j(\kcomplex,t) := \int_{0}^{t}e^{\kcomplex\tau}h_j(\tau)\diff \tau.
\end{equation*}  
Then, equation \eqref{eq:green-theorem-applied} can be rewritten as
\begin{equation}\label{eq:global-relation-u}
	\begin{aligned}
		e^{w(\kcomplex)t}\statehat(\kcomplex,t) &= \statehat_0(\kcomplex) + \int_{0}^{t}e^{w(\kcomplex)\tau}\hat{u}(\kcomplex,\tau)\diff \tau - \sum_{j=0}^{n-1}c_j(\kcomplex)\big[\tilde{g}_j(w(\kcomplex),t) - e^{-\im \kcomplex L}\tilde{h}_j(w(\kcomplex),t)\big].
	\end{aligned}
\end{equation}
We will call \eqref{eq:global-relation-u} the \emph{global relation} since it relates the transformed state $\statehat$ to initial and boundary conditions and control $\hat{u}$ in the frequency domain. It turns out that \eqref{eq:global-relation-u} is the solution to the following ODE,
\begin{equation}\label{eq:global-relation-u-ode}
	\begin{aligned}
		\statehat_t &= -w(\kcomplex)\statehat + \hat{u} + v(\kcomplex,t),
	\end{aligned}
\end{equation}
where 
\begin{equation}\label{eq:boundary-value-v}
	v(\kcomplex,t):=-\sum_{j=0}^{n-1}c_j(\kcomplex)\left[g_j(t) - e^{-\im \kcomplex L}h_j(t)\right].
\end{equation}

Parseval's identity implies that minimizing \eqref{eq:obj-half} is the same as finding $\hat{u}$ that minimizes
\begin{equation}\label{eq:obj-transform-half}
		\hat{J}=\int_{-\infty}^{\infty}\Big[\int_0^T\left( \statehat(k,t)\overline{\statehat(k,t)}+\hat{u}(k,t)\overline{\hat{u}(k,t)}\right)\diff t + \statehat(k,T)\overline{\statehat(k,T)}\Big]\diff k,
\end{equation}
subject to \eqref{eq:global-relation-u-ode} and $\statehat(k,0)=\statehat_0(k).$ Since \eqref{eq:global-relation-u-ode} is an ODE in the time variable decoupled in $k$, at a fixed $k\in\RR$ the optimal control problem is a finite-dimensional LQR problem in $\CC$ with exogenous inputs from boundary conditions.
Let $\boldsymbol{\statehat}:= [\statehat_{\text{Re}}\quad \statehat_{\text{Im}}]^\top,\boldsymbol{\hat{u}}:=[\hat{u}_{\text{Re}}\quad\hat{u}_{\text{Im}}]^\top,\boldsymbol{v}:=[v_{\text{Re}}\quad v_{\text{Im}} \quad 1]^\top$. Let $w(k)\equiv w_{\text{Re}}(k) +\im w_{\text{Im}}(k)$. The subscripts Re and Im represent the real and imaginary parts of the corresponding functions, respectively.
\begin{example}
	\label{ex:w-real-imaginary}
	For even-order PDE \eqref{eq:even-order-pde}, we have $w_{\text{Re}}(k) = k^n$ and $w_{\text{Im}}(k) = 0$.
	 For the reaction-diffusion equation \eqref{eq:heat}, we have $w_{\text{Re}}(k) = k^2+c$ and $w_{\text{Im}}(k) = 0$.
\end{example}

The following theorem provides the solution to the LQR problem.
\begin{theorem}\label{theo}
	The optimal control for \eqref{eq:obj-transform-half}  subject to \eqref{eq:global-relation-u-ode} is given by
	\begin{equation}\label{eq:u-opt-theo}
		\boldsymbol{\hat{u}}^*(k,t) = - \mathbf{P}(k,t)\boldsymbol{\statehat}(k,t) - \mathbf{R}(k,t)\boldsymbol{v}(k,t),
	\end{equation}
where for all $k\in\RR$, $\mathbf{P}(k,t)\in\RR^{2\times2}$ and $\mathbf{R}(k,t)\in\RR^{2\times3}$ are solutions to the following backward ODEs:
	\begin{align}
		-\mathbf{P}_t &= \mathbf{I - PBB^\top P + PA + A^\top P}, \quad\mathbf{P}(k,T) = \mathbf{I}, \label{eq:opt-law-p}\\
		% -\mathbf{Q}_t &=2\mathbf{Q}\,\mathbf{M}(k,t)+\mathbf{r}\,\mathbf{d}^\top - \dfrac{1}{4}\mathbf{r}\,\mathbf{r}^\top, \quad \mathbf{Q}(k,T) = \mathbf{0}_{2\times2} ,\\
		-\mathbf{R}_t&=\mathbf{PC + RD + A^\top R - P R}, \quad \mathbf{R}(k,T) = \mathbf{0}, \label{eq:opt-law-r}
	\end{align} 
	with
	\begin{align*}
		&\mathbf{A}(k) = \begin{bmatrix}
			-w_{\text{Re}}(k) & w_{\text{Im}}(k) \\ -w_{\text{Im}}(k) & -w_{\text{Re}}(k)
		\end{bmatrix},\quad \mathbf{B} = \begin{bmatrix}
			1 & 0 \\ 0 & 1
		\end{bmatrix},\ \mathbf{C} = \begin{bmatrix}
			1 & 0 & 0 \\ 0 & 1 &0 
		\end{bmatrix},\quad
		\mathbf{D}(k,t) = \begin{bmatrix}
			0 & 0 & \pdv{}{t}v_{\text{Re}}(k,t) \\ 0 & 0 & \pdv{}{t}v_{\text{Im}}(k,t) \\ 0 & 0 & 0
		\end{bmatrix}.
	\end{align*}
\end{theorem}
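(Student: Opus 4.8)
The plan is to reduce the minimization of \eqref{eq:obj-transform-half} to a pointwise (in $k$) finite-horizon LQR problem over $\RR^2$ and then solve that problem by augmenting the state with the exogenous term $\boldsymbol{v}$. Because the integrand of \eqref{eq:obj-transform-half} is decoupled in $k$ and the constraint \eqref{eq:global-relation-u-ode} evolves independently at each frequency, it suffices to minimize, for each fixed $k\in\RR$, the cost $\int_0^T(\boldsymbol{\statehat}^\top\boldsymbol{\statehat} + \boldsymbol{\hat{u}}^\top\boldsymbol{\hat{u}})\diff t + \boldsymbol{\statehat}(k,T)^\top\boldsymbol{\statehat}(k,T)$, using $\abs{\statehat}^2=\boldsymbol{\statehat}^\top\boldsymbol{\statehat}$ and $\abs{\hat{u}}^2 = \boldsymbol{\hat{u}}^\top\boldsymbol{\hat{u}}$. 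Splitting \eqref{eq:global-relation-u-ode} into real and imaginary parts with $w(k)=w_{\text{Re}}(k)+\im w_{\text{Im}}(k)$ gives the real dynamics $\boldsymbol{\statehat}_t = \mathbf{A}\boldsymbol{\statehat} + \mathbf{B}\boldsymbol{\hat{u}} + \mathbf{C}\boldsymbol{v}$, where $\mathbf{A},\mathbf{B},\mathbf{C}$ are exactly the matrices in the statement; the off-diagonal $\pm w_{\text{Im}}$ entries of $\mathbf{A}$ arise from the real/imaginary splitting of $-w(k)\statehat$.

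The key observation is that the driving term can be absorbed into an enlarged, unforced state. Since $\boldsymbol{v}=[v_{\text{Re}}\ v_{\text{Im}}\ 1]^\top$ has a constant third component, the matrix $\mathbf{D}$ in the statement is chosen precisely so that $\boldsymbol{v}_t=\mathbf{D}\boldsymbol{v}$, its action reproducing $\partial_t v_{\text{Re}}$ and $\partial_t v_{\text{Im}}$. Hence, defining the augmented state $\boldsymbol{z}:=[\boldsymbol{\statehat}^\top\ \boldsymbol{v}^\top]^\top$, the dynamics become the autonomous linear system
\[\boldsymbol{z}_t = \begin{bmatrix}\mathbf{A} & \mathbf{C} \\ \mathbf{0} & \mathbf{D}\end{bmatrix}\boldsymbol{z} + \begin{bmatrix}\mathbf{B} \\ \mathbf{0}\end{bmatrix}\boldsymbol{\hat{u}},\]
with running cost $\boldsymbol{z}^\top\,\mathrm{diag}(\mathbf{I},\mathbf{0})\,\boldsymbol{z} + \boldsymbol{\hat{u}}^\top\boldsymbol{\hat{u}}$ and terminal cost $\boldsymbol{z}(T)^\top\,\mathrm{diag}(\mathbf{I},\mathbf{0})\,\boldsymbol{z}(T)$. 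This is now a standard finite-horizon LQR problem with no exogenous input, since the uncontrollable block $\boldsymbol{v}$ evolves identically for every admissible control and therefore does not alter the minimizer.

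Applying standard finite-horizon LQR theory (equivalently, the HJB equation with quadratic value function $V=\boldsymbol{z}^\top\tilde{\mathbf{P}}\boldsymbol{z}$) to this augmented system, the optimal control is $\boldsymbol{\hat{u}}^* = -[\mathbf{B}^\top\ \mathbf{0}]\tilde{\mathbf{P}}\boldsymbol{z}$, where $\tilde{\mathbf{P}}$ solves the augmented Riccati equation with terminal value $\mathrm{diag}(\mathbf{I},\mathbf{0})$. Writing $\tilde{\mathbf{P}}$ in block form with $(1,1)$ block $\mathbf{P}$ and $(1,2)$ block $\mathbf{R}$, and using $\mathbf{B}=\mathbf{I}$, the law collapses to $\boldsymbol{\hat{u}}^*=-\mathbf{P}\boldsymbol{\statehat}-\mathbf{R}\boldsymbol{v}$, matching \eqref{eq:u-opt-theo}. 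Expanding the augmented Riccati block by block, its $(1,1)$ block is self-contained because the augmented generator is block upper triangular, and it yields \eqref{eq:opt-law-p}; its $(1,2)$ block yields \eqref{eq:opt-law-r}, where the term $\mathbf{R}\mathbf{D}$ is exactly the contribution of the time dependence of $\boldsymbol{v}$. The terminal conditions $\mathbf{P}(k,T)=\mathbf{I}$ and $\mathbf{R}(k,T)=\mathbf{0}$ are read off from the blocks of $\mathrm{diag}(\mathbf{I},\mathbf{0})$.

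Since the reduction and the block expansion are routine once the augmentation is in place, the only points requiring care are, first, verifying that the complex-to-real vectorization exactly preserves the cost and produces the stated $\mathbf{A}$, and second, well-posedness of the ODEs: because the state weight $\mathrm{diag}(\mathbf{I},\mathbf{0})\succeq 0$ and the control weight $\mathbf{I}\succ 0$, the finite-horizon Riccati equation admits a unique symmetric solution on all of $[0,T]$, so $\mathbf{P}$ is well defined, and \eqref{eq:opt-law-r} is then a linear (hence globally solvable) matrix ODE in $\mathbf{R}$ driven by $\mathbf{P}$. I expect the main step to be conceptual rather than technical: recognizing that $\boldsymbol{v}_t=\mathbf{D}\boldsymbol{v}$, which is what recasts the affine, tracking-type problem as a pure quadratic regulator and makes the $\mathbf{R}\mathbf{D}$ term appear naturally.
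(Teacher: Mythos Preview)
Your proposal is correct and follows essentially the same approach as the paper: both split into real/imaginary parts to obtain $\boldsymbol{\statehat}_t=\mathbf{A}\boldsymbol{\statehat}+\mathbf{B}\boldsymbol{\hat{u}}+\mathbf{C}\boldsymbol{v}$ with $\boldsymbol{v}_t=\mathbf{D}\boldsymbol{v}$, treat $(\boldsymbol{\statehat},\boldsymbol{v})$ as an augmented state with quadratic value function, and read off \eqref{eq:opt-law-p}--\eqref{eq:opt-law-r} from the blocks. The only cosmetic difference is that the paper carries out the HJB computation and completing-the-square explicitly, whereas you invoke the standard finite-horizon LQR Riccati equation for the augmented system and then block-decompose; the underlying argument is the same.
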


\begin{proof}
	The objective \eqref{eq:obj-transform-half} can be rewritten as
	\begin{align*}
		\hat{J}= & \int_{-\infty}^{\infty}\Big[\int_0^T\left( \statehat_{\text{Re}}^2(k,t) + \hat{u}_{\text{Re}}^2(k,t)\right)\diff t + \statehat_{\text{Re}}^2(k,T)\Big]\diff k +\int_{-\infty}^{\infty}\Big[\int_0^T\left( \statehat_{\text{Im}}^2(k,t) + \hat{u}_{\text{Im}}^2(k,t)\right)\diff t + \statehat_{\text{Im}}^2(k,T)\Big]\diff k.
	\end{align*}
	The dynamics \eqref{eq:global-relation-u-ode} can be rewritten as
	\begin{equation} \label{eq:global-relation-u-ode-real}
		\begin{aligned}
				\boldsymbol{\statehat}_t&=\mathbf{A}\boldsymbol{\statehat}+\mathbf{B}\boldsymbol{\hat{u}}+\mathbf{C}\boldsymbol{v},\\
				\boldsymbol{v}_t&=\mathbf{D}(t)\boldsymbol{v},
		\end{aligned}
	\end{equation}
	where the matrices are defined in \Cref{theo}.
	From now on, for brevity we omit the dependence on $k$ in this proof for brevity.
	The optimal control problem now is to find $\boldsymbol{\hat{u}}^*(t):= [\hat{u}_{\text{Re}}(t)\quad \hat{u}_{\text{Im}}(t)]^\top, t\in[0,T]$, minimizing 
	\begin{equation*}
		V(\boldsymbol{\statehat}(t_0),\boldsymbol{v}(t_0),\boldsymbol{\hat{u}}(\cdot), t_0)
		:=\int_{t_0}^T\big( \|\boldsymbol{\statehat}\|^2+\|\boldsymbol{\hat{u}}^2\|\big)\diff\tau + \|\boldsymbol{\statehat}(T)\|^2,
	\end{equation*}
	with $t_0=0$ and 
	subject to \eqref{eq:global-relation-u-ode-real}.
	
	Let $\hat{u}[a,b]$ denote a function $\hat{u}(\cdot)$ restricted to the interval $[a,b]$. Given $t\in[0,T]$, let us also make the definition
	\begin{equation}\label{eq:value-function-define}
		V^*(\boldsymbol{\statehat},\boldsymbol{v},t) := \min_{\boldsymbol{\hat{u}}[t,T]}\ V(\boldsymbol{\statehat},\boldsymbol{v},\boldsymbol{\hat{u}},t).
	\end{equation}
 Two properties of the optimal value function \eqref{eq:value-function-define} \cite[Eq.2.2-5, Eq.2.3-7]{anderson2007optimal} applied to dynamics \eqref{eq:global-relation-u-ode-real} are stated as follows. 
\begin{lemma}
	The optimal value function $V^*$ satisfies the Hamilton-Jacobi-Bellman equation:
	\begin{equation}\label{eq:hjb}
		\pdv{}{t}V^*(\boldsymbol{\statehat},\boldsymbol{v},t) 
		= -\min_{\boldsymbol{\hat{u}}[t,T]}\Big\{\|\boldsymbol{\statehat}\|^2+\|\boldsymbol{\hat{u}}^2\| 
		+ \Big[\pdv{}{\mathbf{\boldsymbol{\statehat}}}V^*(\boldsymbol{\statehat},\boldsymbol{v},t)\Big]^\top \boldsymbol{\statehat}_t 
		+\Big[\pdv{}{\mathbf{\boldsymbol{v}}}V^*(\boldsymbol{\statehat},\boldsymbol{v},t)\Big]^\top \boldsymbol{v}_t
		\Big\},
	\end{equation}
with the boundary condition $V^*(\boldsymbol{\statehat}(T),\boldsymbol{v}(T),T) = \|\boldsymbol{\statehat}(T)\|^2$, and the control given by the minimization in \eqref{eq:hjb} is the optimal control $\hat{u}^*(t)$ at time $t$.
\end{lemma}
\begin{lemma}\label{lemm:quadratic}
	The optimal value function $V^*(\boldsymbol{\statehat},\boldsymbol{v},t)$ has the quadratic form
	\begin{equation*}
		V^*(\boldsymbol{\statehat},\boldsymbol{v},t)=
		\begin{bmatrix}
			\boldsymbol{\statehat} & \boldsymbol{v}
		\end{bmatrix}^\top
		\begin{bmatrix}
			\mathbf{P}(t) & \mathbf{R}(t) \\ \mathbf{R}^\top(t) & \mathbf{Q}(t)
		\end{bmatrix}
		\begin{bmatrix}
			\boldsymbol{\statehat} \\ \boldsymbol{v}
		\end{bmatrix},
	\end{equation*}
	where $\mathbf{P}(t)\in\RR^{2\times2}$ and $\mathbf{Q}(t)\in\RR^{3\times3}$ are  symmetric.
\end{lemma}

From \Cref{lemm:quadratic}, the left-hand side in \eqref{eq:hjb} becomes
\begin{equation}\label{eq:hjb-lhs}
	\pdv{}{t}V^*(\boldsymbol{\statehat},\boldsymbol{v},t)=
	\begin{bmatrix}
		\boldsymbol{\statehat} & \boldsymbol{v}
	\end{bmatrix}^\top
	\begin{bmatrix}
		\mathbf{P}_t & \mathbf{R}_t \\ \mathbf{R}^\top_t & \mathbf{Q}_t
	\end{bmatrix}
	\begin{bmatrix}
		\boldsymbol{\statehat} \\ \boldsymbol{v}
	\end{bmatrix}.
\end{equation}
The partial derivatives on the right-hand side in \eqref{eq:hjb} can be expressed as
\begin{align*}
	&\left[\pdv{}{\boldsymbol{\statehat}}V^*(\boldsymbol{\statehat},\boldsymbol{v},t)\right]^\top \boldsymbol{\statehat}_t + \left[\pdv{}{\boldsymbol{v}}V^*(\boldsymbol{\statehat},\boldsymbol{v},t)\right]^\top \boldsymbol{v}_t 
	 =\ 
	 2\begin{bmatrix}
		\boldsymbol{\statehat} & \boldsymbol{v}
	\end{bmatrix}^\top
	\begin{bmatrix}
		\mathbf{P}(t) & \mathbf{R}(t) \\ \mathbf{R}^\top(t) & \mathbf{Q}(t)
	\end{bmatrix}
	\begin{bmatrix}
		\mathbf{A}\boldsymbol{\statehat}+\mathbf{B}\boldsymbol{\hat{u}}+\mathbf{C}\boldsymbol{v} \\ \mathbf{D}(t)\boldsymbol{v}
	\end{bmatrix}.
\end{align*}
% where 
% \begin{align*}
% &\tilde{\mathbf{P}}=\begin{bmatrix}
% 	p_{11} + p_{22} & 2p_{12}   \\ 2p_{21} &p_{11} + p_{22}  \end{bmatrix},\\
% 	&\tilde{\mathbf{R}}=\begin{bmatrix}
% 	 r_{11} + r_{22} &r_{12} + r_{21} \\  r_{12} + r_{21} &r_{11} + r_{22}  \end{bmatrix},\\
% 	&\tilde{\mathbf{Q}}=\begin{bmatrix}
% 	 Q_{11} + Q_{22} &2Q_{12} \\  2Q_{21} &Q_{11} + Q_{22}
% 	\end{bmatrix}, \\
% 	&\mathbf{A}(t) = \begin{bmatrix}
% 		-w & 0 & 1 & 0 & 0 &0 \\ 
% 		0 & -w^* & 0 & 0 & 1 &0\\ 
% 		0 & 0 & 0 & v_t& 0 & 0 \\ 
% 		0 & 0 & 0 & 0  & 0 & 0 \\ 
% 		0 & 0 & 0 & 0 &0 &v_t \\
% 		0 & 0 & 0 & 0 &0 &0
% 	\end{bmatrix}
% 	,\mathbf{B} = \begin{bmatrix}
% 		1 & 0 \\ 0 & 1 \\ 0 & 0 \\ 0 & 0 \\ 0 & 0 \\ 0 & 0
% 	\end{bmatrix},\\
% \end{align*}
Substituting the above terms into the right-hand side in \eqref{eq:hjb} and completing the square, we find
\begin{equation}\label{eq:hjb-rhs-derive}
	\begin{aligned}
		&\|\boldsymbol{\statehat}\|^2+\|\boldsymbol{\hat{u}}^2\| + \left[\pdv{}{\boldsymbol{\statehat}}V^*(\boldsymbol{\statehat},\boldsymbol{v},t)\right]^\top \boldsymbol{\statehat}_t + \left[\pdv{}{\boldsymbol{v}}V^*(\boldsymbol{\statehat},\boldsymbol{v},t)\right]^\top \boldsymbol{v}_t \\
		=\
		&(\boldsymbol{\hat{u}} + \mathbf{B}^\top \mathbf{P}\boldsymbol{\statehat} + \mathbf{B}^\top\mathbf{R}\boldsymbol{v})^\top(\boldsymbol{\hat{u}} + \mathbf{B}^\top \mathbf{P}\boldsymbol{\statehat} + \mathbf{B}^\top\mathbf{R}\boldsymbol{v}) \\ 
		&+ \boldsymbol{\statehat}^\top(\mathbf{I - PBB^\top P + PA + A^\top P})\boldsymbol{\statehat} \\ 
		&+2\boldsymbol{\statehat}^\top(\mathbf{PC + RD + A^\top R - PBB^\top R} )\boldsymbol{v}\\
		& + \boldsymbol{v}^\top(\mathbf{R^\top C + QD - RBB^\top R})\boldsymbol{v}.
	\end{aligned}
\end{equation}
Since the only dependence on $\boldsymbol{\hat{u}}$ in the right-hand side of \eqref{eq:hjb-rhs-derive} is a nonnegative quadratic function, the minimization in \eqref{eq:hjb} is achieved by 
\begin{equation}\label{eq:u-opt}
	\boldsymbol{\hat{u}}^*(t) = -\mathbf{B}^\top \mathbf{P}(t)\boldsymbol{\statehat}(t) - \mathbf{B}^\top\mathbf{R}(t)\boldsymbol{v}(t).
\end{equation}
Substituting \eqref{eq:u-opt} into \eqref{eq:hjb-rhs-derive} and equating \eqref{eq:hjb-lhs}, 
together with the boundary condition $V^*(\boldsymbol{\statehat}(T),\boldsymbol{v}(T),T) = \|\boldsymbol{\statehat}(T)\|^2$, 
we obtain \eqref{eq:opt-law-p}--\eqref{eq:opt-law-r}.
% \begin{align*}
% 	-\hat{p}_t &= 1 - 2k^2\hat{p} - \hat{p}^2,\quad \hat{p}(T) =  1\\
% 	-\mathbf{Q}_t &=2\mathbf{Q}\,\mathbf{M}(t)+\mathbf{r}\,\mathbf{d}^\top - \dfrac{1}{4}\mathbf{r}\,\mathbf{r}^\top, \mathbf{Q}(T) = \mathbf{0}_{2\times2} \\
% 	-\mathbf{r}_t&=-k^2\,\mathbf{r} + \mathbf{M}(t)^\top\mathbf{r}- \hat{p}\,\mathbf{r} + 2\hat{p}\,\mathbf{d}, \mathbf{r}(T) = \mathbf{0}_{2\times1}
% \end{align*} 
\end{proof}

	In the case of infinite-time control, i.e., $T\to\infty$, we disregard the terminal cost $\|\boldsymbol{\statehat}(k,T)\|^2$. 
	Then, the terminal condition for $\mathbf{P}$ becomes $\mathbf{P}(k,T)=\mathbf{0}$.
	Assumptions~\ref{assumption:well-posed}--\ref{assumption:boundary-conditions-infinite} ensure that objective \eqref{eq:obj-transform-half} is finite since \eqref{eq:global-relation-u-ode} is controllable and $v$ vanishes after finite time.
	% Note that the ODE \eqref{eq:opt-law-p} is also the differential Riccati equation for $\mathbf{\boldsymbol{\tilde{\state}_t}= A\boldsymbol{\tilde{\state}}+B\boldsymbol{\tilde{u}}}$.
	Following \cite[Section 6.2]{liberzon2011calculus}, the limit of the solution to \eqref{eq:opt-law-p} as $T\to\infty$ exists since the pair $(\mathbf{A},\mathbf{B})$ is controllable.
	 For every real $k$ and every finite $t$, the solution to backward ODE \eqref{eq:opt-law-p} starting from $\mathbf{P}(k,T)=\mathbf{0}$ with $T\to\infty$ converges to an equilibrium with nonnegative real entries, which is the positive root to the following algebraic Riccati equation,
	\begin{equation*}
		\mathbf{I - P^2 + PA + A^\top P}=0,
	\end{equation*}
	and the root is given by 
	\begin{equation*}
		\mathbf{P}(k) = \begin{bmatrix}
			\hat{p}(k)& 0 \\ 0 & \hat{p}(k)
		\end{bmatrix},
	\end{equation*}
	where $\hat{p}(k) = -w_{\text{Re}}(k) + \sqrt{w_{\text{Re}}^2(k) + 1}$.

	To solve $\mathbf{R}(k,t)$ in \eqref{eq:opt-law-r}, we first rewrite $\mathbf{R\equiv[\tilde{R}\quad r]}$ with $\mathbf{\tilde{R}}\in\RR^{2\times2},\mathbf{r}\in\RR^2$. Then, \eqref{eq:opt-law-r} can be rewritten as
	\begin{align}
		-\mathbf{\tilde{R}_t} &=\mathbf{ (A^\top-P) \tilde{R} + P},\quad \mathbf{\tilde{R}}(k,T)=0,\label{eq:ode-R-tilde}\\ 
		-\mathbf{r_t }& =\mathbf{ (A^\top-P) r + \tilde{R}}\begin{bmatrix}\pdv{}{t}v_{\text{Re}}(k,t)  \\ \pdv{}{t}v_{\text{Im}}(k,t)\end{bmatrix},\quad \mathbf{r}(k,T)=0  .\label{eq:ode-r}
	\end{align} 
	The solution to \eqref{eq:ode-R-tilde} is given by \footnote{$\mathbf{A^\top-P}$ is invertible since $\det(\mathbf{A^\top-P})=w_{\text{Re}}^2(k) + w_{\text{Im}}^2(k) + 1>0$.}
	\begin{equation}\label{eq:ode-R-tilde-solution}
		\begin{aligned}
			\mathbf{\tilde{R}}(k,t) &= \int_{t}^{T}e^{-(\mathbf{A^\top-P})(t-\tau)}\mathbf{P}\diff\tau\\
			&=(e^{-(\mathbf{A^\top - P})(t-T)} - \mathbf{I})(\mathbf{A^\top-P})^{-1}\mathbf{P}. \\
		\end{aligned}
	\end{equation}
	From \cite[Example 1]{bernstein1993some}, the matrix exponential above has the explicit form,
	\begin{equation*}
		\begin{aligned}
			e^{-(\mathbf{A^\top - P}) t} &= \begin{bmatrix}
				\cos(w_{\text{Im}}t) & \sin(w_{\text{Im}}t) \\ -\sin(w_{\text{Im}}t) & \cos(w_{\text{Im}}t)
			\end{bmatrix}e^{(\hat{p} + w_{\text{Re}})t}.
		\end{aligned}
	\end{equation*}
	Since $\hat{p}(k) + w_{\text{Re}}(k) = \sqrt{w_{\text{Re}}^2(k) + 1}>0$, \eqref{eq:ode-R-tilde-solution} converges to the equilibrium $\mathbf{\tilde{R}}^{\text{eq}}(k) = \mathbf{-(A^\top-P)}^{-1}\mathbf{P}$ as $T\to\infty$.
The solution to \eqref{eq:ode-r} is given by
\begin{equation*}
	\begin{aligned}
		\mathbf{r}(k,t) &= \int_{t}^{T}e^{-(\mathbf{A^\top-P})(t-\tau)}\mathbf{\tilde{R}}(k,t)\begin{bmatrix}\pdv{}{\tau}v_{\text{Re}}(k,\tau)  \\ \pdv{}{\tau}v_{\text{Im}}(k,\tau)\end{bmatrix}\diff\tau. \\
	\end{aligned}
\end{equation*}
Substituting \eqref{eq:ode-R-tilde-solution} into the above expression and taking the limit as $T\to\infty$, we find
\begin{equation*}
	\begin{aligned}
		\mathbf{r}(k,t) &= \int_{t}^{\infty}e^{-(\mathbf{A^\top-P})(t-\tau)}\mathbf{\tilde{R}}^{\text{eq}}(k)\begin{bmatrix}\pdv{}{\tau}v_{\text{Re}}(k,\tau)  \\ \pdv{}{\tau}v_{\text{Im}}(k,\tau)\end{bmatrix}\diff\tau. \\
	\end{aligned}
\end{equation*}

Let $\wcomplex(k):=w(k) + \hat{p}(k)=\sqrt{w_{\text{Re}}^2(k) + 1} + \im w_{\text{Im}}(k)$ and $\bar{\wcomplex}(k):=\sqrt{w_{\text{Re}}^2(k) + 1} - \im w_{\text{Im}}(k)$.
Using integration by parts and substituting the above expressions into $\mathbf{r}(k,t)$, we find
\begin{align*}
	\boldsymbol{c}^\top \mathbf{R}(k,t)\boldsymbol{v}(k,t)
	=\hat{p}(k)\int_{t}^{\infty}e^{\bar{\wcomplex}(k)(t-\tau)}v(k,\tau) \diff\tau,
	% &-\hat{p}(k)\int_{t}^{\infty}e^{(t-\tau)\sqrt{w_{\text{Re}}^2(k)+1}}2\im\sin(w_{\text{Im}}(k)(t-\tau))v(k,\tau) \diff\tau
\end{align*}
where $\boldsymbol{c}=[1 \quad \im]^\top$.
Using the identity $\hat{u}^*(k,t)=\boldsymbol{c}^\top\boldsymbol{\hat{u}}^*(k,t)$ and substituting $\mathbf{P}(k)$ and the above expression into \eqref{eq:u-opt-theo}, the infinite-time optimal control law in the frequency domain can be expressed as
\begin{equation}\label{eq:u-opt-transform}
	\hat{u}^*(k,t) = -\hat{p}(k)\statehat(k,t) - \hat{p}(k)\int_{t}^{\infty}e^{\bar{\wcomplex}(k)(t-\tau)}v(k,\tau) \diff\tau.
\end{equation}
% Applying the inverse Fourier transform to \eqref{eq:u-opt-transform}, we find the optimal control in the original space domain as follows:
% \begin{multline}[]
% 		u^*(x,t) = -\int_{-\infty}^{\infty}e^{\im kx}\hat{p}(k) \Big[\statehat(k,t)  \\
% 		-\int_{t}^{\infty}e^{\bar{\wcomplex}(k)(t-\tau)}v(k,\tau)\diff\tau\Big] \frac{\diff k}{2\pi}.
% \end{multline}
Substituting \eqref{eq:u-opt-transform} into \eqref{eq:global-relation-u-ode}, the closed-loop system in the frequency domain is given by 
\begin{equation}\label{eq:global-relation-u-ode-control}
	\statehat_t^* = -\wcomplex(k)\statehat^* + v(k,t) -
	\hat{p}(k)\int_{t}^{\infty}e^{\bar{\wcomplex}(k)(t-\tau)}v(k,\tau) \diff\tau.
\end{equation}
% Now, we analyze the well-posedness and stability of the heat equation \eqref{eq:evolution-equation} after substituting optimal control \eqref{eq:u-opt-transform}.
% Unlike traditional methods that represent PDEs in abstract differential equations, e.g., see \cite{wang2014boundary}, the Fourier transform decomposes the heat equation into an ODE \eqref{eq:global-relation-u-ode-control} for each $k$.
% Therefore, it suffices to analyze the well-posedness and stability of the ODE for each $k$.
% Let $b(k,t):=v(k,t) - \hat{p}(k)\int_t^{\infty}e^{\wcomplex(k)(t-\tau)}v(k,\tau)\diff\tau$. 
% The solution to \eqref{eq:global-relation-u-ode-control} with the transformed initial condition $\statehat(k,0)=\statehat_0(k)$ is given by
% \begin{equation*}
% 	\statehat(k,t) = e^{-\wcomplex(k)t}\statehat_0(k) + \int_{0}^{t}e^{-\wcomplex(k)(t-\tau)}b(k,\tau)\diff\tau.
% \end{equation*}
% For simplicity, we assume that $v(\cdot,\cdot)$ is bounded, and there exists a finite time $K$ such that $v(k,t) = 0$ for all $t>K$. Then, $b(k,t)=0$ for all $t>K$.
% Following \Cref{re:real-k}, for every real $k$, the solution $\statehat(k,t)$ is unique, and $\statehat(k,t)\to0$ as $t\to\infty$ since $\wcomplex(k)>0$.
% Tighter conditions for well-posedness and stability can be obtained through more complicated analysis and will be pursued in the future.
%\begin{remark}
%	At every time $t$, the optimal control does not depend on the boundary conditions $g_0(\tau), g_1(\tau)$ and state $\statehat(k,\tau)$ in the future when $\tau>t.$ 
%\end{remark}
Combining \eqref{eq:u-opt-transform} and \eqref{eq:global-relation-u-ode-control}, we obtain
\begin{equation*}
	\hat{u}^*(k,t) = \statehat_t^* + w(k) \statehat^* - v(k,t).
\end{equation*}
Applying the inverse Fourier transform, we find the following expression for $u^*(x,t)$,
\begin{equation}\label{eq:u-opt-original} 
		u^*(x,t) = u^*_1(x,t) + u^*_2(x,t) + u^*_3(x,t),
\end{equation}
where 
\begin{equation*}
	\begin{aligned}
		 u^*_1(x,t):= \state_t^*(x,t),\quad
		 u^*_2(x,t) := \int_{-\infty}^{\infty}e^{\im kx}w(k)\statehat^*(k,t)\frac{\diff k}{2\pi}, \quad 
		 u^*_3(x,t) := -\int_{-\infty}^{\infty}e^{\im kx}v(k,t)\frac{\diff k}{2\pi}.
	\end{aligned}
\end{equation*}
Note that $u^*_1(x,t)$ can be evaluated by taking the time derivative of $\int_{-\infty}^{\infty}e^{\im kx}\statehat^*(k,t)\frac{\diff k}{2\pi}$,
where the integral can be seen as a special case of $u^*_2(x,t)$ with $w(k)=1$.
Therefore, it suffices to focus on $u^*_2(x,t)$ and $u^*_3(x,t)$. Also, 
\eqref{eq:optimal-control-heat} for the heat equation can be obtained from \eqref{eq:u-opt-original} by substituting $w(k)=k^2$.

The transformed state $\statehat^*(k,t)$ can be represented in terms of the initial and boundary values using \eqref{eq:global-relation-u-ode-control}.
From \Cref{assumption:well-posed}, only $n$ boundary values are given, whereas $2n$ boundary values are needed for $v$ in \eqref{eq:boundary-value-v}. Thus, \eqref{eq:u-opt-original} cannot be computed yet with $n$ unknown boundary values.
In the following section, we will derive an expression for $u^*_2(x,t)$ that depends only on the given initial and boundary conditions.
This is achieved by extending the frequency domain in \eqref{eq:u-opt-original} from $k\in\RR$ to $\kcomplex\in\CC$.
Moreover, the third integral $u^*_3(x,t)$ that involves $2n$ boundary values in $v$ will be shown to be zero. 

\section{Integral representation in the complex plane}\label{sec:integral-representation-complex}
In this section, we present a general procedure for removing the dependence of optimal control \eqref{eq:u-opt-original} on unknown boundary values.
We first introduce three steps to derive an expression for $u^*_2(x,t)$ that depends only on the given initial and boundary values.
The three steps are applied to the class of even-order PDEs \eqref{eq:even-order-pde} for illustration.
Then, we show that $u^*_3(x,t)$ is equal to zero. 
% provided that for all $j=0,\ldots,n-1, c_j(\kcomplex)$ in \eqref{eq:X-function-expansion} is a polynomial of the variable $\im k$, i.e., $c_j(\kcomplex)=\sum_{l=0}^{n_j} \tilde{c}_{jl}(\im \kcomplex)^l$ with coefficients $\tilde{c}_{jl}\in\CC,l=0,\ldots,n_j$.
% This is true for \eqref{eq:even-order-pde}, see \Cref{example:X-function-even-order}, and many physically meaningful PDEs, including the reaction diffusion equation in \eqref{eq:heat}, see \Cref{example-X-function}. 
Finally, we apply the procedure to the reaction-diffusion equation that does not fit into \eqref{eq:even-order-pde}.
We write $\statehat=\statehat^*$ for brevity since we will only consider the closed-loop system under optimal control \eqref{eq:u-opt-transform}.

\begin{figure}[tb]
	\centering
	\includegraphics[width=0.3\linewidth]{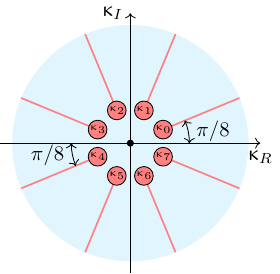}
	\caption{\sf Branch cuts for $\sqrt{\kcomplex^{2n} + 1}$ with $n=4$}
	\label{fig:branch-cut-heat}
\end{figure}

\subsection{General procedure}
\label{sec:general-procedures}
\subsubsection{The integral representation}
\label{sec:integral-representation}
We first obtain an integral representation for $u^*_2(x,t)$ in terms of the initial and boundary values.  
Solving \eqref{eq:global-relation-u-ode-control}, we find the following global relation under optimal control,
\begin{equation}\label{eq:global-relation-subsitute-u}
	\begin{aligned}
		\statehat(k,t) = e^{-\wcomplex(k)t} \statehat_0(k)- e^{-\wcomplex(k)t}\sum_{j=0}^{n-1}c_j(k)\Big[&\tilde{g}_j(\wcomplex(k),t) - e^{-\im k L}\tilde{h}_j(\wcomplex(k),t)\\
		&\hspace{0in}\!-\! \hat{p}(k)(\underline{g}_j(\wcomplex(k),\bar{\wcomplex}(k),t)\! -\! e^{-\im k L}\underline{h}_j(\wcomplex(k),\bar{\wcomplex}(k),t))\Big],
	\end{aligned}
\end{equation}
where for all $j=0,1,\ldots,n-1,$
\begin{equation*}
	\begin{aligned}
		\underline{g}_j(k,\tilde{k},t) := \int_{0}^{t}e^{k\tau}\left[\int_\tau^{\infty} e^{\tilde{k}(\tau-s)}g_j(s)\diff s\right] \diff\tau,\quad 
		\underline{h}_j(k,\tilde{k},t) := \int_{0}^{t}e^{k\tau}\left[\int_\tau^{\infty} e^{\tilde{k}(\tau-s)}h_j(s)\diff s\right] \diff\tau.
	\end{aligned}
\end{equation*}
Substituting \eqref{eq:global-relation-subsitute-u} into \eqref{eq:u-opt-original}, we have
\begin{equation}\label{eq:u-2-integral-real}
	\begin{aligned}
		u^*_2(x,t) = \int_{-\infty}^{\infty}e^{\im k x-\wcomplex(k)t}w(k)\statehat_0(k)\frac{\diff k}{2\pi}  - \int_{-\infty}^{\infty}e^{\im k x-\wcomplex(k)t}w(k)\tilde{g}(k,t)\frac{\diff k}{2\pi} + \int_{-\infty}^{\infty}e^{\im k(x-L)-\wcomplex(k)t}w(k)\tilde{h}(k,t) \frac{\diff k}{2\pi},
	\end{aligned}
\end{equation}
where 
$$\tilde{g}(k,t)=\sum_{j=0}^{n-1}c_j(k)\check{g}_j(k,t),\quad
\tilde{h}(k,t)=\sum_{j=0}^{n-1}c_j(k)\check{h}_j(k,t),$$
with
\begin{equation}\label{eq:check-g-h}
	\begin{aligned}
		\check{g}_j(k,t) = \tilde{g}_j(\wcomplex(k),t) -\hat{p}(k)\underline{g}_j(\wcomplex(k),\bar{\wcomplex}(k),t),\quad
		\check{h}_j(k,t) = \tilde{h}_j(\wcomplex(k),t) -\hat{p}(k)\underline{h}_j(\wcomplex(k),\bar{\wcomplex}(k),t).
	\end{aligned}
\end{equation}

\subsubsection{Contour deformation}
\label{sec:contour-deformation}

Next, we deform the integrals involving $\tilde{g}$ and $\tilde{h}$ in \eqref{eq:u-2-integral-real} from the real line to contours in $\CC$.
Note that $\wcomplex(k),\bar{\wcomplex}(k),$ and $\hat{p}(k)$ in \eqref{eq:global-relation-subsitute-u} are only defined when $k\in\RR$.
Recall that $\wcomplex(k)= \sqrt{w_{\text{Re}}^2(k) + 1} + \im w_{\text{Im}}(k)$ where functions $w_{\text{Re}}$ and $w_{\text{Im}}$ are real and imaginary parts of $w(k)$ when $k\in\RR$. 
Therefore, we first need to find the \emph{analytic continuation}\footnote{A function $g(z)$ defined on $A$ is the analytic continuation of $f(z)$ defined on $B$ if $g(z)$ and $f(z)$ are analytic and $f(z)=g(z)$ on common domain $A\cap B$.} in $\CC$ for $\sqrt{w_{\text{Re}}^2(k) + 1}, w_{\text{Re}}(k)$ and $w_{\text{Im}}(k)$ that define $\wcomplex(k),\bar{\wcomplex}(k),$ and $\hat{p}(k)$.
Since $w_{\text{Re}}(k)$ and $w_{\text{Im}}(k)$ are polynomials, $w_{\text{Re}}(\kcomplex)$ and $w_{\text{Im}}(\kcomplex)$ with $\kcomplex\in\CC$ are entire functions, i.e., analytic everywhere in $\CC$.
The square root $\sqrt{w_{\text{Re}}^2(\kcomplex) + 1}$ is analytic in $\CC$ except at branch cuts, e.g., the lines that join two branch points\footnote{Consider a square root function $\sqrt{f(z)},z\in\CC$. 
A point is a branch point if the multivalued function $\sqrt{f(z)}$ is discontinuous upon traversing a small circuit around this point.
Therefore, every point such that $f(z)=0$ is a branch point of $\sqrt{f(z)}$. Note that the point $z=\infty$ in $\CC$ is also a branch point \cite{ablowitz2003complex}.}.
Removing branch cuts from $\CC$ allows $\sqrt{w_{\text{Re}}^2(\kcomplex) + 1}$ to be single-valued and thus analytic. 

\begin{remark}
	Following \Cref{ex:w-real-imaginary}, for the even-order PDE \eqref{eq:even-order-pde}, we have $w_{\text{Re}}(\kcomplex)=\kcomplex^n$ and $w_{\text{Im}}(\kcomplex)=0$.
	Note that $ w_{\text{Re}}(\kcomplex)\neq\text{Re}[w(\kcomplex)]$ and $ w_{\text{Im}}(\kcomplex)\neq\text{Re}[w(\kcomplex)]$.
\end{remark}

\begin{example}
	\label{example:branch-cut-even-order}
	For the even-order PDE \eqref{eq:even-order-pde}, $\sqrt{w_{\text{Re}}^2(\kcomplex) + 1} = \sqrt{\kcomplex^{2n} + 1}$ has branch points at $\kcomplex_m=\exp[\im\frac{\pi+2m\pi}{2n}],m=0,1,\ldots,2n-1$. 
	An example of branch cuts is to connect every $\kcomplex_m$ with $\infty$ by the ray with the same argument as $\kcomplex_m$, see the red lines in \Cref{fig:branch-cut-heat}.
\end{example}
\begin{figure}[tb]
	\begin{subfigure}{0.49\textwidth}
		\centering
		\includegraphics[width=0.6\textwidth]{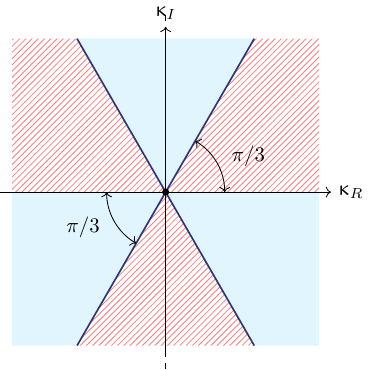} 
		\caption{$\text{Re}[\im \kcomplex^3]$}
	\end{subfigure}\hfill
	\begin{subfigure}{0.49\textwidth}
		\centering
		\includegraphics[width=0.6\textwidth]{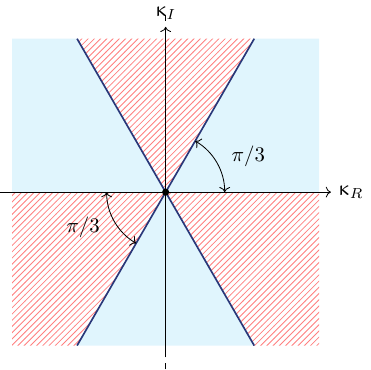}
		\caption{$\text{Re}[-\im \kcomplex^3]$}
	\end{subfigure}
	\caption{\sf
		Consider the PDE $\state_t=\state_{xxx}$ with $w_{\text{Re}}(\kcomplex)=0, w_{\text{Im}}(\kcomplex)=\im \kcomplex^3$.
		$\wcomplex(\kcomplex)\sim \im \kcomplex^3$ and $\bar{\wcomplex}(\kcomplex)\sim -\im \kcomplex^3$ for large $\kcomplex$. The blue regions represent positive real parts.
	}
	\label{fig:real-positive-region}
\end{figure}

In the infinite-time case, we also need to ensure that the integrands in \eqref{eq:u-2-integral-real} are bounded when extending from $k\in\RR$ to $\kcomplex\in\CC$ as $t\to\infty$. This requires that $\text{Re}[\wcomplex(\kcomplex)]\geq0$ under which $e^{-\wcomplex(\kcomplex)t}=e^{-\text{Re}[\wcomplex(\kcomplex)]t}e^{-\im\text{Im}[\wcomplex(\kcomplex)]t}$ is bounded. 
Moreover, we will also need to consider the case $\kcomplex\to\infty$ in contour deformation.
This additionally requires that $\text{Re}[\bar{\wcomplex}(\kcomplex)]\geq0$ so that $e^{\bar{\wcomplex}(\kcomplex)(\tau-s)}$ where $s>\tau$ is bounded.
Therefore, the integrals can only be deformed to contours in regions where $\text{Re}[\wcomplex(\kcomplex)]\geq0$ and $\text{Re}[\bar{\wcomplex}(\kcomplex)]\geq0$.
Furthermore, we require $\text{Re}[\wcomplex(\kcomplex)]>0$ to ensure that the integrands decay exponentially as $\kcomplex\to\infty$ to apply Jordan's Lemma in \Cref{lemm:jordan} later in contour deformation.

\begin{remark}\label{re:contour-deformation}
	If $w_{\text{Im}}(\kcomplex)=0$, then $\wcomplex(\kcomplex)=\bar{\wcomplex}(\kcomplex)$.
	If $w_{\text{Im}}(\kcomplex)\neq0$, then the regions where $\text{Re}[\wcomplex(\kcomplex)]\geq0$ and $\text{Re}[\bar{\wcomplex}(\kcomplex)]\geq0$ may not overlap, see \Cref{fig:real-positive-region} for illustration. 
	Optimal control when $w_{\text{Im}}(\kcomplex)\neq0$ will be pursued in the future.
\end{remark}

We consider the even-order PDE \eqref{eq:even-order-pde} with $\wcomplex(\kcomplex)=\bar{\wcomplex}(\kcomplex)=\sqrt{\kcomplex^{2n}+1}$ to illustrate contour deformation.
Using Cauchy's Theorem in \Cref{lemm:cauchy} and Jordan's Lemma in \Cref{lemm:jordan}, we will show that \eqref{eq:u-2-integral-real} is equal to 
\begin{equation}\label{eq:u-2-integral-complex}
	\begin{aligned}
		u^*_2(x,t) = \int_{-\infty}^{\infty}e^{\im \kcomplex x-\wcomplex(\kcomplex)t}w(\kcomplex)\statehat_0(\kcomplex)\frac{\diff \kcomplex}{2\pi} 
		  - \int_{\partial\D^+}e^{\im \kcomplex x-\wcomplex(\kcomplex)t}w(\kcomplex)\tilde{g}(\kcomplex,t)\frac{\diff \kcomplex}{2\pi} 
		 +\int_{\partial\D^-}e^{\im \kcomplex(x-L)-\wcomplex(\kcomplex)t}w(\kcomplex)\tilde{h}(\kcomplex,t) \frac{\diff \kcomplex}{2\pi},
	\end{aligned}
\end{equation}
where $\partial\D^+$ and $\partial\D^-$ are boundaries of $\D^+$ in $\CC^+$ and $\D^-$ in $\CC^-$, respectively, and $\D^+$ and $\D^-$ are regions such that $\wcomplex(\kcomplex)$ is analytic and $\text{Re}[\wcomplex(\kcomplex)]>0,\text{Re}[\bar{\wcomplex}(\kcomplex)]\geq0$ in both $\CC^+\setminus\D^+$ and $\CC^-\setminus\D^-$. 

Consider the contour $\partial\D^+=\{\kcomplex\in\CC^+:\kcomplex=|\kcomplex|e^{\im\theta},\theta=\beta^+ \text{ or }\pi-\beta^+\}$ and $\partial\D^-=\{\kcomplex\in\CC^-:\kcomplex=|\kcomplex|e^{\im\theta},\theta=\beta^- \text{ or }-\beta^- - \pi\}$ as shown in \Cref{fig:contour-deformation}. 
Following the branch cuts constructed in \Cref{example:branch-cut-even-order}, we choose $\beta^+\in(0,\frac{\pi}{2n}),\beta^-\in(-\frac{\pi}{2n},0)$.
Then, $\wcomplex(\kcomplex)=\bar{\wcomplex}(\kcomplex)=\sqrt{\kcomplex^{2n}+1}$ is analytic in $\CC^+\setminus\D^+$ and $\CC^-\setminus\D^-$.
Let $G(\kcomplex,t):=e^{-\wcomplex(\kcomplex)t}w(\kcomplex)\tilde{g}(\kcomplex,t)$.
Since $G(\kcomplex,t)$ is analytic in $\CC^+\setminus\D^+$, the integral of $G(\kcomplex,t)$ along the boundary $\C$ of the region $\CC^+\setminus\D^+$ is zero by \Cref{lemm:cauchy}, i.e.,
\begin{equation}\label{eq:cauchy-inegral}
		\begin{aligned}
				\int_{\C}e^{i\kcomplex x}G(\kcomplex,t) \diff \kcomplex = \left(\int_{-R}^{R} + \int_{\C_{R_2}} - \int_{\partial\D^+} + \int_{\C_{R_1}}\right)e^{i\kcomplex x}G(\kcomplex,t) \diff \kcomplex = 0.
		\end{aligned}
\end{equation}

Now we show that the integrals along $\C_{R_1}$ and $\C_{R_2}$ in \eqref{eq:cauchy-inegral} vanish as $R\to\infty$.
This follows from \Cref{lemm:jordan} with  $\text{Re}[\wcomplex(\kcomplex)]>0,\text{Re}[\bar{\wcomplex}(\kcomplex)]\geq0$, i.e., $\text{Re}[\sqrt{\kcomplex^{2n}+1}]>0$ in $\CC^+\setminus\D^+$.
This requirement is used to show that the 
$G(\kcomplex,t)$ decays exponentially as $|\kcomplex|\to\infty$ in this region.
The real part of the square root of any complex variable $a+b\im$ is given by $\sqrt{\frac{1}{2}(\sqrt{a^2+b^2}+a)}$. 
Consider $a=\text{Re}[\kcomplex^{2n}+1]$ and $b=\text{Im}[\kcomplex^{2n}+1]$. Then, the real part of $\sqrt{\kcomplex^{2n}+1}$ is always nonnegative and equals zero only if $b=\text{Im}[\kcomplex^{2n}+1]=\text{Im}[\kcomplex^{2n}]=0$. 
Using the definition $\kcomplex=|\kcomplex|e^{\im\theta}$, we have $\text{Im}[\kcomplex^{2n}]=|\kcomplex|^{2n}\sin(2n\theta)=0$ if $\theta=\frac{m\pi}{2n}, m=0,1,\ldots,4n-1$. Since $\theta\in(0,\frac{\pi}{2n})$ for $\kcomplex\in\CC^+\setminus\D^+$, $\text{Im}[\kcomplex^{2n}]\neq0$ and thus $\text{Re}[\wcomplex(\kcomplex)]=\text{Re}[\bar{\wcomplex}(\kcomplex)]>0$ in $\CC^+\setminus\D^+$ and $\CC^-\setminus\D^-$.

With $\text{Re}[\wcomplex(\kcomplex)]>0$, we show that $G(\kcomplex,t)\to0$ uniformly in $\C_{R_1}$ and $\C_{R_2}$ as $|\kcomplex|\to\infty$, i.e., $|G(\kcomplex,t)|\leq K_R$ in $\C_{R_1}$ and $\C_{R_2}$, where $K_R$ depends only on $R$ (not on argument of $\kcomplex$) and $K_R\to0$ as $R\to\infty$.
Note that ${g}_j(\kcomplex,t)$ contains the exponential term $e^{\wcomplex(\kcomplex)\tau}, 0<\tau<t$ for all $j=0,\ldots,n-1$. 
	Then, the leading exponential term in $e^{-\wcomplex(\kcomplex)t}\tilde{g}_j(\kcomplex,t)$ is $e^{-\wcomplex(\kcomplex)(t-\tau)}$.
	If $\text{Re}[\wcomplex(\kcomplex)]>0$ and $t-\tau>0$, $e^{-\wcomplex(\kcomplex)(t-\tau)}=e^{-\text{Re}[\wcomplex(\kcomplex)](t-\tau)}e^{-\im\text{Im}[\wcomplex(\kcomplex)](t-\tau)}$ decays exponentially and cancels the polynomial growth of $w(\kcomplex)$ and $c_j(\kcomplex)$ for all $j=0,\ldots,n-1$. 
	Similarly, $e^{-\wcomplex(\kcomplex)t}\underline{g}_j(\wcomplex(\kcomplex),\bar{\wcomplex}(\kcomplex),t)$ decays exponentially provided in addition that $\text{Re}[\bar{\wcomplex}(\kcomplex)]\geq0$.

Following \Cref{lemm:jordan}, the integrals along $\mathcal{C}_{R_1}$ and $\mathcal{C}_{R_2}$ in \eqref{eq:cauchy-inegral} vanish and we have
\begin{equation*}
	\begin{aligned}
		&\int_{-\infty}^{\infty}e^{\im \kcomplex x}G(\kcomplex,t)\frac{\diff \kcomplex}{2\pi} = \int_{\partial\D^+}e^{\im \kcomplex x}G(\kcomplex,t)\frac{\diff \kcomplex}{2\pi}.
	\end{aligned}
\end{equation*}

Similarly, let $H(\kcomplex,t):=e^{-\wcomplex(\kcomplex)t}\wcomplex(\kcomplex)\tilde{h}(\kcomplex,t)$. 
Since $H(\kcomplex,t)$ is analytic and $\text{Re}[\wcomplex(\kcomplex)]>0,\text{Re}[\bar{\wcomplex}(\kcomplex)]\geq0$ in $\CC^-\setminus\D^-$, applying \Cref{lemm:cauchy} and \Cref{lemm:jordan} in $\CC^-$ gives
\begin{equation*}
	\begin{aligned}
		&\int_{-\infty}^{\infty}e^{\im \kcomplex(x-L)}H(\kcomplex,t)\frac{\diff \kcomplex}{2\pi} = \int_{\partial\D^-}e^{\im \kcomplex(x-L)}H(\kcomplex,t)\frac{\diff \kcomplex}{2\pi}.
	\end{aligned}
\end{equation*}

\subsubsection{Elimination of unknown boundary values}
\begin{figure}[tb]
	\centering
	\includegraphics[width=0.35\linewidth]{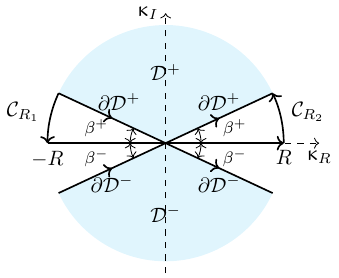}
	\caption{\sf Contour deformation from $\RR$ to $\partial \D^+$ and $\partial \D^-$}
	\label{fig:contour-deformation}
\end{figure}
\label{sec:general-procedures-elimination}
In the final step, we show how to eliminate unknown boundary values in \eqref{eq:u-2-integral-complex} for the even-order PDE \eqref{eq:even-order-pde}.

We start by noting that functions $\tilde{g}_j,\underline{g}_j, \tilde{h}_j,\underline{h}_j$ in \eqref{eq:global-relation-subsitute-u} depend on $\kcomplex$ only through $\wcomplex(\kcomplex)$ and $\bar{\wcomplex}(\kcomplex)$ or equivalently through $w_{\text{Re}}(\kcomplex)$.
Therefore, these functions are \emph{invariant} under any transformation $\kcomplex\to\lambda(\kcomplex)$ that leaves $w_{\text{Re}}(\kcomplex)$ invariant, i.e., $w_{\text{Re}}(\kcomplex)=w_{\text{Re}}(\lambda(\kcomplex))$.
Note that $\hat{p}(\kcomplex)$ is also invariant under such transformations.
There are $n$ distinct solutions to $w_{\text{Re}}(\kcomplex)=w_{\text{Re}}(\lambda(\kcomplex))$, given by $\lambda_l(\kcomplex)=\kcomplex \exp[\im\frac{ 2\pi l}{n}],l=0,\ldots,n-1$.
By replacing $\kcomplex$ in the global relation \eqref{eq:global-relation-subsitute-u} with $\lambda_l(\kcomplex)$, we obtain a system of $n$ equations indexed by $l=0,\ldots,n-1$,
\begin{equation}\label{eq:global-relation-invariant-lambda}
	\begin{aligned}
		&e^{\wcomplex(\kcomplex)t}\statehat(\lambda_l(\kcomplex),t) = \statehat_0(\lambda_l(\kcomplex)) - \sum_{j=0}^{n-1}c_j(\lambda_l(\kcomplex))\big[\check{g}_j(\kcomplex,t)- e^{-\im \lambda_l(\kcomplex) L} \check{h}_j(\kcomplex,t)\big].
	\end{aligned}
\end{equation}
Recall that $n$ among the $2n$ functions $\{\check{g}_j,\check{h}_j\}_{j=0}^{n-1}$ in \eqref{eq:check-g-h} can be computed from the $n$ given boundary conditions.
Let $p,q$ denote the indices of unknown boundary values at $x=0$ and $x=L$ respectively, i.e.,  $g_p$ and $h_q$ are unknown, where $p$ takes $n-N$ integer values and $q$ takes $n$ integer values, see \Cref{assumption:well-posed}.
Then, \eqref{eq:global-relation-invariant-lambda} is a system of $n$ equations for $n$ unknown functions $\check{g}_p,\check{h}_q$.
The existence and uniqueness of the solution $\check{g}_p$ and $\check{h}_q$ to \eqref{eq:global-relation-invariant-lambda} for $\kcomplex\in\CC^+\setminus\D^+$ and $\CC^-\setminus\D^-$ is guaranteed by \Cref{assumption:well-posed} \cite[pg. 65]{fokas2008unified}.   
Solving \eqref{eq:global-relation-invariant-lambda}, $\check{g}_p$ and $\check{h}_q$ are represented in terms of the given initial and boundary conditions, and the remaining unknown involving $\statehat(\lambda_l(\kcomplex),t),l=0\ldots,n-1$.  
Substituting the solution for $\check{g}_p$ and $\check{h}_q$ into \eqref{eq:u-2-integral-complex}, we shall show that the integrals involving the remaining unknown $\statehat(\lambda_l(\kcomplex),t)$ vanish.
This follows from the procedure in \cite[Proposition 1.2]{fokas2008unified}.
For brevity, we will only illustrate this for the reaction-diffusion equation in \Cref{sec:elimination-unknown-boundary-reaciton}.

\begin{remark}\label{re:elemination-unknown-boundary}
	To obtain a system of $n$ equations \eqref{eq:global-relation-invariant-lambda} for general PDE \eqref{eq:evolution-equation}, we require $n$ distinct solutions $\lambda_l(\kcomplex),l=0,\ldots,n-1$ such that $w_{\text{Re}}(\kcomplex)=w_{\text{Re}}(\lambda_l(\kcomplex))$ and $w_{\text{Im}}(\kcomplex)=w_{\text{Im}}(\lambda_l(\kcomplex))$. 
	This is guaranteed if $w_{\text{Im}}$ or $w_{\text{Re}}$ is zero.
	Otherwise, it is necessary for both polynomials $w_{\text{Re}}(\kcomplex)$ and $w_{\text{Im}}(\kcomplex)$ to have degree $n$.
	For example, if $w_{\text{Re}}(\kcomplex) = \kcomplex^2$ and $w_{\text{Im}}(\kcomplex)=\kcomplex$ for $n=2$, then $\lambda_0(\kcomplex)=\kcomplex$ is the only solution.
\end{remark}
	
\subsubsection{Vanishing integral}
Following the above three steps, we can derive an expression for $u^*_2(x,t)$ and similarly for $u^*_1(x,t)$ only in terms of the given initial and boundary conditions.
Now, we analyze $u^*_3(x,t)$ in \eqref{eq:u-opt-original}.
Expanding $v(k,t)$ from \eqref{eq:boundary-value-v}, we obtain
\begin{align*}
	u^*_3(x,t) &= \int_{-\infty}^{\infty}\sum_{j=0}^{n-1}e^{\im k x}c_j(k)\big[g_j(t) - e^{-\im k L}h_j(t)\big]\frac{\diff k}{2\pi} \\
	& = \sum_{j=0}^{n-1}g_j(t)\!\int_{-\infty}^{\infty}e^{\im k x}c_j(k) \frac{\diff k}{2\pi}\! -\!h_j(t)\!\int_{-\infty}^{\infty}e^{\im k (x-L)}c_j(k)\frac{\diff k}{2\pi}.
\end{align*}
Recall that $c_j(k),j=0,\ldots,n-1$ obtained from \eqref{eq:X-function-expansion} are polynomials of $k$ and thus can be expressed as $c_j(k)=\sum_{l=0}^{n_j} \tilde{c}_{jl}(\im k)^l$ with coefficients $\tilde{c}_{jl}\in\CC,l=0,\ldots,n_j$. Then, we find
\begin{equation}\label{eq:u-3-dirac-delta}
	u^*_3(x,t) = \sum_{j=0}^{n-1}\sum_{l=0}^{n_j}\tilde{c}_{jl}\left[ \delta^{(j)}(x)g_j(t)-\delta^{(j)}(x-L)h_j(t)\right],
\end{equation}
where $\delta^{(j)}(x)$ is the $j$-th derivative of the Dirac delta function.
For the open interval $0<x<L$ in the domain $\Omega$, the delta function and its derivatives equal zero; thus, $u^*_3(x,t)=0$. 
\begin{figure}[b]
	\centering
	\includegraphics[width=0.3\linewidth]{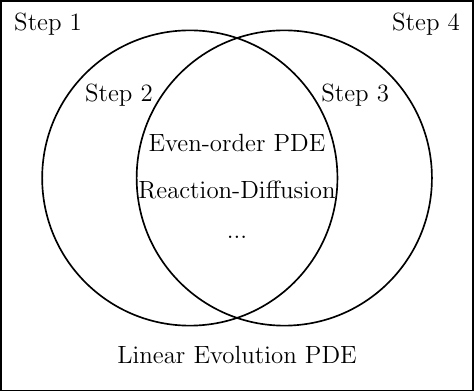}
	\caption{\sf Classes of PDEs satisfying the requirements of the general procedure}
	\label{fig:venn-diagram}
\end{figure}

We have presented all the needed techniques to remove the dependence of optimal control \eqref{eq:u-opt-original} on unknown boundary values.
The requirements of the general procedure for the class of even-order PDEs \eqref{eq:even-order-pde} are summarized in \Cref{table:sample-procedure}.
The relations between different requirements are illustrated in \Cref{fig:venn-diagram}.
In the following subsection, we apply these techniques to the reaction-diffusion equation that does not fit into \eqref{eq:even-order-pde}.

\subsection{The reaction-diffusion equation}
\label{sec:optimal-control-reaciton-diffusion}
Following \Cref{example:problem-formulation}, $w_{\text{Re}}(\kcomplex) = \kcomplex^2+c$ and $w_{\text{Im}}(\kcomplex) = 0$. 
Then, $\wcomplex(\kcomplex)=\bar{\wcomplex}(\kcomplex)=\sqrt{(\kcomplex^2+c)^2+1}$. Since $\wcomplex(\kcomplex)=\bar{\wcomplex}(\kcomplex)$, we will omit the arguments of $\bar{\wcomplex}(\kcomplex)$ in $\underline{g}_j,\underline{h}_j,j=0,\ldots,n-1$, see \eqref{eq:global-relation-subsitute-u}. Using the procedure in \Cref{sec:general-procedures}, we obtain the following result.
\begin{table}[tb]
	\small
		\centering
		\caption{\sf Requirements for the general procedure}
		\label{table:sample-procedure}
		\begin{tabular}{cc}
			\hline
			Step & Requirement \\
			\hline
			1 & None\\
			2 & $\text{Re}[\wcomplex(\kcomplex)]>0,\text{Re}[\bar{\wcomplex}(\kcomplex)]\geq0, \kcomplex\in\mathcal{D}$ \\
			& for some region $\mathcal{D}\subset\CC$ such that $ \RR\subset\mathcal{D}$ \\
			3 & There exist $n$ distinct solutions $\lambda_l(\kcomplex),l=0,\ldots,n-1$
			\\ & to $w_{\text{Re}}(\kcomplex)=w_{\text{Re}}(\lambda_l(\kcomplex)), w_{\text{Im}}(\kcomplex)=w_{\text{Im}}(\lambda_l(\kcomplex))$ \\
			4 & None \\
			\hline
		\end{tabular}
	\end{table}

\begin{theorem}\label{theo:u-opt-effective-heat}
	Consider the reaction-diffusion equation in \eqref{eq:heat} with given Dirichlet boundary conditions $g_0(t),h_0(t)$. Optimal control $u^*(x,t)$ in the domain $\Omega$ with $T\to\infty$ is given by
	\begin{equation}\label{eq:u-opt-effective}
		\begin{aligned}
			&u^*(x,t) = -\int_{-\infty}^{\infty}e^{\im \kcomplex x-\wcomplex(\kcomplex)t}\, \hat{p}(\kcomplex)\statehat_0(\kcomplex)\dfrac{\diff \kcomplex}{2\pi} + \int_{\partial\D^+}\dfrac{e^{-\wcomplex(\kcomplex)t}\hat{p}(\kcomplex)}{\Delta(\kcomplex)}U_1(\kcomplex,x,t)\dfrac{\diff \kcomplex}{2\pi}- \int_{\partial\D^+}\dfrac{U_2(\kcomplex,x,t)}{\Delta(\kcomplex)} \frac{\diff \kcomplex}{2\pi},
		\end{aligned}
	\end{equation}
	where $\Delta(\kcomplex)=e^{\im \kcomplex L} - e^{-\im \kcomplex L}, \partial \D^+=\{\kcomplex\in\CC^+:\kcomplex=\abs{\kcomplex}e^{\im \theta}, \theta = \frac{1}{4}\arctan(1/c) \text{ or }-\frac{1}{4}\arctan(1/c)+\pi\}$, and
	\begin{equation*}
		\begin{split}		
			U_1(\kcomplex,x,t):=&\, 2\sin(\kcomplex x)\big[\im e^{\im \kcomplex L}\statehat_0(\kcomplex) - 2\kcomplex(\tilde{h}_0(\wcomplex(\kcomplex),t) - \hat{p}(\kcomplex)\underline{h}_0(\wcomplex(\kcomplex),t))\big]\\
			&+2\sin(\kcomplex(L-x))\big[\im \statehat_0(-\kcomplex) - 2\kcomplex(\tilde{g}_0(\wcomplex(\kcomplex),t)-\hat{p}(\kcomplex)\underline{g}_0(\wcomplex(\kcomplex),t))\big], 	
		\end{split}
	\end{equation*}
	\begin{equation*}
		\begin{split}
			U_2(\kcomplex,x,t):=&2\sin(\kcomplex x) \big[-2\kcomplex(h_0(t) - \hat{p}(\kcomplex)\underline{h}_0(\wcomplex(\kcomplex),t))\big] + 2\sin(\kcomplex(L-x))\big[-2\kcomplex(g_0(t) - \hat{p}(\kcomplex)\underline{g}_0(\wcomplex(\kcomplex),t))\big].
		\end{split}
	\end{equation*}
\end{theorem}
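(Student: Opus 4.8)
The plan is to specialize the four-step procedure of \Cref{sec:general-procedures} to \eqref{eq:heat}. First I would record the problem data: since $w_{\text{Im}}\equiv0$ and $w_{\text{Re}}(\kcomplex)=\kcomplex^2+c$, \Cref{ex:w-real-imaginary} gives $\wcomplex(\kcomplex)=\bar{\wcomplex}(\kcomplex)=\sqrt{(\kcomplex^2+c)^2+1}$ and $\hat{p}(\kcomplex)=-(\kcomplex^2+c)+\sqrt{(\kcomplex^2+c)^2+1}$, while \Cref{example-X-function} gives $c_0(\kcomplex)=\im\kcomplex$ and $c_1(\kcomplex)=1$; the given data are the Dirichlet values $g_0,h_0$ and the unknowns are the Neumann values $g_1,h_1$. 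Starting from the decomposition \eqref{eq:u-opt-original}, the direct term $u^*_3$ vanishes on the open interval by \eqref{eq:u-3-dirac-delta}, so $u^*=u^*_1+u^*_2$, which equals the inverse transform of the feedback law \eqref{eq:u-opt-transform} (using $\wcomplex=w+\hat{p}$). Substituting the closed-loop global relation \eqref{eq:global-relation-subsitute-u} then isolates the initial-condition term $-\int_{-\infty}^{\infty}e^{\im\kcomplex x-\wcomplex(\kcomplex)t}\hat{p}(\kcomplex)\statehat_0(\kcomplex)\,\frac{\diff\kcomplex}{2\pi}$, while the given Dirichlet data enter through $\check{g}_0,\check{h}_0$ of \eqref{eq:check-g-h} and the memory integrals $\underline{g}_0,\underline{h}_0$, which will assemble into $U_1$ and $U_2$.

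Next I would carry out the contour analysis for $\wcomplex(\kcomplex)=\sqrt{(\kcomplex^2+c)^2+1}$. I locate its branch points as the roots of $(\kcomplex^2+c)^2+1=0$, i.e. $\kcomplex^2=-c\pm\im$, and verify that the rays $\theta=\tfrac14\arctan(1/c)$ and $\theta=\pi-\tfrac14\arctan(1/c)$ bounding $\D^+$ lie below the branch points in $\CC^+$ and inside the region $\{\text{Re}[\wcomplex(\kcomplex)]>0\}$; since $\bar{\wcomplex}=\wcomplex$ here, this secures the requirement $\text{Re}[\wcomplex]>0,\ \text{Re}[\bar{\wcomplex}]\geq0$ of \Cref{table:sample-procedure}. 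With analyticity and exponential decay in hand, \Cref{lemm:cauchy} and \Cref{lemm:jordan} deform the $\tilde{g}$-type integral from $\RR$ to $\partial\D^+$ and the $\tilde{h}$-type integral to $\partial\D^-$.

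The decisive step is eliminating $g_1,h_1$. Since $n=2$, the invariance $w_{\text{Re}}(\kcomplex)=w_{\text{Re}}(\lambda(\kcomplex))$ has exactly the two solutions $\lambda_0(\kcomplex)=\kcomplex$ and $\lambda_1(\kcomplex)=-\kcomplex$. Evaluating \eqref{eq:global-relation-subsitute-u} at $\kcomplex$ and at $-\kcomplex$, and using that $\wcomplex,\hat{p},\check{g}_0,\check{h}_0,\check{g}_1,\check{h}_1$ are even in $\kcomplex$ while $c_0(-\kcomplex)=-\im\kcomplex$, yields the linear system
\begin{align*}
	\check{g}_1-e^{-\im\kcomplex L}\check{h}_1 &= \statehat_0(\kcomplex)-e^{\wcomplex t}\statehat(\kcomplex,t)-\im\kcomplex\big(\check{g}_0-e^{-\im\kcomplex L}\check{h}_0\big),\\
	\check{g}_1-e^{\im\kcomplex L}\check{h}_1 &= \statehat_0(-\kcomplex)-e^{\wcomplex t}\statehat(-\kcomplex,t)+\im\kcomplex\big(\check{g}_0-e^{\im\kcomplex L}\check{h}_0\big),
\end{align*}
whose determinant is $-\Delta(\kcomplex)$ with $\Delta(\kcomplex)=e^{\im\kcomplex L}-e^{-\im\kcomplex L}$. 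Solving by Cramer's rule expresses $\check{g}_1,\check{h}_1$ through the given $\statehat_0(\pm\kcomplex)$, $\check{g}_0,\check{h}_0$ and the residual $e^{\wcomplex t}\statehat(\pm\kcomplex,t)$; the weights $e^{\pm\im\kcomplex L}$ recombine into the factors $2\sin(\kcomplex x)$ and $2\sin(\kcomplex(L-x))$ appearing in $U_1,U_2$. Substituting back and folding the $\partial\D^-$ integral onto $\partial\D^+$ via the change of variables $\kcomplex\to-\kcomplex$ collects every contribution on $\partial\D^+$ and produces \eqref{eq:u-opt-effective}.

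Finally I would show the residual integrals carrying $\statehat(\pm\kcomplex,t)$ vanish: multiplied by $e^{\im\kcomplex x}$ with $0<x<L$ and by $e^{-\wcomplex(\kcomplex)t}$, the closed-loop transform is analytic and decaying in $\CC^+\setminus\D^+$, so closing the contour and invoking \Cref{lemm:cauchy}--\Cref{lemm:jordan} as in \cite[Proposition 1.2]{fokas2008unified} removes them, leaving only the given initial and boundary data. I expect the main obstacle to be exactly this last point together with the contour placement: one must check that $\sqrt{(\kcomplex^2+c)^2+1}$---whose branch structure is more intricate than the $\sqrt{\kcomplex^{2n}+1}$ of the even-order case---stays analytic with positive real part along $\partial\D^+$, and that the $\statehat(\pm\kcomplex,t)$ contributions decay fast enough for Jordan's Lemma to apply uniformly in the wedge between $\partial\D^+$ and $\RR$.
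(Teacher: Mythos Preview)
Your proposal is correct and follows essentially the same route as the paper: specialize the four-step procedure, analyze the branch structure of $\sqrt{(\kcomplex^2+c)^2+1}$ to place $\partial\D^\pm$, fold $\partial\D^-$ onto $\partial\D^+$ via $\kcomplex\mapsto-\kcomplex$, solve the $2\times2$ system coming from the global relation at $\pm\kcomplex$ for $\check{g}_1,\check{h}_1$, and eliminate the residual $\statehat(\pm\kcomplex,t)$-integrals by Cauchy--Jordan. One refinement to your last step: in the residual the factor $e^{\wcomplex(\kcomplex)t}$ from the global relation cancels against $e^{-\wcomplex(\kcomplex)t}$, so the vanishing argument does not rely on $\text{Re}[\wcomplex]>0$; the paper closes the contour \emph{inside} $\D^+$ (not in the wedge $\CC^+\setminus\D^+$), and the required decay comes solely from the bounded-interval structure of $\statehat(\pm\kcomplex,t)$, which is entire in $\kcomplex$ and exponentially small throughout $\CC^+$.
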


\begin{remark}
	The expression \eqref{eq:u-opt-effective} reduces to \eqref{eq:optimal-control-heat-integral} for the heat equation with homogeneous Dirichlet boundary conditions.
\end{remark}

\begin{proof}
	Following \eqref{eq:u-opt-original}, $u^*(x,t)$ is equal to the time derivative of
	\begin{equation}\label{eq:state-expression}
		\begin{aligned}
			\state(x,t)&=\int_{-\infty}^{\infty}e^{\im \kcomplex x}\statehat(\kcomplex,t)\frac{\diff \kcomplex}{2\pi}=\int_{-\infty}^{\infty}e^{\im \kcomplex x-\wcomplex(\kcomplex)t}\, \statehat_0(\kcomplex)\dfrac{\diff \kcomplex}{2\pi}- \int_{\partial\D^+}\dfrac{e^{-\wcomplex(\kcomplex)t}}{\Delta(\kcomplex)}U_1(\kcomplex,x,t) \dfrac{\diff \kcomplex}{2\pi},
		\end{aligned}
	\end{equation}
	plus
	\begin{equation}\label{eq:state-k-2-after-deform}
		\begin{aligned}
			\int_{-\infty}^{\infty}e^{\im \kcomplex x}w(\kcomplex)\statehat(\kcomplex,t)\frac{\diff \kcomplex}{2\pi}&=\int_{-\infty}^{\infty}e^{\im \kcomplex x-\wcomplex(\kcomplex)t}\, w(\kcomplex)\statehat_0(\kcomplex)\dfrac{\diff \kcomplex}{2\pi} - \int_{\partial\D^+}\dfrac{e^{-\wcomplex(\kcomplex)t}w(\kcomplex)}{\Delta(\kcomplex)}U_1(\kcomplex,x,t) \dfrac{\diff \kcomplex}{2\pi},
		\end{aligned}
	\end{equation}
	and plus the following integral that vanishes,
	\begin{equation}\label{eq:inverse-transform-boundary}
		\int_{-\infty}^{\infty}e^{\im k x}v(k,t)\frac{\diff k}{2\pi} = 0, \quad 0<x<L.
	\end{equation}
% \begin{remark}
% 	\eqref{eq:D-N-map} is also known as the Dirichlet-to-Neumann map in boundary value problems for PDEs. Conversely,
% 	if the Neumann boundary values are given, we can use a procedure similar to the proof and derive the expression for Dirichlet boundary values that depend on the given initial and Neumann boundary values.
% \end{remark}

	We will only prove \eqref{eq:state-k-2-after-deform} and divide the proof into three steps, analogously to \Cref{sec:general-procedures}. 
	%The proof of each step can be found in the extended version.
	The proof of \eqref{eq:state-expression} follows similarly without the extra $w(k)$ term. 
	Equation \eqref{eq:inverse-transform-boundary} follows from \eqref{eq:u-3-dirac-delta}.
	
\subsubsection{The integral representation}
 Applying \eqref{eq:u-2-integral-real} to the reaction-diffusion equation, we obtain
 \begin{equation}\label{eq:state-before-deform}
 	\begin{aligned}
 		u^*_2(x,t) &= \int_{-\infty}^{\infty}e^{\im k x-\wcomplex(k)t}w(k)\statehat_0(k)\frac{\diff k}{2\pi} -  \int_{-\infty}^{\infty}e^{\im k x-\wcomplex(k)t}w(k)\tilde{g}(k,t)\frac{\diff k}{2\pi}+ \int_{-\infty}^{\infty}e^{\im k(x-L)-\wcomplex(k)t}w(k)\tilde{h}(k,t) \frac{\diff k}{2\pi},\\
 	\end{aligned}
 \end{equation}
 where 
 \begin{align*}
 	\tilde{g}(k,t)=&\im k(\tilde{g}_0(\wcomplex(k),t) - \hat{p}(k)\underline{g}_0(\wcomplex(k),t)) + \tilde{g}_1(\wcomplex(k),t) - \hat{p}(k)\underline{g}_1(\wcomplex(k),t),\\
 	\tilde{h}(k,t)=&\im k(\tilde{h}_0(\wcomplex(k),t) - \hat{p}(k)\underline{h}_0(\wcomplex(k),t)) + \tilde{h}_1(\wcomplex(k),t) - \hat{p}(k)\underline{g}_1(\wcomplex(k),t).
 \end{align*}

\subsubsection{Contour deformation}
Following \Cref{sec:contour-deformation}, we first examine the analyticity of $\wcomplex(\kcomplex)=\sqrt{(\kcomplex^2+c)^2+1}$.
Since $\wcomplex(\kcomplex)$ is a square root function in $\CC$, we need to find the branch points and the corresponding branch cuts.
\begin{figure}[tb]
	\centering
	\includegraphics[width=0.35\linewidth]{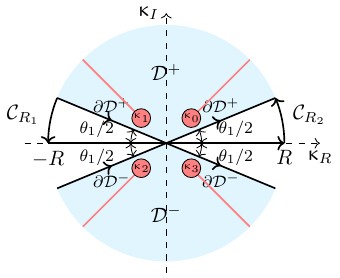}
	\caption{\sf Contour deformation for \eqref{eq:state-before-deform}. The red lines are branch cuts of $\wcomplex(\kcomplex)$.}
	\label{fig:contour-deformation-heat}
\end{figure}
The branch points include infinity and points where $\wcomplex(\kcomplex)=0$, i.e., $\kcomplex^2+c=\pm\im$.
Using the definition $\kcomplex=\abs{\kcomplex}e^{\im\theta}$, these points are $\kcomplex_m=(c^2+1)^{1/4}e^{\im\theta_m},m=1,\ldots,4$, where $\theta_1=\frac{1}{2}\arctan(1/c),\theta_2=-\frac{1}{2}\arctan(1/c),\theta_3=-\frac{1}{2}\arctan(1/c)+\pi,\theta_4=\frac{1}{2}\arctan(1/c)+\pi$. Then, the branch cuts can be chosen as the four rays connecting $\kcomplex_m$ and infinity with arguments $\theta_m,m=1,\ldots,4$, see the red lines in \Cref{fig:contour-deformation-heat}.
The function $\wcomplex(\kcomplex)$ is analytic in $\CC$, excluding the branch cuts. 
Therefore, we must avoid the four branch cuts for contour deformation.

Then, we consider the integral with $\tilde{g}(\kcomplex,t)$ in \eqref{eq:state-before-deform} and rewrite the integrand as $e^{i\kcomplex x}G(\kcomplex,t)$ with $G(\kcomplex,t) := \ e^{-\wcomplex(\kcomplex)t}w(\kcomplex)\tilde{g}(\kcomplex,t)$
Following the discussion in \Cref{sec:contour-deformation}, $G(\kcomplex,t)$ decays exponentially provided that $\text{Re}[\wcomplex(\kcomplex)] >0$. 
The real part of the square root of any complex variable $a+b\im$ is given by $\sqrt{\frac{1}{2}(\sqrt{a^2+b^2}+a)}$. 
Therefore, the real part is always nonnegative and equals zero only if $b=0$. Using the definition $\kcomplex=\kcomplex_R+\im \kcomplex_I$, $\text{Im}[(\kcomplex^2+c)^2 + 1]=0$ is equivalent to $\kcomplex=0$ or $\kcomplex_I=\pm\sqrt{\kcomplex_R^2+c}$. If $\kcomplex=0$, $\text{Re}[\wcomplex(\kcomplex)] = \sqrt{c^2+1}>0$. If $\kcomplex_I=\pm\sqrt{\kcomplex_R^2+c}\sim \pm \kcomplex_R$, as $\kcomplex\to\infty$, the curve $\text{Im}[(\kcomplex^2+c)^2 + 1]=0$ approaches the four rays that connect the origin and infinity defined by $\kcomplex_I=\pm \kcomplex_R$. The arguments of these four rays are $\pm\pi/4$ and 
$\pm3\pi/4$. Since $\theta_1=\frac{1}{2}\arctan(1/c)\leq\pi/4$ with equality holds only when $c=0$, $ \text{Re}[\wcomplex(\kcomplex)] >0$ as $\kcomplex=\abs{\kcomplex}e^{\im\theta}\to\infty$ when  $\theta_2<\theta<\theta_1$ and $\theta_3<\theta<\theta_4$.

Without loss of generality, let $\D^+=\{\kcomplex\in\CC^+:\kcomplex=|\kcomplex|e^{\im\theta}, \theta_1/2<\theta<-\theta_1/2+\pi\}$, $\partial \D^+=\{\kcomplex\in\CC^+:\kcomplex=\abs{\kcomplex}e^{\im \theta}, \theta = \theta_1/2 \text{ or } -\theta_1/2+\pi\}$ with the direction from left to right.
The integrand $e^{\im \kcomplex x}G(\kcomplex,t)$ is analytic in $\CC^+\setminus\D^+$ since the four branch cuts are avoided and the integrand is bounded, see \Cref{fig:contour-deformation-heat}.
Consider a contour $\C = [-R,R]\cup\C_{R_2}\cup\partial\D^+\cup\C_{R_1}$ shown in \Cref{fig:contour-deformation-heat}.
From \Cref{lemm:cauchy} and the analyticity of the integrand $e^{i \kcomplex x}G(\kcomplex,t)$ in the domain enclosed by $\C$, we have \eqref{eq:cauchy-inegral}.
Taking the limit $R\to\infty$, the integrals over $\C_{R_1}$ and $\C_{R_2}$ vanish according to \Cref{lemm:jordan}. Repeating the analysis for the integral with $\tilde{h}(\kcomplex,t)$ in \eqref{eq:state-before-deform} and replacing $\partial\D^+$ by $\partial \D^-=\{\kcomplex\in\CC^-:\kcomplex=\abs{\kcomplex}e^{\im \theta}, \theta = -\theta_1/2\text{ or } \theta_1/2+\pi\}$ with the direction from left to right, we obtain \eqref{eq:u-2-integral-complex}. Using the transformation $\kcomplex\to-\kcomplex$ for the integral along $\partial\D^-$, we find
\begin{equation}\label{eq:state-after-deform-D+}
	\begin{aligned}
		&u^*_2(x,t)=\int_{-\infty}^{\infty}e^{\im \kcomplex x-\wcomplex(\kcomplex)t}\, w(\kcomplex)\statehat_0(\kcomplex)\dfrac{\diff \kcomplex}{2\pi} -\int_{\partial\D^+} e^{-\wcomplex(\kcomplex)t}w(\kcomplex)\left[e^{\im \kcomplex x}\tilde{g}(\kcomplex,t)+e^{\im \kcomplex(L-x) }\tilde{h}(\kcomplex,t) \right]\dfrac{\diff \kcomplex}{2\pi}.
	\end{aligned}
\end{equation}
\begin{remark}
	The transformation from $\partial\D^-$ to $\partial\D^+$ is not necessary and is used here for convenience later to eliminate unknown boundary values.
\end{remark}
\subsubsection{Elimination of unknown boundary values}
\label{sec:elimination-unknown-boundary-reaciton}
In this step, we show that the unknown boundary values can be eliminated in \eqref{eq:state-after-deform-D+}.
We use the reaction-diffusion equation to illustrate the details omitted in \Cref{sec:general-procedures-elimination}. 
Note that $\wcomplex(-\kcomplex)=\wcomplex(\kcomplex)$ and $\hat{p}(-\kcomplex)=\hat{p}(\kcomplex)$. Then, we obtain two global relations from \eqref{eq:global-relation-invariant-lambda} with $\lambda_0(\kcomplex)=\kcomplex$ and $\lambda_1(\kcomplex)=-\kcomplex$.
In the case that $g_0(t), h_0(t)$ are given, from \eqref{eq:global-relation-invariant-lambda} we have
\begin{equation}\label{eq:global-relation-two}
	\begin{aligned}
		e^{\wcomplex(\kcomplex)t}\statehat(\kcomplex,t)&=G_0(\kcomplex,t) - (\tilde{g}_1 - \hat{p}(\kcomplex)\underline{g}_1) + e^{-\im \kcomplex L}(\tilde{h}_1 -\hat{p}(\kcomplex) \underline{h}_1), \\ 
		e^{\wcomplex(\kcomplex)t}\statehat(-\kcomplex,t)&=G_0(-\kcomplex,t) - (\tilde{g}_1 - \hat{p}(\kcomplex)\underline{g}_1)+ e^{\im \kcomplex L}(\tilde{h}_1 -\hat{p}(\kcomplex) \underline{h}_1),
	\end{aligned}
\end{equation}
where $G_0(\kcomplex,t)$ containing given initial and boundary conditions is defined by
\begin{equation*}
	G_0(\kcomplex,t) := \statehat_0(\kcomplex) - \im \kcomplex(\tilde{g}_0 -\hat{p}(\kcomplex) \underline{g}_0) + \im \kcomplex e^{-\im \kcomplex L}(\tilde{h}_0 -\hat{p}(\kcomplex) \underline{h}_0).
\end{equation*}
Solving \eqref{eq:global-relation-two} for terms containing the unknown boundary values, we find
\begin{equation*}
	\begin{aligned}
		\tilde{g}_1 - \hat{p}(\kcomplex)\underline{g}_1&= \frac{1}{\Delta(\kcomplex)}\left[e^{\im \kcomplex L}G_0(\kcomplex,t)-e^{-\im \kcomplex L}G_0(-\kcomplex,t)\right] + e^{\wcomplex(\kcomplex)t}\acute{g}_1,\\
		\tilde{h}_1 - \hat{p}(\kcomplex)\underline{h}_1&= \frac{1}{\Delta(\kcomplex)}\left[G_0(\kcomplex,t) - G_0(-\kcomplex,t)\right] + e^{\wcomplex(\kcomplex)t}\acute{h}_1,
	\end{aligned}
\end{equation*}
where
	$\Delta(\kcomplex)= e^{\im \kcomplex L} - e^{-\im \kcomplex L},$
and $\acute{g}_1,\acute{h}_1$ contain the unknown terms $\statehat(\kcomplex,t)$ and $\statehat(-\kcomplex,t)$ as follows:
\begin{align}
			\acute{g}_1 &= -\frac{1}{\Delta(\kcomplex)}\left[e^{\im \kcomplex L}\statehat(\kcomplex,t) - e^{-\im \kcomplex L}\statehat(-\kcomplex,t)\right], \label{eq:state-transform-unknown}\\
		\acute{h}_1 &= -\frac{1}{\Delta(\kcomplex)}\left[\statehat(\kcomplex,t) - \statehat(-\kcomplex,t)\right].
\end{align}
Inserting the expressions for $\tilde{g}_1-\hat{p}(\kcomplex)\underline{g}_1$ and $\tilde{h}_1-\hat{p}(\kcomplex)\underline{h}_1$ in \eqref{eq:state-after-deform-D+} and simplifying, we find \eqref{eq:state-k-2-after-deform} plus the following term that vanishes,
\begin{equation}\label{eq:vanish-integral}
	\int_{\partial\D^+}\left[e^{\im \kcomplex x}w(\kcomplex)\acute{g}_1(\kcomplex,t) - e^{\im \kcomplex(L-x)}w(\kcomplex)\acute{h}_1(\kcomplex,t)\right]\dfrac{\diff \kcomplex}{2\pi}.
\end{equation}
\begin{figure}[tb]
	\centering
	\includegraphics[width=0.35\linewidth]{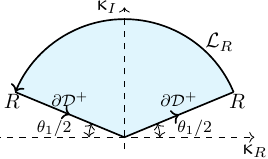}
	\caption{\sf Contour deformation for \eqref{eq:fokas-vanishing-integral}}
	\label{fig:contour-deformation-D+}
\end{figure}

To show that the integral \eqref{eq:vanish-integral} vanishes, we analyze the behavior of the integrand as $\kcomplex\to \infty$ to use \Cref{lemm:jordan}. 
Expanding $\Delta(\kcomplex)$ in \eqref{eq:state-transform-unknown}, we have $\acute{g}_1 = (-e^{2\im \kcomplex L}\statehat(\kcomplex,t)+\statehat(-\kcomplex,t))/(e^{2\im \kcomplex L}-1)$.
Using the definition $\kcomplex=\kcomplex_R+\im\kcomplex_I$, the exponent $e^{\im \kcomplex\lambda}=e^{-\kcomplex_I\lambda}e^{\im \kcomplex_R\lambda}$ decays exponentially for $\kcomplex\in\CC^+$ with $\lambda>0$. 
Therefore, the denominator in $\acute{g}_1$ satisfies $\lim_{\kcomplex\to\infty} e^{2\im \kcomplex L}-1 = -1$ since $L >0$. 
For the nominator in $\acute{g}_1$, the first term $e^{2\im \kcomplex L}\statehat(\kcomplex,t) = e^{\im \kcomplex L}\int_{0}^{L}e^{\im \kcomplex (L-x)}\state(x,t)\diff x$ decays exponentially since $L>0,L-x>0$.
The second term $\statehat(-\kcomplex,t)$ also decays exponentially since it contains the exponent $e^{\im \kcomplex x}$ with $x>0$.
Therefore, $\acute{g}_1$ decays exponentially for $\kcomplex\in\CC^+$. 

Consider a closed curve $\mathcal{L} = \mathcal{L}_{\partial \D^+}\cup\mathcal{L}_{R}$, where $\mathcal{L}_{\partial\D^+} = \partial\D^+\cap\{\kcomplex:|\kcomplex|<R\}$ and $\mathcal{L}_R=\{\kcomplex\in\D^+:|\kcomplex|=R\}$, see \Cref{fig:contour-deformation-D+}. 
Note that $\acute{g}$ in \eqref{eq:state-transform-unknown} is analytic in $\CC^+$.
From \Cref{lemm:cauchy}, we have\footnote{There exists a pole $\kcomplex=0$ in $\acute{g}(\kcomplex,t)$ due to the fraction $1/\Delta(\kcomplex)$. 
Since $\lim_{\kcomplex\to0}w(\kcomplex)\acute{g}_1(\kcomplex,t)=0$, this pole is a removable singularity for the integrand, i.e., the Taylor series expansion that approximates the integrand near $\kcomplex=0$ is well-defined for all values of $\kcomplex$, including $\kcomplex=0$. The removable singularity can be ignored in the integral.}
\begin{equation}\label{eq:fokas-vanishing-integral}
	\begin{aligned}
		\int_{\mathcal{L}} e^{\im \kcomplex x}w(\kcomplex)\acute{g}_1(\kcomplex,t) \diff \kcomplex
		= \int_{\mathcal{L}_{\partial\D^+}}\!\!\! e^{\im \kcomplex x}w(\kcomplex)\acute{g}_1(\kcomplex,t) \diff \kcomplex  + \int_{\mathcal{L}_R}\!\!e^{\im \kcomplex x}w(\kcomplex)\acute{g}_1(\kcomplex,t) \diff \kcomplex = 0.
	\end{aligned}
\end{equation}
The integral along $\mathcal{L}_R$ vanishes as $R\to\infty$ according to \Cref{lemm:jordan} and the fact that $w(\kcomplex)\acute{g}_1\to0$ as $R\to\infty$ since the exponential decay of $\acute{g}_1$ cancels the polynomial growth of $w(\kcomplex)$. 
From \eqref{eq:fokas-vanishing-integral}, the integral along $\mathcal{L}_{\partial\D^+}$ also goes to zero as $R\to\infty$. 
Since $\mathcal{L}_{\partial\D^+}$ converges to $\partial\D^+$ as $R\to\infty$, the contribution of $\acute{g}_1$ in the integral \eqref{eq:vanish-integral} vanishes. 
Similarly, the contribution of $\acute{h}_1$ in the integral \eqref{eq:vanish-integral} also vanishes. 
Therefore, the value of the integral \eqref{eq:vanish-integral} is zero. 
\end{proof}

\section{Optimal control in feedback form}\label{sec:feedback-form}

The integral representation \eqref{eq:u-opt-effective} can be readily computed since it only depends on the given initial and boundary conditions. However, it is unclear from the expression how optimal control depends on the state variable $\state(x,t)$. In this section, we rewrite \eqref{eq:u-opt-effective} in a state-feedback convolution form. 
% Then, we analyze the structure of the resulting feedback kernel function.

For the unbounded domain $-\infty<x<\infty$, \eqref{eq:u-opt-transform} reduces to $\hat{u}^*(k,t) = -\hat{p}(k)\statehat(k,t)$. By the convolution theorem for the Fourier transform, 
$$u^*(x,t)= -\int_{-\infty}^{\infty}e^{\im k x} \hat{p}(k)\statehat(k,t) = -\int_{-\infty}^{\infty}K(x-\xi)\state(\xi)\diff \xi,$$ 
where $K$ is the inverse Fourier transform of $\hat{p}(k)$.
The kernel $K$ reveals the feedback structure of optimal control since it determines the dependence of $u^*(x,t)$ on the state $\state(\xi,t)$ at different locations $\xi$.
For the bounded domain $0<x<L$, integrals in optimal control \eqref{eq:u-opt-effective} are not on the real line, and the convolution theorem does not apply directly.
It turns out that \eqref{eq:u-opt-effective} can be rewritten as a Fourier series to which the convolution theorem applies. The following theorem illustrates the result.

\begin{theorem}
	The integral representation \eqref{eq:u-opt-effective} is equivalent to the following feedback form:
	\begin{equation}\label{eq:u-opt-convolution}
		\begin{aligned}
			u^*_{\text{conv}}(x,t) = \int_{0}^{L}\left(\Gamma(x,\xi) - \Gamma(x,-\xi)\right)\state(\xi,t)\frac{\diff \xi}{2L}+ \sum_{m=1}^{\infty}\sin(k_m x)\underline{b}_m(t),
		\end{aligned}
	\end{equation}
	where $k_m = \pi m/L,m=1,2,\ldots,$
	\begin{equation}\label{eq:Gamma-convolution}
		\Gamma(x,\xi) = \sum_{m=-\infty}^{\infty}\hat{p}(m)e^{\im m(x-\xi)},
	\end{equation}
	$\hat{p}(m) = - (m^2+c) + \sqrt{(m^2+c)^2 + 1}$, and
	\begin{equation}\label{eq:underline-bn}
		\begin{aligned}
			\underline{b}_m(t) &= 2k_m\big[(-1)^m(h_0(t)-\hat{p}(k_m)\underline{h}_0(\wcomplex(k_m),t))  -(g_0(t)-\hat{p}(k_m)\underline{g}_0(\wcomplex(k_m),t))\big].
		\end{aligned}
	\end{equation}
\end{theorem}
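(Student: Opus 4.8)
The plan is to pass from the contour representation \eqref{eq:u-opt-effective} to the series \eqref{eq:u-opt-convolution} by deforming the two integrals over $\partial\D^+$ back onto the real line and collecting the poles introduced by the factor $1/\Delta(\kcomplex)$. Writing $\Delta(\kcomplex)=e^{\im\kcomplex L}-e^{-\im\kcomplex L}=2\im\sin(\kcomplex L)$, these poles are simple and sit exactly at $\kcomplex=k_m=\pi m/L$, $m\in\ZZ$. I would deform $\partial\D^+$ down to $\RR$ with small semicircular indentations around each $k_m$; by Cauchy's theorem (\Cref{lemm:cauchy}) together with Jordan's lemma (\Cref{lemm:jordan}) applied to the connecting arcs, each $\partial\D^+$ integral becomes a principal-value integral over $\RR$ plus $\pi\im$ times the sum of residues at the $k_m$ (the indentations contribute half-residues). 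This is the analogue, for inhomogeneous boundary data, of the reduction carried out for the homogeneous case.

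Next I would dispose of the real-line principal-value pieces. The terms of $U_1$ and $U_2$ carrying $\statehat_0(-\kcomplex)$ can be rewritten as separate integrals and, after the change of variables $\kcomplex\to-\kcomplex$, their principal-value parts coincide with the first (real-line) integral of \eqref{eq:u-opt-effective} and cancel it; the same symmetry argument removes the remaining principal-value contributions. Only the residue sums at the $k_m$ then survive, and the problem reduces to evaluating these residues. Since each integrand has the form $h(\kcomplex)/\Delta(\kcomplex)$, its residue at $k_m$ is $h(k_m)/\Delta'(k_m)$ with $\Delta'(k_m)=2\im L(-1)^m$; and because $\sin(k_m L)=0$, $\cos(k_m L)=(-1)^m$, one has the simplification $\sin(k_m(L-x))=-(-1)^m\sin(k_m x)$.

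The residue sums split into a part proportional to $\hat{p}(k_m)$ and $\statehat_0(\pm k_m)$ and a part carrying the boundary data $g_0,h_0,\underline{g}_0,\underline{h}_0$. For the first part I would re-express $e^{-\wcomplex(k_m)t}\statehat_0(\pm k_m)$ through the Fourier sine coefficients of the state, using that $\state(\xi,t)=\tfrac{2}{L}\sum_{m\geq1}\sin(k_m\xi)\int_0^L\sin(k_m\xi')\state(\xi',t)\diff\xi'$ holds in the interior $0<\xi<L$; the pairing $e^{\im k_m(x-\xi)}-e^{\im k_m(x+\xi)}$ that results is precisely $\Gamma(x,\xi)-\Gamma(x,-\xi)$ from \eqref{eq:Gamma-convolution}, so this part assembles into the Toeplitz-plus-Hankel convolution $\int_0^L(\Gamma(x,\xi)-\Gamma(x,-\xi))\state(\xi,t)\tfrac{\diff\xi}{2L}$. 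For the second part, collecting the $\sin(k_m x)$-coefficients and applying $\sin(k_m(L-x))=-(-1)^m\sin(k_m x)$ to the $\tilde{h}_0,\underline{h}_0,\tilde{g}_0,\underline{g}_0$ terms reproduces the additive series $\sum_{m\geq1}\sin(k_m x)\underline{b}_m(t)$ with $\underline{b}_m$ as in \eqref{eq:underline-bn}.

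The hard part will be the rigorous evaluation of the residue integrals in the first part, since their integrands contain $\wcomplex(\kcomplex)=\sqrt{(\kcomplex^2+c)^2+1}$ and are therefore multi-valued. To evaluate integrals such as $\int\hat{p}(\kcomplex)e^{\im\kcomplex(\cdot)}(\kcomplex-k_m)^{-1}\diff\kcomplex$ I would introduce the four branch cuts of $\wcomplex$ (the red rays of \Cref{fig:contour-deformation-heat}), deform around them, and show the branch-cut integrals vanish because $\wcomplex\to0$ at the branch points while the large arcs vanish by Jordan's lemma. I would also need to justify interchanging summation and integration, which follows from $\hat{p}(k)=O(k^{-2})$ and hence the integrability of $\hat{p}(k)\statehat(k,t)$, and to note that the apparent pole of the integrand at $\kcomplex=0$ coming from $1/\Delta(\kcomplex)$ is in fact removable.
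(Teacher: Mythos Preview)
Your overall strategy matches the paper's: deform $\partial\D^+$ to the real axis with small semicircles around the simple zeros $k_m$ of $\Delta$, write each integral as a principal value plus a half-residue sum, and check that the principal-value pieces cancel the first integral of \eqref{eq:u-opt-effective}. The residue formula $h(k_m)/\Delta'(k_m)$ with $\Delta'(k_m)=2\im L(-1)^m$ is also what the paper uses.

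Where you diverge, and where the gap is, is in how you split the residues. You separate them into ``$\statehat_0$-terms $\to$ convolution'' and ``boundary-data terms $\to$ additive series''. That is not the correct decomposition. The additive series $\sum_m\sin(k_m x)\underline{b}_m(t)$ in \eqref{eq:underline-bn} involves only $g_0(t),h_0(t),\underline{g}_0,\underline{h}_0$ and comes entirely from the residues of the \emph{third} integral in \eqref{eq:u-opt-effective} (the one with $U_2$). The residues of the \emph{second} integral (the one with $U_1$) contribute both the $\statehat_0$ terms \emph{and} the $\tilde g_0-\hat p\,\underline g_0$, $\tilde h_0-\hat p\,\underline h_0$ terms; together these are exactly $\hat p(k_m)$ times the sine coefficients $\psi_m(t)$ of the state at time $t$. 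With inhomogeneous data you cannot identify $e^{-\wcomplex(k_m)t}\statehat_0(\pm k_m)$ alone with the sine coefficients of $\state(\cdot,t)$; the boundary transforms $\tilde g_0,\tilde h_0$ are part of those coefficients. Your proposed split would therefore not reproduce \eqref{eq:underline-bn} (which contains no $\tilde g_0,\tilde h_0$) and would leave uncancelled boundary terms in the convolution.

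The paper's remedy is much simpler than the ``hard part'' you anticipate. It applies the same residue computation to the state representation \eqref{eq:state-expression}, obtaining the sine series $\state(x,t)=\sum_m e^{\im k_m x}\psi_m(t)$ with explicit $\psi_m(t)$. One then observes that the $U_1$-residues give $\sum_m e^{\im k_m x}\hat p(k_m)\psi_m(t)$, i.e.\ the same coefficients multiplied by $\hat p(k_m)$, and invokes the convolution theorem for Fourier series on $(-L,L)$, finally using the odd symmetry $\state(-x,t)=-\state(x,t)$ to reduce the convolution to $(0,L)$. No separate evaluation of $\int\hat p(\kcomplex)e^{\im\kcomplex(\cdot)}(\kcomplex-k_m)^{-1}\diff\kcomplex$, and hence no branch-cut contour analysis for $\wcomplex$, is needed at this stage.
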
	

\begin{remark}\label{re:structure}
	The kernel function $\Gamma(x,\xi)-\Gamma(x,-\xi)$ determines the feedback structure of optimal control \eqref{eq:u-opt-convolution}, in particular the dependence on the state $\state(\xi,t)$ at different locations $\xi$. The kernel function is in the same form as the case of homogeneous boundary conditions discovered in \cite{epperlein2016spatially}. The only difference is the additive sine series term in \eqref{eq:u-opt-convolution}. 
    %The sine series vanishes when the boundary conditions are zero.

	In \cite{epperlein2016spatially}, it is proved that the kernel function exhibits a structure called \emph{Toeplitz plus Hankel}, where $\Gamma(x,\xi)$ has a diagonal structure, and $\Gamma(x,-\xi)$ has an antidiagonal structure. These properties will be illustrated numerically in \Cref{sec:numerical-experiment}.
\end{remark}
\begin{remark}
	Recall that \Cref{theo} provides optimal control of ODEs with exogenous inputs.
	The expression in \eqref{eq:u-opt-convolution} when $c=0$ can also be obtained by applying \Cref{theo} to ODE \eqref{eq:ode-separation-m} and rewriting the resulting sine series as a Fourier series in exponential form. 
	Recall that the derivation of \eqref{eq:ode-separation-m} requires an orthonormal basis for $\mathcal{L}^2((0,L),\RR)$ that varies for different PDEs.
    This is to be contrasted with a unified, i.e., independent of basis, approach to derive \eqref{eq:u-opt-convolution} as described in the proof of Theorem~\ref{theo:u-opt-effective-heat}. This approach is in the same spirit as the unified transform approach at the core of Section~\ref{sec:integral-representation-complex}, but differs in the specifics of contour deformation and related analysis. 
\end{remark}

%\begin{remark}
%	Rewriting the complex integrals in \eqref{eq:u-opt-convolution} as series relies on the fact that the integrands
%\end{remark}
\begin{figure}[tb]
	\centering
	\includegraphics[width=0.45\linewidth]{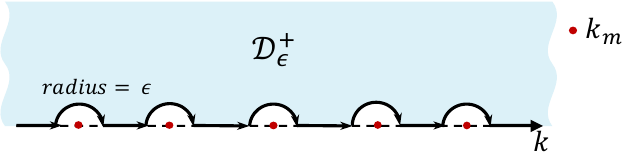}
	\caption{\sf Contour deformation for \eqref{eq:u-opt-effective-D-epsilon}}
	\label{fig:contour-deformation-D+-epsilon}
\end{figure}
\begin{proof}
We first reformulate the integral representation \eqref{eq:u-opt-effective} into an equivalent series representation. 
We start by noting that for the integrals along $\partial\D^+$ in \eqref{eq:u-opt-effective}, the denominator $\Delta(\kcomplex)=e^{\im \kcomplex L} - e^{-\im \kcomplex L} = 2\im\sin(\kcomplex L)$ has simple roots\footnote{A simple root of $\Delta(\kcomplex)$ at $\kcomplex=k_m$ means that $\Delta(k_m)=0$ but its first derivative is not equal to zero at $k_m$.} in $\RR$ at $k_m = \pi m/L,m\in\ZZ$.
The idea is to deform the complex contour $\partial\D^+$ back to the real line with small loops of radius $\epsilon\to0$ around these roots, denoted by $\partial\D_\epsilon^+$, which is the boundary of $\D_\epsilon^+$ shown in \Cref{fig:contour-deformation-D+-epsilon}.
Applying \Cref{lemm:cauchy,lemm:jordan} and following a similar analysis as in the Proof of \Cref{theo:u-opt-effective-heat}, the integrals along $\partial\D^+$ can be deformed to $\partial\D_\epsilon^+$.
We will restrict our attention to $\partial\D_\epsilon^+$ and \eqref{eq:u-opt-effective} can be rewritten as
\begin{equation}\label{eq:u-opt-effective-D-epsilon}
	\begin{aligned}
		u^*(x,t) = -\int_{-\infty}^{\infty}e^{\im \kcomplex x - \wcomplex(\kcomplex)t} \hat{p}(\kcomplex)\statehat_0(\kcomplex) \frac{\diff \kcomplex}{2\pi}+ \int_{\partial\D_\epsilon^+}f_1(x,t,\kcomplex)\frac{\diff \kcomplex}{2\pi} - \int_{\partial\D_\epsilon^+}f_2(x,t,\kcomplex)\frac{\diff \kcomplex}{2\pi},
	\end{aligned}
\end{equation}
where $f_1(x,t,\kcomplex)$ and $f_2(x,t,\kcomplex)$ are the integrands for the corresponding integrals in \eqref{eq:u-opt-effective}.
We will focus on the first two integrals in \eqref{eq:u-opt-effective-D-epsilon}. 
The third one can be evaluated similarly.
By \Cref{lemm:skip-pole}, the integral along the small loops around the roots of $\Delta(\kcomplex)$ can be evaluated as the sum of the residues at these roots.
For $\partial\D_\epsilon^+$, the small loops around each root have an angular width $\beta=-\pi$. 
Using \Cref{lemm:skip-pole} and \Cref{re:skip-pole}, the second integral in \eqref{eq:u-opt-effective-D-epsilon} can be evaluated as 
\begin{equation}\label{eq:principle-value-integral-and-residue}
	\begin{aligned}
		\int_{\partial\D_\epsilon^+}f_1(x,t,\kcomplex)\frac{\diff \kcomplex}{2\pi}=\dashint_{-\infty}^{\infty}f_1(x,t,k)\dfrac{\diff k}{2\pi} - \frac{\pi\im}{2\pi}\sum_{m=-\infty}^{\infty} \text{Residue}[f_1(x,t,k)]_{k=k_m},
	\end{aligned}
\end{equation}
where the principal value integral is given by
\begin{equation*}
	\dashint_{-\infty}^{\infty}f_1(x,t,k)\dfrac{\diff k}{2\pi} = \lim_{\substack{R\to\infty\\ M\to\infty\\\epsilon\to0}}
	\left(
		\int_{-R}^{k_{-M}-\epsilon} + \sum_{m=-M}^{M-1}\int_{k_{m}+\epsilon}^{k_{m+1}-\epsilon}
		+\int_{k_M+\epsilon}^{R}
	\right)f_1 \dfrac{\diff k}{2\pi}.
\end{equation*}
The reason we introduce the principal value integral above is that the integrand $f_1(x,t,k)$ has singularities at the real roots of $\Delta(k)$, i.e., $k=k_m$.

We first evaluate the principal value integral and show that it cancels out the first integral in \eqref{eq:u-opt-effective-D-epsilon}.
Explicitly writing $f_1(x,t,k)$, we have
\begin{align}
		f_1(x,t,k) =& \dfrac{2\im e^{-\wcomplex(k)t}\hat{p}(k)}{\Delta(k)}(\sin(kx)e^{\im k L}\statehat_0(k) + \sin(k(L-x)) \statehat_0(-k)) \label{eq:principle-integral-initial}\\ 
		&-\dfrac{4ke^{-\wcomplex(k)t}\hat{p}(k)}{\Delta(k)}\big[\sin(kx)(\tilde{h}_0(\wcomplex(k),t) - \hat{p}(k)\underline{h}_0(\wcomplex(k),t)) +\sin(k(L-x))(\tilde{g}_0(\wcomplex(k),t) - \hat{p}(k)\underline{g}_0(\wcomplex(k),t)) \big]. \label{eq:principle-integral-boundary}
\end{align} 
Let $f_{11}(x,t,k)$ denote the term in \eqref{eq:principle-integral-initial} and $f_{12}(x,t,k)$ denote the term in \eqref{eq:principle-integral-boundary}, i.e., $f_1(x,t,k) = f_{11}(x,t,k) - f_{12}(x,t,k)$.
By $\Delta(k) = -\Delta(-k)$, we have $f_{12}(x,t,-k) = -f_{12}(x,t,k)$ and find that the following integral vanishes,
 $$\dashint_{-\infty}^{\infty}f_{12}(x,t,k)\dfrac{\diff k}{2\pi}=0.$$
 
The integral of $f_{11}(x,t,k)$ can be evaluated as
\begin{equation*}
	\begin{aligned}
		&\dashint_{-\infty}^{\infty}f_{11}(x,t,k)\dfrac{\diff k}{2\pi} \\
		=\ &\dashint_{-\infty}^{\infty}\dfrac{e^{-\wcomplex(k)t}\hat{p}(k)}{\Delta(k)}(e^{\im k (L + x)} - e^{\im k (L - x)})\statehat_0(k)\dfrac{\diff k}{2\pi} +\dashint_{-\infty}^{\infty}\dfrac{e^{-\wcomplex(k)t}\hat{p}(k)}{\Delta(k)}(e^{\im k(L - x)} - e^{-\im k(L - x)})\statehat_0(-k)\dfrac{\diff k}{2\pi}\\
		=\ &\dashint_{-\infty}^{\infty}\dfrac{e^{-\wcomplex(k)t}\hat{p}(k)}{\Delta(k)}(e^{\im k (x + L)} - e^{\im k (x-L)})\statehat_0(k)\dfrac{\diff k}{2\pi}\nonumber \\
		=\ &\dashint_{-\infty}^{\infty}e^{ikx-\wcomplex(k)t}\hat{p}(k)\statehat_0(k)\dfrac{\diff k}{2\pi}\ (\text{from } \Delta(k) = e^{\im k L}\! -\! e^{- \im k L}) \nonumber \\
		=\ &\int_{-\infty}^{\infty}e^{ikx-\wcomplex(k)t}\hat{p}(k)\statehat_0(k)\dfrac{\diff k}{2\pi}, \label{eq:principle-integral-cancel}
\end{aligned}	
\end{equation*}
where the last equality follows from the fact that the integrand in \eqref{eq:principle-integral-cancel} is bounded and well-defined at the roots $k_m,m\in\ZZ$. 
Therefore, the principal value integral of $f_1$ cancels out the first integral in \eqref{eq:u-opt-effective-D-epsilon}.

Now we evaluate the residue of $f_1(x,t,k)$ at $k=k_m$ for \eqref{eq:principle-value-integral-and-residue}.
The function $f_1(x,t,k)$ is the ratio of two other functions, $f_1(x,t,k) = s_1(x,t,k)/\Delta(k)$, see \eqref{eq:principle-integral-initial}--\eqref{eq:principle-integral-boundary}. 
Following \Cref{re:skip-pole}, the residue can be calculated as 
\begin{equation*}
	\text{Residue}[f_1(x,t,k)]_{k=k_m} = \dfrac{s_1(x,t,k_m)}{\Delta'(k_m)},
\end{equation*}
where $\Delta'(k_m)$ is the first derivative of $\Delta(k)$ at $k_m$.
Recall that $\Delta(k) = 2\im\sin(kL)$.
Expanding the residue with $\Delta'(k_m) = \cos(m\pi)2\im L$, we find
\begin{align}
		&\text{Residue}[f_1(x,t,k)]_{k=k_m} \nonumber\\
		&=\dfrac{e^{-\wcomplex(k_m)t}\hat{p}(k_m)}{L\cos(m\pi)} \big[\sin(k_mx)e^{\im k_m L}\statehat_0(k_m) + \sin(k_m(L-x))\statehat_0(-k_m)\big] \label{eq:residue-initial} \\
		&\quad - \dfrac{2k_me^{-\wcomplex(k_m)t}\hat{p}(k_m)}{\im L\cos(m\pi)} \nonumber \\ & \quad\big[\sin(k_mx)(\tilde{h}_0(\wcomplex(k_m),t)  - \hat{p}(k_m)\underline{h}_0(\wcomplex(k_m),t)) + \sin(k_m(L-x))(\tilde{g}_0(\wcomplex(k_m),t) - \hat{p}(k_m)\underline{g}_0(\wcomplex(k_m),t))\big]. \label{eq:residue-boundary}
\end{align}
Let $R_1$ denote the term in \eqref{eq:residue-initial} and $R_2$ denote the term in \eqref{eq:residue-boundary}, i.e., $\text{Residue}[f_1(x,t,k)]_{k=k_m} = R_1(x,t,k_m) - R_2(x,t,k_m)$.
Replacing $\sin(kx)$ and $\sin(k_m(L-x))$ with equivalent exponential functions in $R_1$, we have
\begin{equation*}
	R_1(x,t,k_m) = \dfrac{1}{\im L} e^{\im k_m x-\wcomplex(k_m)t}\hat{p}(k_m)\big[\statehat_0(k_m) - \statehat_0(-k_m)\big].
\end{equation*}
By the definition of the unified transform \eqref{eq:unified-transform}, we find
\begin{equation*}
	\statehat_0(k_m) - \statehat_0(-k_m) = -2\im\int_{0}^{L}\sin(k_m\xi)\state_0(\xi)\diff \xi.
\end{equation*}
Therefore, $R_1(x,t,k_m)$ can be rewritten as
$$	R_1(x,t,k_m) = -e^{\im k_m x-\wcomplex(k_m)t}\hat{p}(k_m) \state_m^\circ, $$
where
$$
	\state_m^\circ = \frac{2}{L}\int_{0}^{L}\sin(k_m\xi)\state_0(\xi)\diff \xi.
$$
Recall $\check{h}(k_m,t) = \tilde{h}_0(\wcomplex(k_m),t)  - \hat{p}(k_m)\underline{h}_0(\wcomplex(k_m),t)$ and $\check{g}(k_m,t) = \tilde{g}_0(\wcomplex(k_m),t) - \hat{p}(k_m)\underline{g}_0(\wcomplex(k_m),t)$.
Replacing $\sin(k_mx)$, $\sin(k_m(L-x))$, and $\cos(m\pi)$ with equivalent exponential functions in $R_2$, we obtain
	$$R_2(x,t,k_m)  = e^{\im k_m x-\wcomplex(k_m)t}\hat{p}(k_m)b_m(t),$$
	where
	$$
	b_m(t)  = \dfrac{2}{L}k_m(\check{g}(k_m,t)-\cos(m\pi)\check{h}(k_m,t)).
	$$
Repeating the analysis for the integral of $f_2$ in \eqref{eq:u-opt-effective-D-epsilon} and combining the results, we find
\begin{equation}\label{eq:u-opt-series}
	\begin{aligned}
		u^*(x,t) &=\frac{\im}{2} \sum_{m=-\infty}^{\infty}e^{\im k_m x-\wcomplex(k_m)t}\hat{p}(k_m)(-\state_m^\circ - b_m(t))- \frac{\im}{2}\sum_{m=-\infty}^{\infty}e^{\im k_m x}\ \underline{b}_m(t),
	\end{aligned}
\end{equation}
where $\underline{b}_m(t)$ is defined in \eqref{eq:underline-bn}.

Repeating the same steps for \eqref{eq:state-expression}, we obtain
\begin{equation}\label{eq:state-expression-series}
	\begin{aligned}
		\state(x,t) &=\frac{\im}{2} \sum_{m=-\infty}^{\infty}e^{\im k_m x-\wcomplex(k_m)t}(-\state_m^\circ - b_m(t)) = \sum_{m=-\infty}^{\infty}e^{\im k_m x}\psi_m(t),
	\end{aligned}
\end{equation}
where $\psi_m(t) = \frac{\im}{2}e^{-\wcomplex(k_m)t}(-\state_m^\circ - b_m(t))$.
This is a Fourier series expansion of the state with coefficients $\psi_m$.
Therefore, optimal control \eqref{eq:u-opt-series} can be rewritten as 
\begin{equation*}
		u^*(x,t) = \sum_{m=-\infty}^{\infty}e^{\im k_m x}\hat{p}(k_m)\psi_m(t) - \frac{\im}{2}\sum_{m=-\infty}^{\infty}e^{\im k_m x}\ \underline{b}_m(t).
\end{equation*}
By the convolution theorem for the Fourier series, we obtain the following convolution form,
\begin{equation}\label{eq:u-opt-convolution-minus-L-to-L}
	u^*(x,t)= \int_{-L}^{L} \Gamma(x,\xi)\state(\xi,t)\frac{\diff \xi}{2L}- \frac{\im}{2}\sum_{m=-\infty}^{\infty}e^{\im k_m x}\ \underline{b}_m(t),
\end{equation}
where $\Gamma(x,\xi)$ is defined in \eqref{eq:Gamma-convolution}.
The expression \eqref{eq:u-opt-convolution-minus-L-to-L} depends on the state $\state(\xi,t)$ at both $\xi\in(-L,0)$ and $\xi\in(0,L)$.
Lastly, we use the symmetry of the state to rewrite the convolution integral in \eqref{eq:u-opt-convolution-minus-L-to-L} as an integral in $(0,L)$.

Using $\state_m^\circ = -\state_{-m}^\circ, b_m = -b_{-m}$ and $\state_0^\circ = b_0=0$, from \eqref{eq:state-expression-series} we have
\begin{equation*}
	\begin{aligned}
		\state(x,t) = \sum_{m=1}^{\infty}e^{-\wcomplex(k_m)t}\sin(k_mx)(\state_m^\circ + b_m(t)).
	\end{aligned}
\end{equation*}
This is a classical sine series solution to boundary value problems of the reaction-diffusion equation, and thus $\state(-x,t) = -\state(x,t)$.
Using $\underline{b}_m(t) = - \underline{b}_{-m}(t)$ and $\underline{b}_0(t)=0$, we find
\begin{equation*}
	\frac{\im}{2}\sum_{m=-\infty}^{\infty}e^{\im k_m x}\ \underline{b}_m(t) = -\sum_{m=1}^{\infty}\sin(k_mx) \underline{b}_m(t).
\end{equation*}
The final expression \eqref{eq:u-opt-convolution} then follows from separating the convolution integral in \eqref{eq:u-opt-convolution-minus-L-to-L} into $(0,L)$ and $(-L,0)$ and using the fact that $\state(-x,t) = -\state(x,t)$.
\end{proof}

\section{Numerical experiments}
\label{sec:numerical-experiment}

\begin{figure}[htb]
	\centering
\begin{subfigure}[b]{0.49\textwidth}
	\centering
	\includegraphics[width=0.8\linewidth]{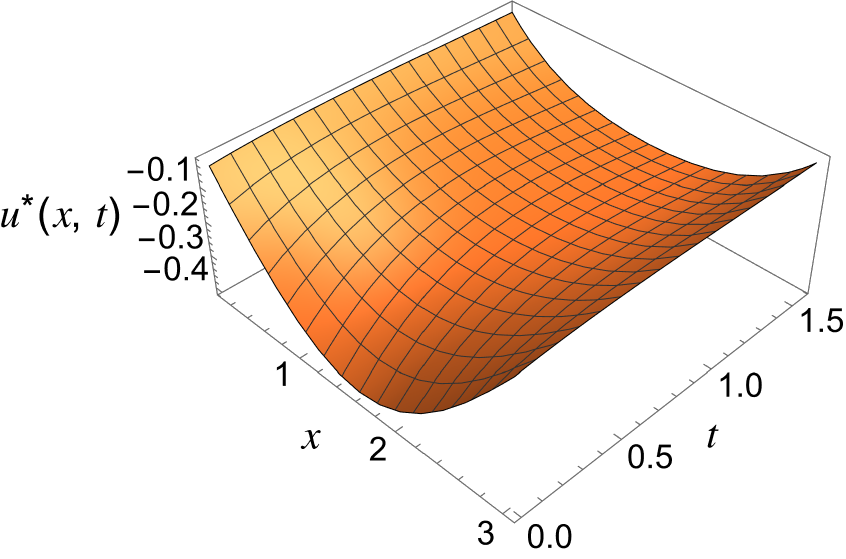}
	\caption{Numerical computation of $\state^*(x,t)$ using \eqref{eq:state-expression}}
	\label{fig:numerical-state}
\end{subfigure}\hfill
\begin{subfigure}[b]{0.49\textwidth}
	\centering
	\includegraphics[width=0.8\linewidth]{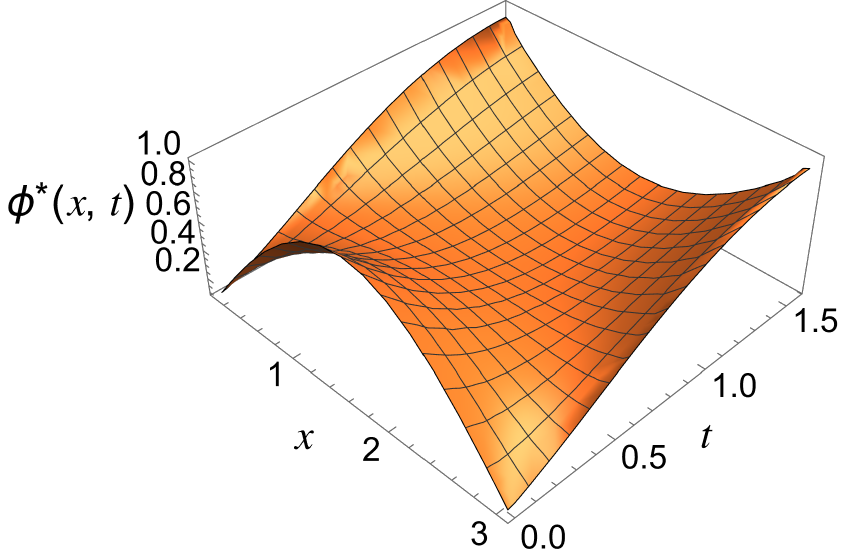}
	\caption{Numerical computation of $u^*(x,t)$ using \eqref{eq:u-opt-effective}}
	\label{fig:numerical-control}
\end{subfigure}
	\caption{Numerical computation of the state and control}
	\label{fig:numerical-experiment}
\end{figure}
\begin{figure}[th]
	\centering
	\begin{subfigure}{0.33\textwidth}
		\centering
		\includegraphics[width=0.8\linewidth]{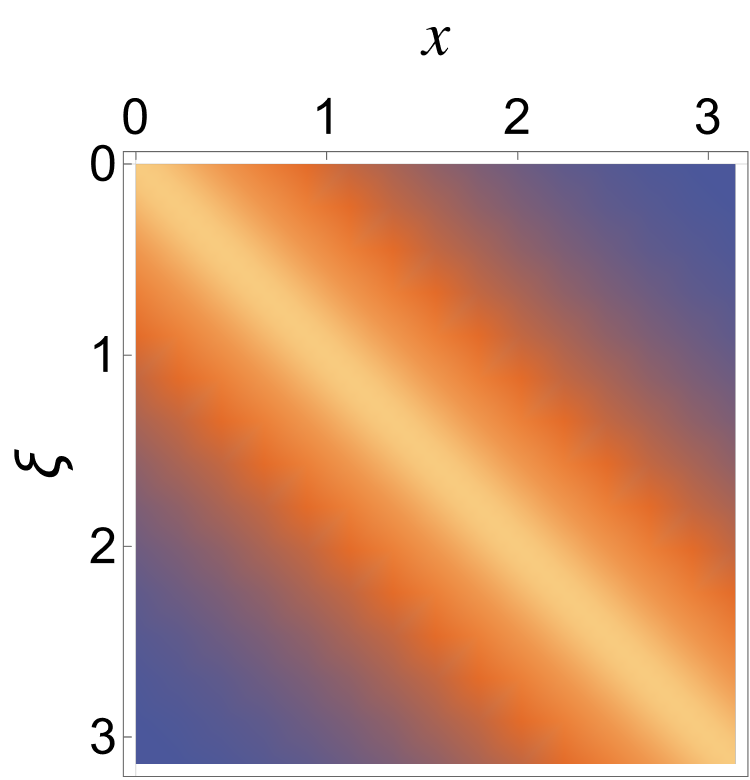}
		\caption{$\Gamma(x,\xi)$ with $c=0$}
		\label{fig:Toeplitz-c-0}
	\end{subfigure}\hfill
	\begin{subfigure}{0.33\textwidth}
		\centering
		\includegraphics[width=0.8\linewidth]{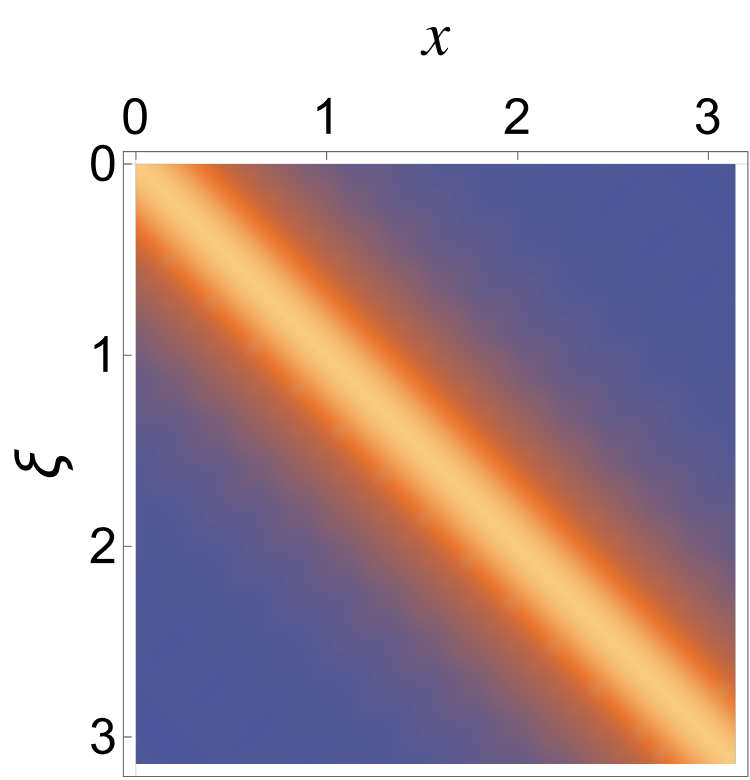}
		\caption{ $\Gamma(x,\xi)$ with $c=5$}
		\label{fig:Toeplitz-c-5}
	\end{subfigure}\hfill
	\begin{subfigure}{0.33\textwidth}
		\centering
		\includegraphics[width=0.8\linewidth]{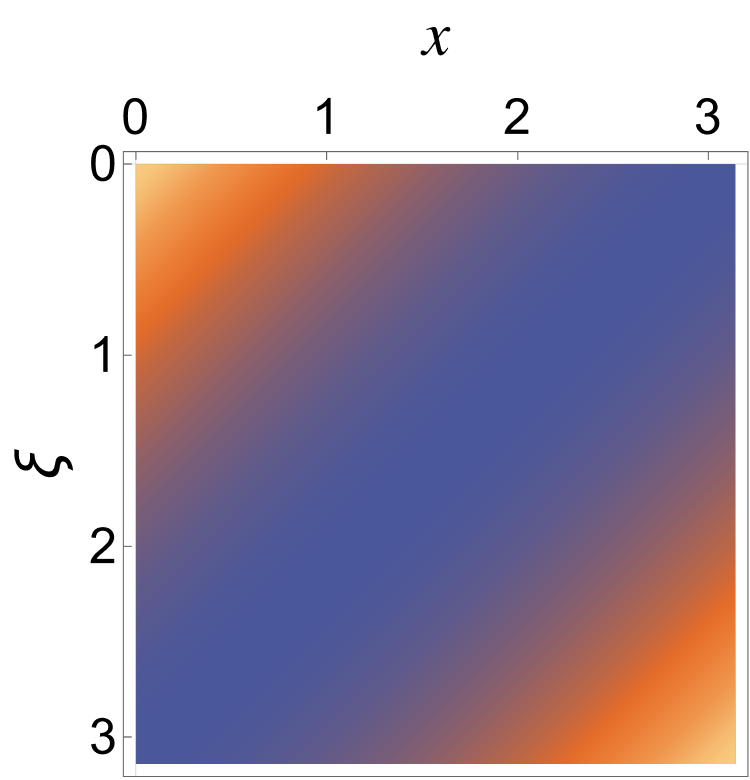}
		\caption{ $\Gamma(x,-\xi)$ with $c=0$}
		\label{fig:Hankel-c-0}
	\end{subfigure}\hfill
	\begin{subfigure}{0.33\textwidth}
		\centering
		\includegraphics[width=0.8\linewidth]{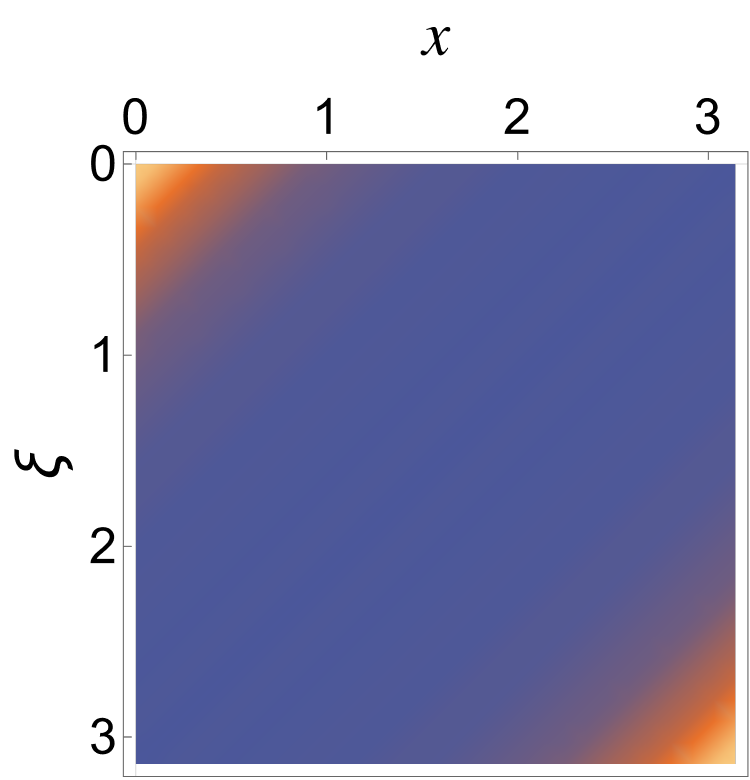}
		\caption{ $\Gamma(x,-\xi)$ with $c=5$}
		\label{fig:Hankel-c-5}
	\end{subfigure}
	\begin{subfigure}{0.33\textwidth}
		\centering
		\includegraphics[width=0.8\linewidth]{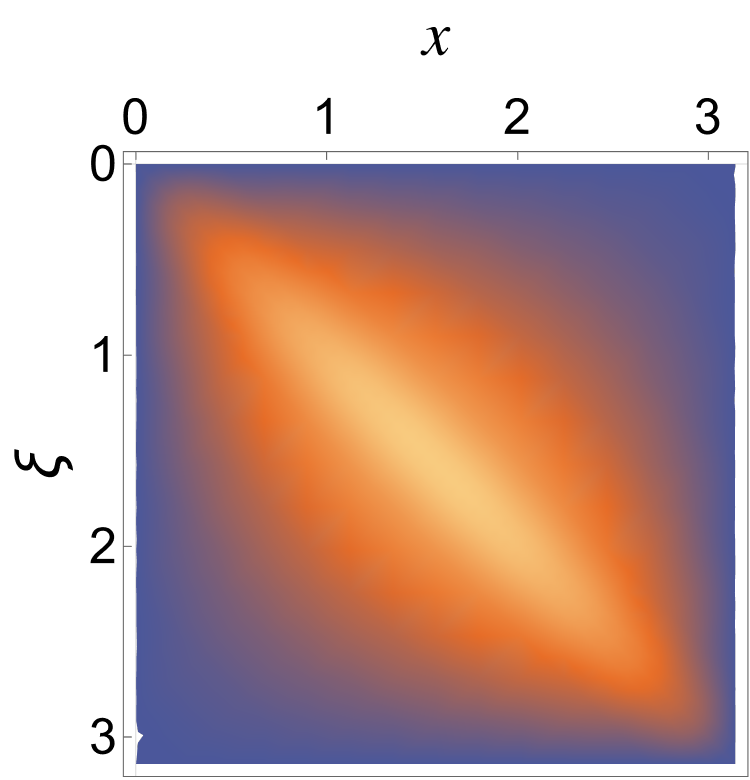}
		\caption{$\Gamma(x,\xi) -\Gamma(x,-\xi)$ with $c=0$}
		\label{fig:Kernel-c-0}
	\end{subfigure}\hfill
	\begin{subfigure}{0.33\textwidth}
		\centering
		\includegraphics[width=0.8\linewidth]{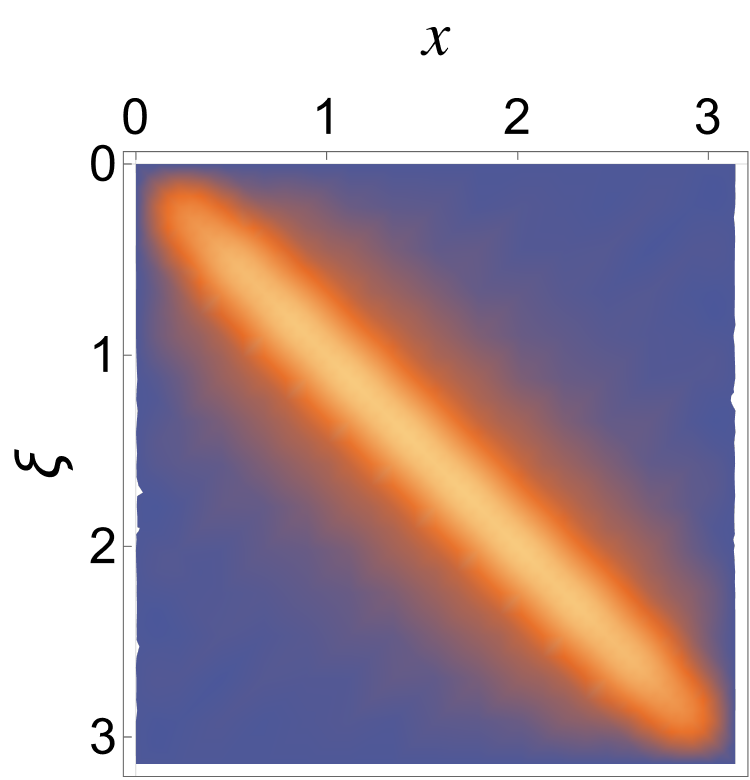}
		\caption{$\Gamma(x,\xi) - \Gamma(x,-\xi)$ with $c=5$}
		\label{fig:Kernel-c-5}
	\end{subfigure}
	\caption{Structures of the feedback kernel illustrated by color. A brighter color represents a higher function value.}
	\label{fig:structure-kernel}
\end{figure}
We have shown that the integral representation \eqref{eq:u-opt-effective} is equivalent to a series representation \eqref{eq:u-opt-series} that can be rewritten in feedback form \eqref{eq:u-opt-convolution}. Although the series representation can reveal the structure of the control, the integral representation is favorable for numerical computations. 
We use \texttt{Mathematica} to directly compute the state and control expressions to compare the numerical properties of the two representations.
% \begin{figure}[t]
% 	\centering
% 	\includegraphics[width=\columnwidth]{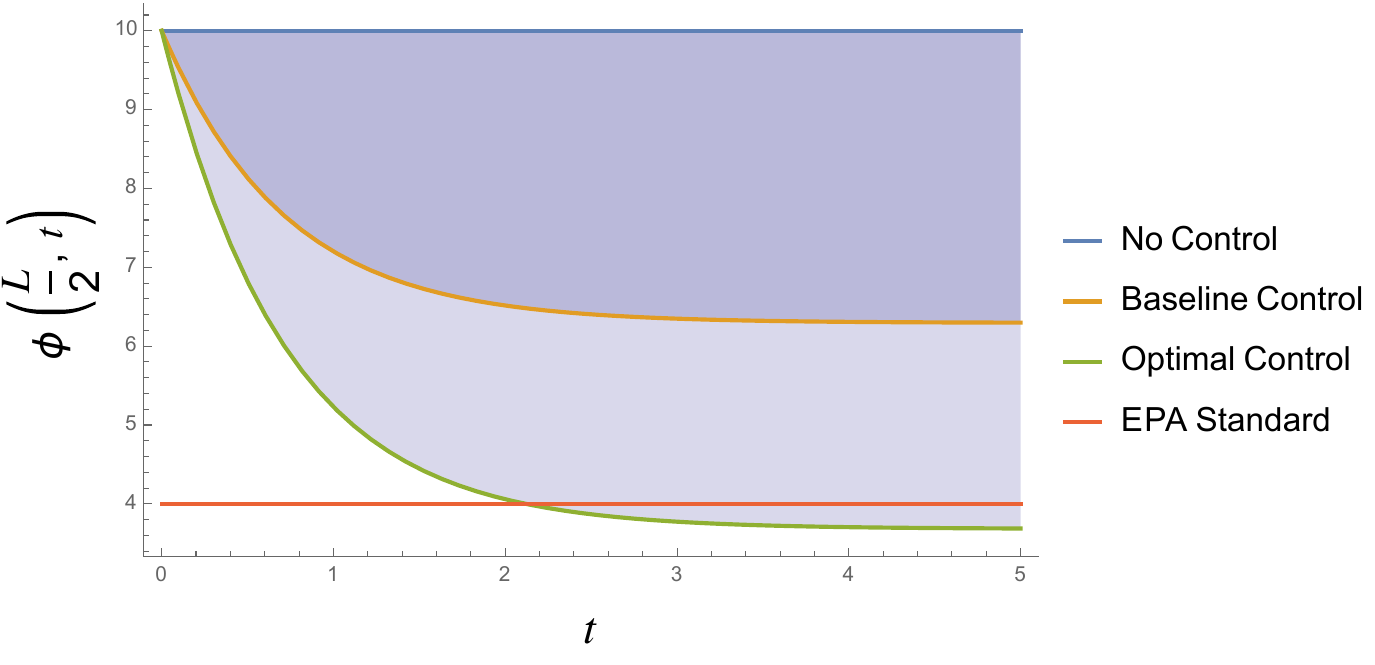}
% 	\caption{Performance of the optimal control}
% 	\label{fig:control-comparison}
% \end{figure}
We consider the heat equation with initial condition $\state_0(x)=\sin(x)$ and boundary condition $\state(0,t) = \state(L,t) = \sin(t)$.
\Cref{fig:numerical-experiment} shows the numerical computation using the integral representations of the state \eqref{eq:state-expression} and the control \eqref{eq:u-opt-effective}.
The infinite series \eqref{eq:state-expression-series} and \eqref{eq:u-opt-series} are not numerically convergent and cannot be computed without approximations.
In contrast, the integral representations  \eqref{eq:u-opt-effective} and \eqref{eq:state-expression} converge in existing numerical integration solvers and can be computed very efficiently. 
Numerical advantages of complex integral representations have already been discussed for general boundary conditions and other linear PDEs, see for example \cite{de2019hybrid}.

% Next, we consider a nonzero boundary conditions $g_0(t)=10$ at $x=0$ and $h_0(t)=10$ at $x=L$ with $L=\pi$. The initial condition is $\state_0(x)=10$.  We compare the optimal control to a baseline control \eqref{eq:u-opt-convolution} with $\underline{b}_m(t)=0$. 
% This baseline control was first derived in \cite{epperlein2016spatially} for the case of homogeneous boundary conditions. 
% \Cref{fig:control-comparison} shows that our optimal control can improve the state performance at the middle point $x=L/2$ by 40\% compared to the baseline control.

We also investigate the structure of the kernel function in \eqref{eq:u-opt-convolution}. We approximate the infinite sum with $m$ from $-10$ to $10$ and compute the functions $\Gamma(x,\xi)$ and $\Gamma(x,-\xi)$ with $c=0$ and $c=5$. Note that the approximation is accurate enough to reveal the structure since $\hat{p}(m)$ is small when $\abs{m}>10$. \Cref{fig:structure-kernel} shows the diagonal and anti-diagonal structures of the kernel functions $\Gamma(x,\xi)$ and $\Gamma(x,-\xi)$, respectively.
The color represents the function value; a brighter color means a higher value.
 Following \Cref{re:structure}, $\Gamma(x,\xi)$ has a Toeplitz structure and $\Gamma(x,-\xi)$ has a Hankel structure. \cite{bamieh2002distributed} showed that only the Toeplitz structure is present in the kernel when the spatial domain is unbounded and the kernel decays exponentially. In the case of bounded spatial domains, \Cref{fig:Kernel-c-0} shows that the kernel for the heat equation with $c=0$ does not decay exponentially due to the presence of the Hankel structure in \Cref{fig:Hankel-c-0}.
 However, the effect of Hankel structure also depends on the value of $c$. As $c$ increases, the Toeplitz function $\Gamma(x,\xi)$ is more concentrated, and thus the decay rate is higher, see \Cref{fig:Toeplitz-c-5}; the Hankel function $\Gamma(x,-\xi)$ is more spread out, and the anti-diagonal values are more concentrated at the boundaries, see \Cref{fig:Hankel-c-5}. 
 By comparing \Cref{fig:Kernel-c-0,fig:Kernel-c-5}, the two changes suggest that optimal control has a more decentralized structure as $c$ increases.

\section{Conclusions}\label{sec:conclusion}
We derived integral representations for LQR control of linear evolution PDEs with general smooth boundary conditions. 
The integral representation on the real line depends on unknown boundary values.
We presented a general procedure for deforming integrals from the real line to well-constructed contours in the complex plane along which the contribution of the unknown boundary values vanishes. 
The procedure for contour deformation can be applied to PDEs with $w_{\text{Im}}(\kcomplex)=0$. 

For illustration, we derived complex integral representations for optimal control of the reaction-diffusion equation with Dirichlet boundary conditions.
The resulting integral representation was then rewritten as a Fourier series, leading to a convolution feedback form of optimal control.
We found that the kernel function in the convolution feedback form preserves the Toeplitz plus Hankel structure discovered in \cite{epperlein2016spatially} in the case of homogeneous boundary conditions.
In numerical experiments, we illustrated that the integral representation is numerically convergent, whereas the series representation cannot be computed without approximation.
We also found that the kernel function has a varying decay rate depending on the reaction coefficients.

Future work includes finding contour deformation for PDEs with $w_{\text{Im}}(\kcomplex)\neq0$. Also, deriving feedback forms and analyzing the corresponding structural properties for general linear evolution PDEs beyond the reaction-diffusion equation will be pursued.
Another extension is to derive optimal control of systems beyond evolution PDEs for which the Fokas method has been successful, such as the wave equation and elliptic equations \cite{fokasbook}.
In addition to optimal control, optimal estimation can also be analyzed using frequency domain methods \cite{arbelaiz2024optimal}, and thus our complex spatial frequency approach can be applied to optimal estimation in a similar way.
For applications, it is of interest to implement our approach to variable speed limit control on freeways \cite{delle2017traffic,block2024lq}, advection-diffusion-reaction systems \cite{glowinski2022bilinear}, and soft robotics \cite{della2023model}.

\section*{Acknowledgments}
Z. Li thanks Prof. M. Jovanovic for helpful discussions about \Cref{sec:state-transformation,sec:separation-variables}.

\bibliographystyle{IEEEtran}
\bibliography{main-finite-interval-journal-one-column.bib}

\section*{Appendix}
Here, we collect some results from complex analysis used in the proofs.

\begin{lemma}[{Cauchy's Theorem \cite[Theorem 2.5.2]{ablowitz2003complex}}]\label{lemm:cauchy}
	
	If a function $f$ is analytic in a simply connected domain $\D$, then along a simple closed contour $\C$ in $\D$
	\begin{equation*}
		\oint_{\C}f(z)\diff z=0.
	\end{equation*}
\end{lemma}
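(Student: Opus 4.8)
The plan is to reduce the complex contour integral to a pair of real line integrals and apply Green's theorem, in the same spirit as the application to the forcing term in \eqref{eq:green-theorem-applied}. Write $f = u + \im v$ with $u,v$ real-valued, and $z = x + \im y$, so that $\diff z = \diff x + \im\,\diff y$. The integral then splits into its real and imaginary parts as
\[
\oint_{\C} f\,\diff z = \oint_{\C}\left(u\,\diff x - v\,\diff y\right) + \im\oint_{\C}\left(v\,\diff x + u\,\diff y\right).
\]
Because $\D$ is simply connected and $\C$ is a simple closed contour in $\D$, the region $D$ enclosed by $\C$ lies entirely in $\D$, so that $f$ is analytic on all of $\overline{D}$. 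First I would apply Green's theorem to each real line integral, converting every $\oint_{\C}(P\,\diff x + Q\,\diff y)$ into the area integral $\iint_{D}(\partial_x Q - \partial_y P)\,\diff x\,\diff y$.

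The key step is then to invoke analyticity through the Cauchy-Riemann equations $u_x = v_y$ and $u_y = -v_x$, which hold throughout $D$. The real part yields the area integrand $-v_x - u_y$, which vanishes identically by $u_y = -v_x$; the imaginary part yields $u_x - v_y$, which vanishes by $u_x = v_y$. Hence both area integrals are zero and $\oint_{\C} f\,\diff z = 0$, as claimed.

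The main obstacle is regularity: Green's theorem as usually stated requires the first partials of $u$ and $v$ to be continuous on $\overline{D}$, whereas the hypothesis supplies only complex differentiability of $f$ in $\D$. For every analytic function encountered in this paper — polynomials, exponentials, and $\sqrt{w_{\text{Re}}^2(\kcomplex)+1}$ restricted away from its branch cuts — the derivative $f'$ is continuous, so this route applies directly. To establish the theorem under bare analyticity one would instead follow the Cauchy-Goursat argument: show $\oint_{\partial T} f = 0$ for an arbitrary triangle $T$ by bisecting it into four congruent subtriangles and retaining at each stage the one on which $\left|\oint f\right|$ is largest; the nested triangles shrink to a point $z_0$, and the expansion $f(z) = f(z_0) + f'(z_0)(z - z_0) + o(|z - z_0|)$, together with the vanishing of the contour integrals of constants and linear terms around a closed loop, forces $\oint_{\partial T} f = 0$. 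Passing from triangles to a general simply connected $\D$ — by building a primitive $F$ with $F' = f$ on convex pieces and patching, or by a homotopy argument — is the genuinely delicate part, and it is precisely here that simple connectivity is indispensable.
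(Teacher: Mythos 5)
The paper does not prove this lemma at all---it is quoted from \cite[Theorem 2.5.2]{ablowitz2003complex} in the Appendix as a standard tool---and your argument reproduces the classical proof found in that very reference: split $\oint_{\C} f\,\diff z$ into its real and imaginary line integrals, apply Green's theorem over the enclosed region, and annihilate both area integrands with the Cauchy--Riemann equations (your sign bookkeeping, $-v_x-u_y$ and $u_x-v_y$, is correct). Your proof is sound, and you rightly flag the only genuine subtleties: Green's theorem requires continuous first partials, which bare analyticity furnishes only via the Goursat triangle-bisection argument you sketch (or a posteriori, since every analytic function used in this paper has continuous derivative), and simple connectivity of $\D$ is exactly what guarantees the interior of the simple closed contour $\C$ lies in $\D$, so the hypotheses hold on all of the enclosed region.
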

\begin{lemma}[{Jordan's Lemma \cite[Lemma 4.2.2]{ablowitz2003complex}}]\label{lemm:jordan}
	
	Let $\C_R$ be a circular arc of radius $R$ centered at the origin and lying on the upper-half complex plane defined by $z=Re^{\im\theta}$ with $0\leq\theta_1\leq\theta\leq\theta_2\leq\pi$.	
	Suppose that on the circular arc $\C_R$, we have $f(z)\to0$ uniformly as $R\to\infty$. Then
	\begin{equation*}
		\lim_{R\to\infty}\int_{\C_R} e^{\im\lambda z}f(z)\diff z=0,\quad \lambda>0.
	\end{equation*}
\end{lemma}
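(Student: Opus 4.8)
The plan is to prove Jordan's Lemma by a direct modulus estimate of the arc integral, the essential ingredient being the exponential decay that the factor $e^{\im\lambda z}$ contributes throughout the (open) upper half-plane. First I would parametrize the arc as $z=Re^{\im\theta}$ with $\theta\in[\theta_1,\theta_2]$, so that $\diff z = \im Re^{\im\theta}\diff\theta$ and the integral becomes $\int_{\theta_1}^{\theta_2} e^{\im\lambda Re^{\im\theta}} f(Re^{\im\theta})\,\im Re^{\im\theta}\diff\theta$. The key observation is that $\abs{e^{\im\lambda Re^{\im\theta}}} = \abs{e^{\im\lambda R\cos\theta}}\,\abs{e^{-\lambda R\sin\theta}} = e^{-\lambda R\sin\theta}$, so the modulus of the integrand carries a factor decaying exponentially wherever $\sin\theta>0$; this is exactly where the hypothesis $\lambda>0$ and the restriction of the arc to $[0,\pi]$ enter.

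Next I would introduce $M_R := \sup_{\theta_1\leq\theta\leq\theta_2}\abs{f(Re^{\im\theta})}$, which tends to $0$ as $R\to\infty$ by the assumed uniform convergence of $f$ to $0$ on the arc. Taking absolute values inside the integral and factoring out $M_R$ yields $\abs{\int_{\C_R} e^{\im\lambda z}f(z)\diff z} \leq M_R\, R\int_{\theta_1}^{\theta_2} e^{-\lambda R\sin\theta}\diff\theta$. Because $[\theta_1,\theta_2]\subseteq[0,\pi]$ and the integrand is nonnegative, I would enlarge the domain to $[0,\pi]$ and exploit the symmetry of $\sin$ about $\theta=\pi/2$ to write $\int_0^\pi e^{-\lambda R\sin\theta}\diff\theta = 2\int_0^{\pi/2} e^{-\lambda R\sin\theta}\diff\theta$.

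The main obstacle, and the step that renders the dangerous prefactor $R$ harmless, is bounding $R\int_0^{\pi/2} e^{-\lambda R\sin\theta}\diff\theta$: a naive estimate that ignores the behavior of $\sin\theta$ near $\theta=0$ leaves an $O(R)$ contribution that does not vanish. Here I would invoke Jordan's inequality $\sin\theta\geq 2\theta/\pi$ for $\theta\in[0,\pi/2]$, which replaces the curved exponent by a linear one and gives $\int_0^{\pi/2} e^{-\lambda R\sin\theta}\diff\theta \leq \int_0^{\pi/2} e^{-2\lambda R\theta/\pi}\diff\theta = \frac{\pi}{2\lambda R}\bigl(1-e^{-\lambda R}\bigr) \leq \frac{\pi}{2\lambda R}$. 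The factor $1/R$ produced here cancels the explicit $R$, yielding $\abs{\int_{\C_R} e^{\im\lambda z}f(z)\diff z} \leq M_R\, R\cdot 2\cdot\frac{\pi}{2\lambda R} = \frac{\pi}{\lambda}\,M_R$. Since $\lambda>0$ is fixed and $M_R\to0$ as $R\to\infty$, the right-hand side tends to $0$, which proves the claim.
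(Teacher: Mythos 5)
Your proof is correct and is precisely the classical argument for Jordan's Lemma: the paper itself offers no proof, deferring to \cite[Lemma 4.2.2]{ablowitz2003complex}, and your chain of estimates --- the modulus identity $\abs{e^{\im\lambda z}}=e^{-\lambda R\sin\theta}$, the uniform bound $M_R$, enlargement of the arc to $[0,\pi]$ with the symmetry of $\sin$, and Jordan's inequality $\sin\theta\geq 2\theta/\pi$ to cancel the prefactor $R$ and obtain the bound $\pi M_R/\lambda$ --- is exactly the proof given in that reference. Nothing is missing; in particular you correctly identified that the naive $O(R)$ estimate near $\theta=0$ is the only delicate point and handled it with the right inequality.
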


% \begin{comment}
\begin{remark}
	If $\lambda$ is negative, then Jordan's lemma is still valid, provided that $\C_R$ is defined on the lower-half complex plane.
\end{remark}

%An alternative of Jordan's lemma is the following, which is useful for circular arcs that do not intersect the real line.
%\begin{lemma}[\!{\cite[Theorem 4.3.2]{ablowitz2003complex}}]
%	\label{lemm:jordan-complex}
%	If on a circular arc $\C_R$ of radius $R$ centered at the origin, $zf(z)\to0$ uniformly as $R\to\infty$, then
%	\begin{equation*}
%		\lim_{R\to\infty}\int_{\C_R} f(z)\diff z=0.
%	\end{equation*}
%\end{lemma}
% \end{comment}
\begin{lemma}[\!{\cite[Theorem 4.3.1]{ablowitz2003complex}}]
	\label{lemm:skip-pole}
		Consider a function $f(z)$ analytic in an annulus centered at $z_0$. Let $\C_{\epsilon,\beta[z_0]}$ be a circular arc of radius $\epsilon$ and angular width $\beta$, which lies entirely within the annulus centered at $z_0: z=z_0+\epsilon e^{\im \alpha}$, with $\alpha_0\leq\alpha\leq\alpha_0+\beta$. 
%			\item[a.] 
%			Suppose that on the contour  $\C_{\epsilon,\beta[z_0]}$, we have $(z-z_0)f(z)\to0$ uniformly as $\epsilon\to0$. Then
%			\begin{equation*}
%				\lim_{\epsilon\to0}\int_{\C_{\epsilon,\beta[z_0]}} f(z)\diff z = 0.
%			\end{equation*}
%			\item[b.] 
		Let $f(z)=\sum_{n=-\infty}^{\infty}C_n(z-z_0)^n$ denote the Laurent series expansion around the point $z_0$.
		If $f(z)$ has a simple pole at $z_0$, i.e., $C_{-1}\neq0$ and $C_{n}=0$ for all $n\leq-2$, then
			\begin{equation*}
				\lim_{\epsilon\to0}\int_{\C_{\epsilon,\beta[z_0]}} f(z)\diff z=\im\beta\ \text{\normalfont Residue}\left[f(z)\right]_{z=z_0},
			\end{equation*}
			where $\text{\normalfont Residue}\left[f(z)\right]_{z=z_0} =C_{-1}$.
\end{lemma}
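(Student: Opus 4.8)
The plan is to exploit the defining feature of a simple pole: the principal part of the Laurent series consists of the single term $C_{-1}/(z-z_0)$, so I would split $f$ into this explicit singular term plus a remainder that is genuinely analytic at $z_0$. Concretely, writing
\[
	f(z) = \frac{C_{-1}}{z-z_0} + g(z), \qquad g(z) := \sum_{n=0}^{\infty} C_n(z-z_0)^n,
\]
the hypothesis $C_n=0$ for all $n\leq-2$ forces $g$ to be a power series in nonnegative powers only. Hence $g$ is a convergent Taylor series in a neighborhood of $z_0$; in particular it extends analytically across $z_0$ and is continuous, so it is bounded by some $M$ on a closed disk $\abs{z-z_0}\leq r_0$ for suitable $r_0>0$. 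Establishing this boundedness is the only step that requires genuine care, and it is exactly where the simple-pole hypothesis is used: with even one more negative Laurent coefficient the remainder would itself be singular and the arc integral would not converge to $\im\beta C_{-1}$.

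Next I would parametrize the arc by $z=z_0+\epsilon e^{\im\alpha}$ for $\alpha\in[\alpha_0,\alpha_0+\beta]$, so that $\diff z=\im\epsilon e^{\im\alpha}\diff\alpha$, and integrate the two pieces separately. For the singular term the $\epsilon$-dependence cancels exactly, giving a value independent of $\epsilon$:
\[
	C_{-1}\int_{\C_{\epsilon,\beta[z_0]}}\frac{\diff z}{z-z_0}
	= C_{-1}\int_{\alpha_0}^{\alpha_0+\beta}\frac{\im\epsilon e^{\im\alpha}}{\epsilon e^{\im\alpha}}\diff\alpha
	= \im\beta\, C_{-1},
\]
which already matches the claimed limit. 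For the analytic remainder I would apply a standard ML-type estimate: the arc has length $\epsilon\abs{\beta}$, so for every $\epsilon<r_0$,
\[
	\left|\int_{\C_{\epsilon,\beta[z_0]}} g(z)\diff z\right|\leq M\,\epsilon\abs{\beta}\xrightarrow{\ \epsilon\to0\ }0.
\]

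Combining the two pieces, the limit of the full integral is $\im\beta\,C_{-1}=\im\beta\,\text{Residue}[f]_{z=z_0}$, which is the assertion. The argument presents no real obstacle beyond the boundedness observation noted above; I would simply emphasize that truncation of the principal part to a single term is precisely what makes the remainder analytic and the remainder integral $O(\epsilon)$, so that only the explicitly computable singular term survives in the limit.
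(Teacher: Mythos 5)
Your proof is correct and is essentially the canonical argument: the paper states this lemma without proof, deferring to \cite[Theorem 4.3.1]{ablowitz2003complex}, and the proof there is precisely your decomposition of $f$ into the principal part $C_{-1}/(z-z_0)$, whose arc integral equals $\im\beta\, C_{-1}$ exactly by the parametrization $z=z_0+\epsilon e^{\im\alpha}$, plus an analytic remainder eliminated by an ML estimate over an arc of length $\epsilon\abs{\beta}$. Your key observation is also sound: since $C_n=0$ for $n\leq-2$, the remainder is a power series, hence convergent on the full (unpunctured) disk and bounded near $z_0$, which is exactly what makes the remainder integral $O(\epsilon)$ and would fail for a higher-order pole.
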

\begin{remark}\label{re:skip-pole}
	This remark follows from \cite[eq. 4.1.10]{ablowitz2003complex}.
	Suppose that the function $f(z)$ is given by a ratio of two functions $N(z)$ and $D(z)$,  i.e., $f(z) = N(z)/D(z)$, where both functions are analytic in the neighborhood of $z_0$.
	If $N(z_0)\neq0, D(z_0)=0,$ and $D'(z_0)\neq0$, then  $f(z)$ has a simple pole at $z_0$ and
	$$ \text{\normalfont Residue}\left[f(z)\right]_{z=z_0} =\frac{N(z_0)}{D'(z_0)} .$$

\end{remark}
 
\begin{IEEEbiography}[{\includegraphics[width=1.1in,height=1.3in,clip,keepaspectratio]{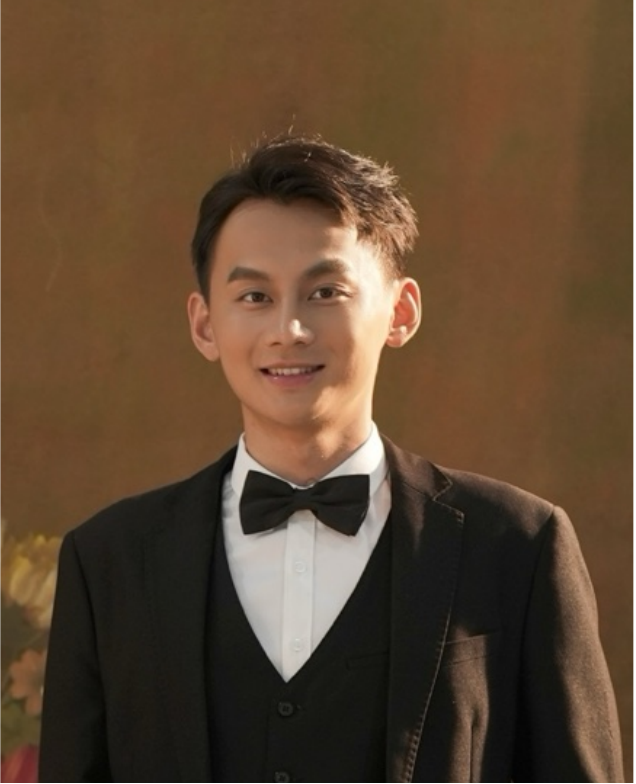}}]{Zhexian Li} (Student Member, IEEE) received a B.Eng. degree in Transportation Engineering from Southeast University, Nanjing, China, in 2020.
He is currently a Ph.D. candidate with the Department of Civil and Environmental Engineering at the University of Southern California. His research interests include optimal control of PDEs and model predictive control, with applications to civil infrastructure systems, such as roadway transportation systems and solute transport in fluids.
	\end{IEEEbiography}
	\begin{IEEEbiography}[{\includegraphics[width=1.1in,height=1.3in,clip,keepaspectratio]{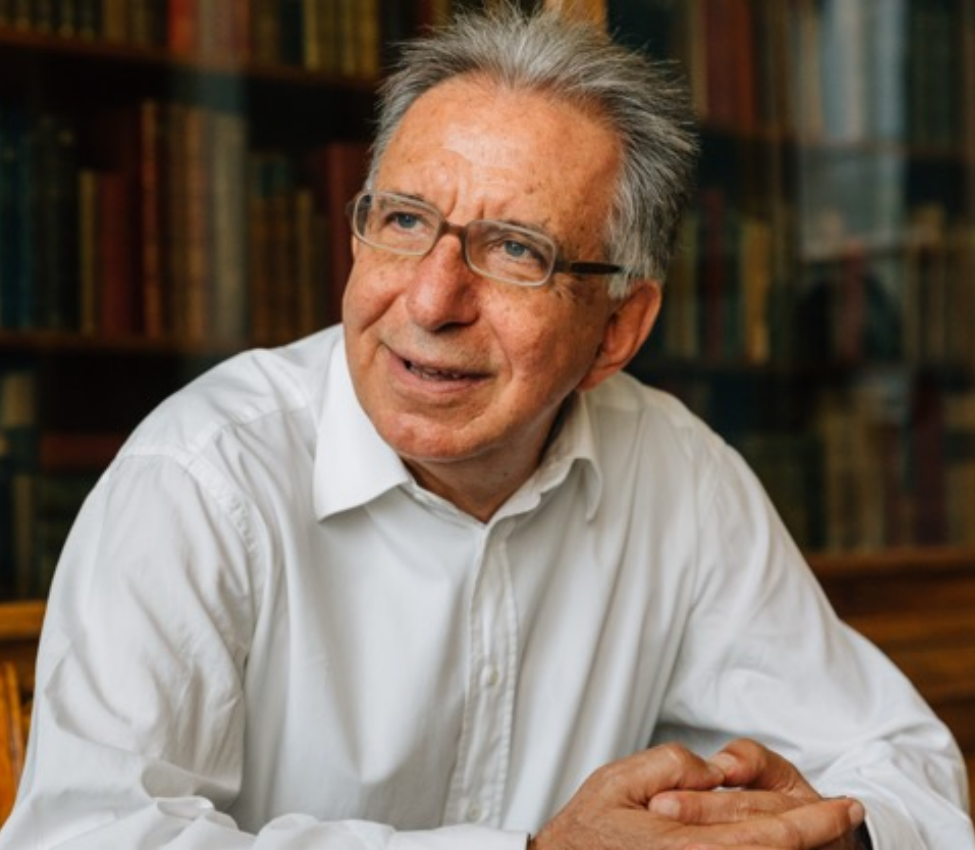}}]{Athanassios S. Fokas} received a B.S. in Aeronautics from Imperial College, a Ph.D. in Applied Mathematics from California Institute of Technology, and an M.D. from the University of Miami. He also has eight honorary degrees. 
	
	In 1995 he was appointed to a Chair in Applied Mathematics at Imperial College and in 2002 he became the first holder of the inaugural Chair of Nonlinear Mathematical Science at the Department of Applied Mathematics and Theoretical Physics of the University of Cambridge.
	
	He is a member of the Academy of Athens and of all three major European Academies, including Academia Europaea. He is a Fellow of Clare Hall College, Cambridge, of the Guggenheim Foundation, of the American Institute for Medical and Biological Engineering, and of the American Mathematical Society.
	Among his many awards are  the Naylor Prize of the London Mathematical Society, and the Blaise Pascal Medal of the European Academy of Sciences, ``\textit{for the Fokas method, which is considered the most important development in the solution of partial differential equations since the works of Fourier, Laplace and Cauchy}''. 
	\end{IEEEbiography}
	\begin{IEEEbiography}[{\includegraphics[width=1.1in,height=1.3in,clip,keepaspectratio]{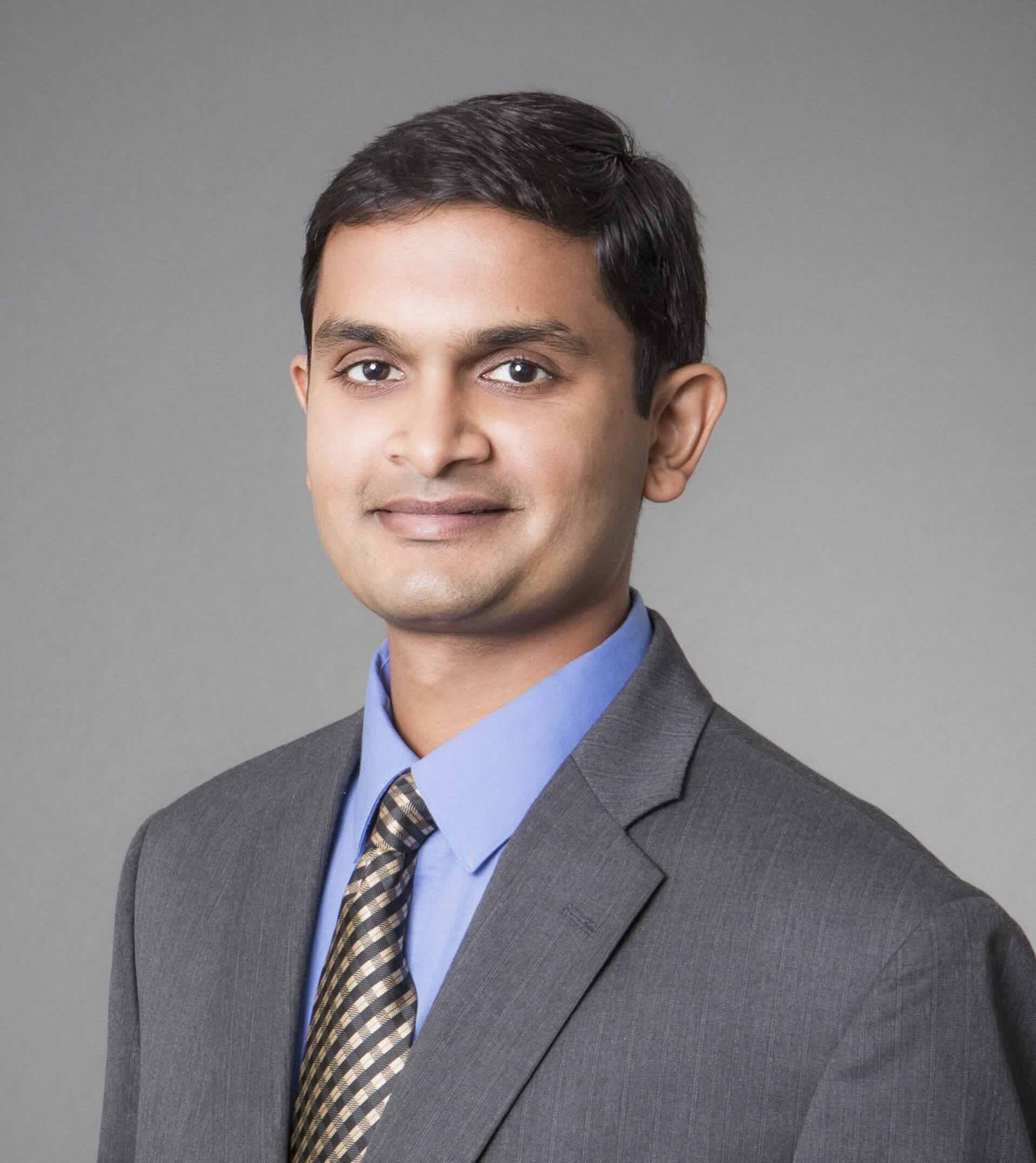}}]{Ketan Savla} is an associate professor and the John and Dorothy Shea Early Career Chair in Civil Engineering at the University of Southern California. His current research interest is in distributed optimal and robust control, dynamical networks, state-dependent queuing systems, and mechanism design, with applications in civil infrastructure and autonomous systems. His recognitions include NSF CAREER, IEEE CSS George S. Axelby Outstanding Paper Award, AACC Donald P. Eckman Award, and the IEEE ITSS Outstanding Application Award. He serve(d) as an associate editor of the IEEE Transactions on Control of Network Systems, IEEE Control Systems Letters (L-CSS), and IEEE Transactions on Intelligent Transportation Systems. He currently serves as a senior editor of the L-CSS. He is also a co-founder and the chief science officer of Xtelligent, Inc.  
	\end{IEEEbiography}
\end{document}